%
\documentclass[runningheads, ]{llncs}
\usepackage[T1]{fontenc}
%
\usepackage{graphicx}
\usepackage{hyperref}
\usepackage{amsfonts, amssymb, amsmath}
\usepackage[ruled,vlined,linesnumbered]{algorithm2e}
\usepackage{subfigure}
%
%
\begin{document}
\title{Computational Power of Mobile Robots: Discrete Version}
%
%
\author{Avisek Sharma\orcidID{0000-0001-8940-392X} \and
Pritam Goswami\orcidID{0000-0002-0546-3894} \and
Buddhadeb Sau\orcidID{0000-0001-7008-6135}}

\authorrunning{A Sharma, P Goswami, and B Sau}
\institute{Department of Mathematics, Jadavpur University, India\\ \email{aviseks.math.rs@jadavpuruniversity.in, pgoswami.academic@gmail.com, buddhadeb.sau@jadavpuruniversity.in}}

%
\maketitle              
\begin{abstract}
In distributed computing by mobile robots, robots are deployed over a region, continuous or discrete, operating through a sequence of \textit{look-compute-move} cycles. An extensive study has been carried out to understand the computational powers of different robot models. The models vary on the ability to 1)~remember constant size information and 2)~communicate constant size message. Depending on the abilities the different models are 1)~$\mathcal{OBLOT}$ (robots are oblivious and silent), 2)~$\mathcal{FSTA}$ (robots have finite states but silent), 3)~$\mathcal{FCOM}$ (robots are oblivious but can communicate constant size information) and, 4)~$\mathcal{LUMI}$ (robots have finite states and can communicate constant size information). Another factor that affects computational ability is the scheduler that decides the activation time of the robots. The main three schedulers are \textit{fully-synchronous}, \textit{semi-synchronous} and \textit{asynchronous}. Combining the models ($M$) with schedulers ($K$), we have twelve combinations $M^K$. 

In the euclidean domain, the comparisons between these twelve variants have been done in different works for transparent robots, opaque robots, and robots with limited visibility. There is a vacant space for similar works when robots are operating on discrete regions like networks. It demands separate research attention because there have been a series of works where robots operate on different networks, and there is a fundamental difference when robots are operating on a continuous domain versus a discrete domain in terms of robots' movement. This work contributes to filling the space by giving a full comparison table for all models with two synchronous schedulers: fully-synchronous and semi-synchronous.

\keywords{Distributed computing \and Oblivious robots \and Finite-state \and Finite-communication \and Robots with lights \and Model comparison \and Discrete domain}
\end{abstract}

\section{Introduction}

In distributed computing swarm robotics is a well-studied area. The last two decades have been dedicated to studying different problems like \textit{gathering}, \textit{pattern formation}, \textit{exploration}, and \textit{dispersion} etc. The majority of problems have been dealt with in both, first in continuous domains, then in discrete domains. Continuous domains are Euclidean planes, continuous cycles, etc. and discrete domains are different types of finite and infinite graphs. In these problems \textit{robots} are point computing units deployed over some domain, continuous or discrete. The robots are generally \textit{autonomous} (the robots do not have central control), \textit{identical} (indistinguishable from physical appearance), \textit{anonymous} (the robots do not have any unique identifier) and \textit{homogeneous} (all robots have same capabilities and runs the same algorithm). The robots operate in \textit{look-compute-move} (LCM) cycles. On activation, a robots enter in look phase where it finds out the positions of the other robots in its vicinity. Then it enters into compute phase where it runs the inbuilt algorithm to find out where to move or stay put. Next, in move phase the robot moves or stays put according to the decisions in compute phase. In continuous domains, the robots have the freedom to move at any distance with any precision. On the other hand, in discrete domain or in networks the robots are only allowed to move to an adjacent node of its current position. A robot cannot stop in between an edge connecting two nodes. Problem have been considered in discrete domains because sometimes in the practical field the infinite precision in movement by robots is not possible. Also, sometimes the domains are engraved with a predefined networks, like star graphs, line graphs, rectangular and triangular grids, etc. 

The robot models vary on robot capabilities. The two fundamental capabilities are memory and communication. There are main four models: 1)~$\mathcal{OBLOT}$, 2)~$\mathcal{FSTA}$, 3)~$\mathcal{FCOM}$ and 4)~$\mathcal{LUMI}$. The $\mathcal{OBLOT}$ model (formally introduced in \cite{SuzukiY96}) is the weakest one. In this model, the robots are oblivious, i.e., they do not have persistent memory to remember past actions or past configurations. Also, in this model robots are silent, i.e., they do not have any communication ability. In the $\mathcal{LUMI}$ model the robots are equipped with persistent lights that can take finitely many colors. This model is formally introduced in \cite{FPSY16}. At the end of the compute phase, a robot changes the color of the lights as determined by the algorithm. Each robot can see its own color in a look phase. Thus, it serves as a finite memory. Next, other robots can see its lights. Thus, it serves as a communication architecture through which it can communicate a finite bit of information to other robots. The $\mathcal{LUMI}$ model is the strongest in terms of memory and communication. The other two intermediate models are introduced in \cite{FSVY160}. We have the first intermediate model $\mathcal{FSTA}$. In the $\mathcal{FSTA}$ model lights of a robot are internal, i.e., it is only visible to itself. Thus, the robots only have finite persistent memory, but the robots are deprived of communication ability. The next intermediate model is $\mathcal{FCOM}$. In this model, the lights of a robot are external, i.e., it cannot see color of its light. Thus, it only gives the ability to communicate a finite number of bits but has no persistent memory. 


Next, another key aspect for solving a problem is the considered scheduler. The scheduler is considered an entity that decides when to activate a robot. There are two main types of scheduler. In \textit{synchronous} scheduler, the time is divided into equal rounds. In each round, a set of robots are activated and they simultaneously perform their LCM cycle. In the \textit{fully-synchronous} scheduler (\textsc{Fsync}), in each round, all the robots are activated by the scheduler. On the other hand, in \textit{semi-synchronous} scheduler (\textsc{Ssync}), introduced in \cite{SuzukiY96}, a subset of all robots are activated in a particular round. A \textit{fair} scheduler activates each robot infinitely often. Here we assume that the scheduler is fair. In an \textit{asynchronous} (\textsc{Async})scheduler, introduced in \cite{FPSY99}, the robots operate independently of each other and they have no synchrony at all. Also, the length of an LCM cycle of a robot can be unbounded and be different in different activations. A \textit{variant} $M^K$ is a model $M$ together with a scheduler $K$. For convenience we shall denote $M^{\textsc{Fsync}}$, $M^{\textsc{Ssync}}$, and $M^{\textsc{Async}}$ respectively by $M^F$, $M^S$, and $M^A$.

Let $\mathcal{X}$ and $\mathcal{Y}$ be two variants. Then $\mathcal{X}>\mathcal{Y}$ denotes that the  $\mathcal{X}$ is strictly powerful variants than $\mathcal{Y}$, $\mathcal{X}\equiv\mathcal{Y}$ denotes that the  $\mathcal{X}$ is computationally equivalent to $\mathcal{Y}$, and $\mathcal{X}\perp\mathcal{Y}$ denotes that $\mathcal{X}$ and $\mathcal{Y}$ are computationally equivalent. The works in \cite{BFKPSW21,BFKPSW22,FPSY16,FSS023,FSW19,KKNPS21} refine the computational landscape when transparent robots are operating on the euclidean plane. Then in \cite{FMMP24}, the authors refine the computational landscape when the opaque (non-transparent) robots are operating on the euclidean plane. In the above works, the robots have full visibility. In a similar work in \cite{DGSGS24}, the authors considered limited visibility and in this work also, the robots operate on euclidean plane. Similar works are required when the robots are operating on different types of networks. A vast research work has been dedicated to investigating problems like gathering, pattern formation, etc. on different networks. For example, in \cite{PP24} the rendezvous problem (gathering of two robots) has been dealt with in an arbitrary network, in \cite{BKLT24} the gathering problem has been considered in discrete ring or cycle graph, in \cite{SGGS24} the arbitrary pattern formation problem has been considered in an infinite rectangular grid. From the above-mentioned model, comparison works for continuous versions, we observe that almost every problem considered for comparison cannot be converted to get a similar result for discrete versions. So a fresh set of problems and a completely different research focus are required for the discrete version. 

In the opinion of the authors, comparison in the discrete domain is slightly more difficult than the comparison in the continuous domain. The reason is as follows. In the continuous domain, the robots can be directed to move at any distance and in any direction. But through finite memory or finite communication, it is not possible to remember or communicate a real number. This limitation has been exploited in the previous works. But in a discrete domain, moving distance is bounded by one hop, and possible directions is also finitely many, as a robot can move to any one of the finitely many adjacent nodes. So, the same method of exploitation does not work in discrete domains. So some completely different problems in nature need to be considered.

To the best of our knowledge, \cite{DSFN18} is the only work where authors refine the landscape when the robots are operating on graphs. In \cite{DSFN18}, authors did not consider the models $\mathcal{FSTA}$ and $\mathcal{FCOM}$. In this work, we give the full computational landscape for all variants $M^K$, where $M\in\{\mathcal{OBLOT},\ \mathcal{FSTA},\ \mathcal{FCOM},\ \mathcal{LUMI}\}$ and $K\in\{\textsc{Fsync},\ \textsc{Ssync}\}$. And, also refines the computational landscape for the \textsc{Async} scheduler. In the Table~\ref{fig:compTable}, the full comparison is given. In our work the robots are \textit{disoriented}, i.e., it does not have an agreed notion of clockwise sense or any direction or handedness. The robots have full unobstructed visibility. The proofs of the Lemmas are omitted from the main paper and given in the Appendix Section~\ref{sec:appndx}.

 \begin{figure}[ht!]
    \centering
    \includegraphics[width=.99\linewidth]{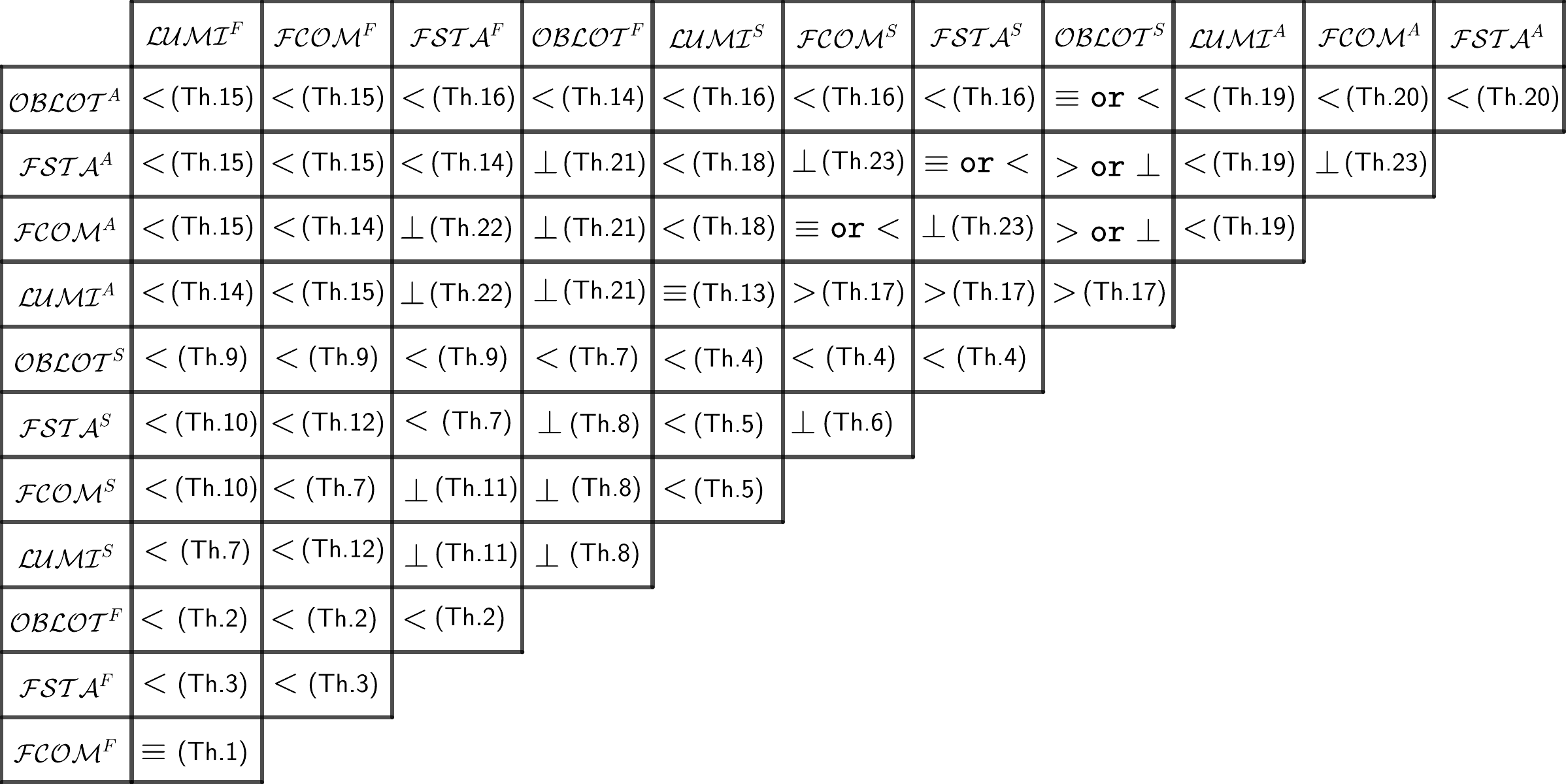}
    \caption{Comparison Table}
    \label{fig:compTable}
\end{figure}



\section{Model and Preliminaries}\label{model}
\subsection{Robots}
A set $R^{\mathcal{G}}=\{r_1,r_2,\dots,r_n\}$ of $n$ mobile computing units, called \textit{robots} operating on a graph $\mathcal{G}$ (can be finite or infinite) embedded on the euclidean plane. The robots are \textit{anonymous}, i.e., the robots do not have any unique identifier; \textit{identical}, i.e., the robots are indistinguishable from physical appearance; \textit{autonomous}, i.e., the robots have no central control; \textit{homogeneous}, i.e., the robots have same capabilities and run the same algorithm. A robot can rest on the nodes of the graphs, and from a node, a robot moves to another adjacent node. A robot cannot stop on an edge connecting to nodes. The robots are equipped with a local coordinate system where the robot considers itself at its origin and can find the position of other robots.

\paragraph*{\bf Look-Compute-Move cycle} The robots operate through a sequence of \textit{look-compute-move} (LCM) cycles. On activation, the robots enter into \textit{look} phase. In this phase, the robot takes an instantaneous snapshot of the surroundings. Thus, it finds out position and states (as declared) of other robots with respect to its local coordinate system. After that, the robot enters into \textit{compute} phase. In this phase, the robot runs an inbuilt algorithm that takes the information from the snapshot as input. Then it results in a destination position. The destination position is either an adjacent node of the node, or the node itself where the robot is currently situated at. In \textit{move} phase, the robot moves towards the computed destination position or, makes a null movement and stays put.

After the execution of an LCM cycle, a robot becomes inactive. Initially, all robots are inactive. An adversarial scheduler is assumed to be responsible for activating robots. The look phase is considered instantaneous and it is considered as the time instant of the snapshot. The processing and gathering of information from the snapshot is considered in the compute phase. The compute phase ends when the robot starts moving and enters into the move phase. The move phase includes the mechanical movement of the robot.

\paragraph*{\bf Considered Robot Models}
The weakest and classical robot model is $\mathcal{OBLOT}$. In this model, the robots are not further equipped with any technology that acts as a persistent memory or commutation architecture. Thus, in this model the robots are oblivious (absence of persistent memory) and silent (absence of communication ability). Hence, it cannot remember its past actions or past configurations from the previous LCM cycle. 

Another standard model in the literature is $\mathcal{LUMI}$. In this model, each robot $r$ is equipped with a register $Light[r]$ called \textit{light}. It can take values from a finite set $C$. Elements of the sets are called \textit{colors}. The color of the light of a robot is visible to other robots. A color can be set to the light at the end of the compute phase in an LCM cycle. The color of the light remains persistent in the next cycle and it does not get reset automatically at the end of a cycle. In the $\mathcal{LUMI}$ model the snapshot also consists of colors. Thus for each robot, it gets a pair (\textit{position, color}). A robot can have more than one light. For these cases, if the robot has $k$ lights, the color can be considered as an $k$ tuple ($r.light_1,\dots,r.light_k$) where $r.light_i$ is a color from $C$. Thus, it is equivalent to the case having one light with color set $C^k$ (cartesian product of $k$ number of $C$). Note that, in the $\mathcal{LUMI}$ model the set $C$ must have at least two colors. If $C$ has only one color then it is technically an $\mathcal{OBLOT}$ model.

There are two intermediate models of in between the above two which are $\mathcal{FSTA}$ and $\mathcal{FCOM}$. In the first model, the light is \textit{internal}. This means that the color of the light is only visible to the robot itself but not to other robots. Each color can be considered a different state of the robots. These robots are silent but they are finite-state. For the second model the the lights are \textit{external}. This means the colors of the light are visible to other robots but not to itself. This is used to communicate its color with other robots but it cannot see its own color in the next LCM cycle. Thus, the robots are oblivious but enabled with finite communication.

Here, a graph $\mathcal{G}$ is considered as an embedded graph, embedded on an Euclidean plane. Suppose a set of more than one robot is placed on $\mathcal{G}$. Let $f$ be a function from the set of vertices of $G$ to $\mathbb{N}\cup\{0\}$, where $f(v)$ is the number of robots on the vertex $v$ of $G$. Then the pair $(G,f)$ is said to be a \textit{configuration} of robots on $\mathcal{G}$.

\paragraph*{\bf Schedulers}

Depending on the activation schedule and time duration of LCM cycles of the robots there are main two types of schedulers. First, in a synchronous scheduler, the time is divided equally into rounds. In each round, activated robots simultaneously execute all the phases of the LCM cycles. In a fully-synchronous, (\textsc{Fsync}) scheduler in each round all the robots present in the system get activated. In a semi-asynchronous scheduler (\textsc{Ssync}), a nonempty subset of robots is activated in the scheduler and all activated robots  simultaneously execute all the phases of the LCM cycles. A \textit{fair} scheduler activates each robot infinitely often. We consider all schedulers to be fair. Next, in an asynchronous scheduler (\textsc{Async}) there is no common notion of time for robots. Each robot independently gets activated and executes its LCM cycle. The time length of LCM cycles, compute phases, and move phases of robots may be different. Even the length of two LCM cycles for one robot may be different. The gap between two consecutive LCM cycles, or the time length of an LCM cycle for a robot, is finite but can be unpredictably long. We consider the activation time and the time taken to complete an LCM cycle to be determined by the adversary.

\subsection{Problems and Computational Relationships}
Let $\mathcal{M}=\{\mathcal{OBLOT},\ \mathcal{FSTA},\ \mathcal{FCOM},\ \mathcal{LUMI}\}$ be the set of models considered in this work, and $\mathcal{S}=\{\textsc{Fsync},\ \textsc{Ssync},\ \textsc{Async}\}$ be the set of all considered schedulers. A \textit{variant} $M^K$ denotes a model $M$ together with scheduler $K$.  For convenience we shall denote $M^{\textsc{Fsync}}$, $M^{\textsc{Ssync}}$, and $M^{\textsc{Async}}$ respectively by $M^F$, $M^S$, and $M^A$.

A \textit{problem} (or, task) is described by a finite or infinite enumerable sequence of configurations together with some restrictions on the robots. The problem description implicitly gives a set of \textit{valid} initial configurations. An algorithm $\mathcal{A}$ is said to \textit{solve} a problem $P$ in a variant $M_K$ if starting from any valid initial configuration, any execution of $\mathcal{A}$ by a set of robots $R$ in the model $M$ under scheduler $K$ can form the configurations given in the sequence in the prescribed order maintaining the restrictions mentioned in $P$.

Let $M\in\mathcal{M}$ be a model and $K\in\mathcal{S}$, then $M(K)$ is the set of the problems that can be solved in model $M$ under then scheduler $K$. Let $M_1, M_2\in\mathcal{M}$ and $K_1,K_2\in\mathcal{S}$.

\begin{itemize}
    \item We say that a variant $M_1^{K_1}$ is \textit{computationally not less powerful} than a variant $M_2^{K_2}$, denoted as $M_1^{K_1}\ge M_2^{K_2}$, if $M_1(K_1)\supseteq M_2(K_2)$.
    
    \item We say that a variant $M_1^{K_1}$ is \textit{computationally more powerful} than a variant $M_2^{K_2}$, denoted as $M_1^{K_1}> M_2^{K_2}$, if $M_1(K_1)\supsetneq M_2(K_2)$.

    \item We say that a variant $M_1^{K_1}$ is \textit{computationally equivalent} than a variant $M_2^{K_2}$, denoted as $M_1^{K_1}\equiv M_2^{K_2}$, if $M_1(K_1)= M_2(K_2)$.

    \item We say that a variant $M_1^{K_1}$ is \textit{computationally orthogonal (or, incomparable)} than a variant $M_2^{K_2}$, denoted as $M_1^{K_1}\perp M_2^{K_2}$, if $M_1(K_1)\not\supseteq M_2(K_2)$ and $M_2(K_2)\not\supseteq M_1(K_1)$.

\end{itemize}

One can easy observe the following result.
\begin{lemma}\label{lm:trivial}
For any $M\in\mathcal{M}$ and $K\in\mathcal{S}$, 
(1)~$M^F\ge M^S \ge M^A$, (2)~$\mathcal{LUMI}^K\ge \mathcal{FSTA}^K\ge \mathcal{OBLOT}^K$, and (3)~$\mathcal{LUMI}^K\ge \mathcal{FCOM}^K\ge \mathcal{OBLOT}^K$.
\end{lemma}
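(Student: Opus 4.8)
The plan is to establish each inequality in Lemma~\ref{lm:trivial} by exhibiting, for every problem solvable in the weaker variant, an algorithm that solves it in the stronger variant. Since the relation $M_1^{K_1} \ge M_2^{K_2}$ means $M_1(K_1) \supseteq M_2(K_2)$, it suffices in each case to argue that any algorithm $\mathcal{A}$ solving a problem $P$ in the weaker variant can be simulated (or directly reused) in the stronger variant so that $P$ is solved there as well. The unifying idea is that strengthening either the scheduler or the model only \emph{adds} capabilities, so the weaker variant's algorithm remains valid.

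For part~(1), I would fix a model $M$ and show $M^F \ge M^S$ and $M^S \ge M^A$. The key observation is that the set of possible executions shrinks as the scheduler becomes more synchronous: every \textsc{Fsync} execution is a special case of an \textsc{Ssync} execution (the subset activated each round happens to be all robots), and every \textsc{Ssync} execution is a special case of an \textsc{Async} execution where the LCM cycles happen to be aligned into rounds. Hence if an algorithm $\mathcal{A}$ solves $P$ under \textsc{Async}, then in particular it produces the required configuration sequence on the restricted family of \textsc{Ssync} executions, so it solves $P$ under \textsc{Ssync}; the same reasoning gives $M^F \ge M^S$. Thus the same algorithm works, and $M(A) \subseteq M(S) \subseteq M(F)$.

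For parts~(2) and~(3), I would fix a scheduler $K$ and compare models. The containments $\mathcal{FSTA}^K \ge \mathcal{OBLOT}^K$ and $\mathcal{FCOM}^K \ge \mathcal{OBLOT}^K$ follow because an $\mathcal{OBLOT}$ algorithm simply ignores the extra light (use a single fixed color throughout, which by the definition recovers the oblivious-silent behavior). For $\mathcal{LUMI}^K \ge \mathcal{FSTA}^K$ and $\mathcal{LUMI}^K \ge \mathcal{FCOM}^K$, I would use that a $\mathcal{LUMI}$ robot has a light that is simultaneously visible to itself and to others, so it can emulate an internal-only light (as in $\mathcal{FSTA}$) or an external-only light (as in $\mathcal{FCOM}$) by just restricting how the color information is consumed in the compute phase. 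In each direction the stronger model can reproduce every execution of the weaker one verbatim, giving the desired inclusion of problem classes.

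The main subtlety to handle carefully is the model-comparison direction rather than the scheduler direction: one must verify that an $\mathcal{FSTA}$ or $\mathcal{FCOM}$ algorithm, when run inside $\mathcal{LUMI}$, yields \emph{exactly} the same transitions, i.e. that giving a robot access to additional (unused) information does not force any behavioral change. This is immediate because an algorithm is a function of the snapshot, and the stronger model's compute function can be defined to discard the extra visibility, but it is the point that genuinely requires the precise definitions of internal versus external lights from Section~\ref{model}. The scheduler inclusions, by contrast, reduce to the purely combinatorial fact that the execution family of a more synchronous scheduler is contained in that of a less synchronous one, which is where I expect essentially no obstacle.
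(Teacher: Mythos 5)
Your proposal is correct: the paper offers no explicit proof of this lemma, stating only that it can be easily observed, and your argument (scheduler containments via restriction of the execution family, model containments via simulation that ignores the extra light capability) is exactly the standard reasoning the authors implicitly rely on. Nothing in your write-up deviates from or adds risk to that intended argument.
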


\section{Comparisons in FSYNC}\label{comp_F}
This section compares the models $\mathcal{OBLOT}$, $\mathcal{FSTA}$, $\mathcal{FCOM}$ and $\mathcal{LUMI}$ in a fully-synchronous scheduler. 

\paragraph*{\bf Equivalency of $\mathcal{FCOM}^{F}$ and $\mathcal{LUMI}^{F}$:}
First, we show the equivalency of the models $\mathcal{FCOM}$ and $\mathcal{LUMI}$ in a fully synchronous scheduler. For that, we recall the similar result in \cite{flochinniOPODIS19} that deals with the same when the robots are operating in the euclidean plane. The proof given in \cite{flochinniOPODIS19} is independent of the fact that whether the robots are operating on a euclidean plane or a discrete domain. Hence we can conclude the following result.

\begin{theorem}\label{th:fcomLumi}
 $\mathcal{FCOM}^{F}\equiv\mathcal{LUMI}^{F}$.   
\end{theorem}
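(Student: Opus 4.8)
The plan is to prove the two inequalities $\mathcal{LUMI}^F\ge\mathcal{FCOM}^F$ and $\mathcal{FCOM}^F\ge\mathcal{LUMI}^F$ separately. The first is immediate: by Lemma~\ref{lm:trivial}(3) we already have $\mathcal{LUMI}^K\ge\mathcal{FCOM}^K$ for every scheduler $K$, in particular for $K=\textsc{Fsync}$. Hence all the work lies in the reverse direction, namely exhibiting, for every problem $P\in\mathcal{LUMI}(\textsc{Fsync})$, an $\mathcal{FCOM}$ algorithm that solves $P$ under \textsc{Fsync}. I would do this by a generic simulation: transform an arbitrary $\mathcal{LUMI}$ algorithm $\mathcal{A}$ into an $\mathcal{FCOM}$ algorithm $\mathcal{A}'$ that, starting from the same initial configuration, reproduces the identical sequence of configurations round by round.

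The guiding idea for the simulation is that the light of a $\mathcal{LUMI}$ robot plays two roles, an externally visible signal and an internally readable memory, while an $\mathcal{FCOM}$ light serves only the first role. I would keep the external colour in $\mathcal{A}'$ literally equal to the colour prescribed by $\mathcal{A}$, so that inter-robot communication is preserved verbatim; the only thing to recover is a robot's ability to read its own colour. Here the crucial feature is full synchrony: every robot is activated in every round and the whole execution is deterministic, so the colour a robot carries at the start of round $t$ is exactly the colour it computed at the end of round $t-1$. I would exploit this to let each robot \emph{re-derive} its own current colour rather than observe it, and then feed this re-derived value, together with the positions and colours of the other robots that it does see, into $\mathcal{A}$. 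Correctness would follow by induction on the round number: assuming the two executions agree on all positions and all colours at round $t$, the simulating robot recovers its own colour, computes the same light update and the same destination as in $\mathcal{A}$, and therefore the configurations again coincide at round $t+1$.

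I expect the main obstacle to be precisely this step of recovering a robot's own colour inside $\mathcal{FCOM}$, since the robot is oblivious and genuinely cannot look at its own light; making the re-derivation rigorous is what forces the argument to rely on the simultaneity and determinism of \textsc{Fsync}, and is exactly the point where the analogous claim would fail for \textsc{Ssync} or \textsc{Async}. Rather than reinvent this delicate bookkeeping, I would lean on the simulation already carried out for the Euclidean plane in \cite{flochinniOPODIS19}, which establishes $\mathcal{FCOM}^F\ge\mathcal{LUMI}^F$ there.

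The remaining and, for this paper, essential point is to check that nothing in that argument uses the continuous nature of the workspace. I would verify that the simulation manipulates only the finite set of colours and replays the compute-and-move decisions of $\mathcal{A}$ unchanged: it never rounds, subdivides, or otherwise exploits the infinite precision of movement in $\mathbb{R}^2$. Since in the discrete model a computed destination is simply one of the finitely many adjacent nodes (or the current node), the same replayed decision is a legal move on the graph $\mathcal{G}$. Hence the transformation $\mathcal{A}\mapsto\mathcal{A}'$ and its inductive correctness proof carry over word for word to the discrete domain, which yields $\mathcal{FCOM}^F\ge\mathcal{LUMI}^F$ and, combined with the trivial inequality, the claimed equivalence $\mathcal{FCOM}^F\equiv\mathcal{LUMI}^F$.
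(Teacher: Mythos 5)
Your proposal is correct and follows essentially the same route as the paper: the paper's proof of Theorem~\ref{th:fcomLumi} likewise consists of invoking the simulation result $\mathcal{FCOM}^F\equiv\mathcal{LUMI}^F$ of \cite{flochinniOPODIS19} for the Euclidean plane and observing that the argument there is independent of whether the domain is continuous or discrete. Your additional remarks (the trivial direction via Lemma~\ref{lm:trivial} and the sketch of why full synchrony is what makes the self-colour recovery possible) only make explicit what the paper leaves implicit, since like the paper you ultimately defer the actual simulation to \cite{flochinniOPODIS19}.
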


Next, let us define a problem named \textsc{moveOnce} below for further developments in this section.

\begin{definition}\label{def:moveOnce}
Let two robots $r_i\ (i=1,2)$ be placed on a vertex $v_i\ (i=1,2)$ of a five cycle with a chord such that $v_1$ and $v_2$ are not adjacent, and degree of each $v_i$ is 2 (Fig.~\ref{fig:moveOnce}). The problem \textsc{moveOnce} asks (1)~the robot, that has an adjacent node with degree 2, to move once to that adjacent node and stay still afterward, (2)~other robot will remain still forever.   
\end{definition}

\begin{figure}[ht!]
    \centering
    \includegraphics[width=.17\linewidth]{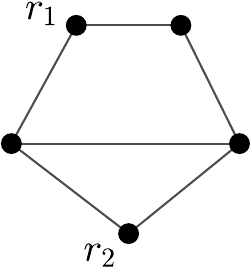}
    \caption{An image related to problem \textsc{moveOnce}}
    \label{fig:moveOnce}
\end{figure}


\begin{lemma}\label{lm:MCoblot}
  $\exists R\in\mathcal{R}_2,\ \textsc{moveOnce}\not 
\in \mathcal{OBLOT}^F(R)$.    
\end{lemma}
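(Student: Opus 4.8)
The plan is to exhibit a concrete two-robot instance on which no oblivious algorithm can succeed, deriving the obstruction from a reflection symmetry of the host graph. Label the five cycle as $a$--$b$--$c$--$d$--$e$--$a$ and place the unique chord between $b$ and $d$; then $b,d$ have degree $3$ and $a,c,e$ have degree $2$, with $c$ adjacent to both degree-$3$ nodes while each of $a,e$ has one degree-$3$ and one degree-$2$ neighbour. The only non-adjacent degree-$2$ placements are $\{a,c\}$ and $\{c,e\}$, which are mirror images, so I fix the instance with the companion robot $r_s$ on $c$ and the active robot $r_a$ on $a$. Since $c$ has no degree-$2$ neighbour while $a$'s degree-$2$ neighbour is the empty node $e$, the problem \textsc{moveOnce} forces $r_a$ to move from $a$ to $e$ and thereafter both robots to stay still forever.

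First I would record the unique nontrivial automorphism $\sigma$ of this graph, namely the reflection $a\leftrightarrow e$, $b\leftrightarrow d$, $c\mapsto c$ (which must preserve the unique chord $bd$). The crucial point is that $\sigma$ carries the initial configuration $\{a,c\}$ onto the post-move configuration $\{e,c\}$, fixing the companion's node $c$ and sending the mover's source $a$ to its destination $e$. Hence the local snapshot of $r_a$ on $a$ before the move is isomorphic, via $\sigma$, to the snapshot of $r_a$ on $e$ after the move: in each case the robot occupies a degree-$2$ node whose two neighbours have degrees $3$ and $2$ (the latter empty), with the companion exactly two hops away through the degree-$3$ neighbour.

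The core of the argument is then to combine obliviousness with disorientation. In $\mathcal{OBLOT}$ a robot's computed destination depends only on its current snapshot, and a disoriented robot cannot tell apart two snapshots related by a graph automorphism; so $r_a$ must behave identically (up to $\sigma$) on its pre-move and post-move views. Solving \textsc{moveOnce} forces the pre-move decision to be ``advance from $a$ to the degree-$2$ neighbour $e$''; transporting this decision by $\sigma$, the post-move decision is ``advance from $e$ to the degree-$2$ neighbour $a$'', so $r_a$ returns to $a$. Under \textsc{Fsync} this yields the perpetual oscillation $a\to e\to a\to\cdots$, and $r_a$ never stays still afterward; the only alternative, keeping $r_a$ still on its initial view, already violates the requirement that it move once. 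Either way \textsc{moveOnce} is unsolvable, which is exactly the claim for $R$ the two-robot set on this graph.

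The step I expect to be the main obstacle is making the indistinguishability fully rigorous: one must check that $\sigma$ is a genuine automorphism fixing $r_s$'s node and mapping source to destination, and, more delicately, that in the disoriented oblivious setting an isomorphism of local views really does force identical behaviour of $r_a$ across the two rounds. It is also worth confirming that $r_s$'s view is $\sigma$-invariant, so that the companion consistently remains on $c$ and cannot be used to break the symmetry; once these verifications are in place, the forced oscillation and hence the impossibility follow immediately.
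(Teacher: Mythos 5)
Your proof is correct and follows essentially the same approach as the paper's: after the forced move, the resulting configuration is indistinguishable (up to the graph's reflection automorphism) from the initial one, so an oblivious, silent, disoriented robot must move again, yielding a perpetual oscillation. In fact, your write-up makes explicit the automorphism $\sigma$ and the indistinguishability step that the paper's much terser proof only states implicitly, so it is a strictly more rigorous rendering of the same argument.
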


\begin{lemma}\label{lm:MCfsta}
  $\forall R\in\mathcal{R}_2,\ \textsc{moveOnce} 
\in \mathcal{FSTA}^A(R)$.    
\end{lemma}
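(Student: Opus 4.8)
\emph{Proof idea.} The plan is to give an explicit two-color $\mathcal{FSTA}$ algorithm and show it solves \textsc{moveOnce} under the asynchronous scheduler; since \textsc{Async} is the weakest scheduler, Lemma~\ref{lm:trivial}(1) then yields the result for \textsc{Ssync} and \textsc{Fsync} as well. The key structural observation is that, by the configuration fixed in Definition~\ref{def:moveOnce}, the two occupied vertices are distinguishable from the graph alone: with full visibility a robot can read the degrees of its neighbors, and exactly one of the two starting vertices (the \emph{mover}) has an adjacent node of degree~$2$, whereas the other (the \emph{stayer}) has both neighbors of degree~$3$. Thus each robot can determine its role locally, without reference to the position of the other robot, and the target of the mover is its unique degree-$2$ neighbor.

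First I would give each robot an internal light with two colors, say \emph{fresh} and \emph{done}, all \emph{fresh} initially. The compute rule is: if its own color is \emph{done}, stay put; otherwise, if the current vertex has an adjacent node of degree~$2$, recolor to \emph{done} and move to that (unique) neighbor; otherwise stay put and keep the color \emph{fresh}. I would then check the two requirements of \textsc{moveOnce}. The stayer always sees its two neighbors with degree~$3$---a property invariant under the other robot's motion, since relocating a robot never alters graph degrees---so it makes only null moves and stays on its vertex forever. The mover, at its first activation, reads \emph{fresh}, detects its degree-$2$ neighbor, recolors to \emph{done} and relocates there; crucially, the vertex it reaches again has a degree-$2$ neighbor, so without memory it would keep relocating, but reading \emph{done} at every later activation makes it stay. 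Fairness guarantees the mover is eventually activated, so it moves and then rests.

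The one point that requires care is that asynchrony must not spoil the ``exactly once'' guarantee. The crucial fact I would emphasize is that no decision of either robot depends on the other robot's position or on any timing: the stayer never acts, and the mover's choice is a function solely of its own persistent internal color and of fixed graph degrees. Hence there is no stale-snapshot hazard, and the planned move to the degree-$2$ neighbor remains valid no matter when it is executed; persistence of the light in $\mathcal{FSTA}$ records the single relocation so the mover never repeats it (ruling out more than one move), while fairness rules out the run with no move at all. This establishes \textsc{moveOnce} $\in \mathcal{FSTA}^{A}(R)$ for every two-robot system $R \in \mathcal{R}_2$, as claimed.
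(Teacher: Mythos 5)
Your proof is correct and takes essentially the same approach as the paper: a two-state ($\texttt{off}$/\texttt{done}-style) internal light where the mover identifies itself from the graph topology, moves exactly once while switching its state, and the other robot never moves. Your write-up is in fact more detailed than the paper's, spelling out the degree-based role identification and why asynchrony causes no stale-snapshot issue, but the underlying argument is identical.
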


\begin{theorem}\label{th:oblotf<fstaf}
    $\mathcal{OBLOT}^F < M^F$ for all $M\in\{\mathcal{FSTA,\ FCOM,\ LUMI}\}$.
\end{theorem}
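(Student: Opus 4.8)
The plan is to establish the strict inequality by the standard two-part recipe: show $M^F \ge \mathcal{OBLOT}^F$ and then exhibit a single problem that separates them. The containment direction is immediate from Lemma~\ref{lm:trivial}: parts (2) and (3) give $\mathcal{FSTA}^F \ge \mathcal{OBLOT}^F$, $\mathcal{LUMI}^F \ge \mathcal{OBLOT}^F$, and $\mathcal{FCOM}^F \ge \mathcal{OBLOT}^F$, so $\mathcal{OBLOT}(F) \subseteq M(F)$ for every $M$ in the list. The whole content therefore reduces to producing a task that lives in $M(F) \setminus \mathcal{OBLOT}(F)$, and I would take this task to be \textsc{moveOnce}.

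For the membership \textsc{moveOnce} $\in M(F)$, I would chain the earlier results. Lemma~\ref{lm:MCfsta} places \textsc{moveOnce} in $\mathcal{FSTA}^A(R)$ for every $R \in \mathcal{R}_2$, i.e.\ \textsc{moveOnce} $\in \mathcal{FSTA}(A)$. Lemma~\ref{lm:trivial}(1) lifts this to the fully-synchronous scheduler, since $\mathcal{FSTA}^F \ge \mathcal{FSTA}^A$ gives $\mathcal{FSTA}(A) \subseteq \mathcal{FSTA}(F)$; hence \textsc{moveOnce} $\in \mathcal{FSTA}(F)$, settling the case $M = \mathcal{FSTA}$. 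The case $M = \mathcal{LUMI}$ then follows from Lemma~\ref{lm:trivial}(2) (namely $\mathcal{LUMI}^F \ge \mathcal{FSTA}^F$), and the case $M = \mathcal{FCOM}$ from Theorem~\ref{th:fcomLumi} (namely $\mathcal{FCOM}^F \equiv \mathcal{LUMI}^F$). Thus \textsc{moveOnce} $\in M(F)$ for all three target models.

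For the non-membership \textsc{moveOnce} $\notin \mathcal{OBLOT}(F)$, I would invoke Lemma~\ref{lm:MCoblot}. The subtle point here is the quantifier matching: to certify that a task is \emph{solvable} one needs an algorithm that works for every admissible robot set, whereas to certify that it is \emph{unsolvable} it suffices to exhibit a single robot set on which no algorithm works. Lemma~\ref{lm:MCoblot} supplies exactly such a witness $R \in \mathcal{R}_2$ with \textsc{moveOnce} $\notin \mathcal{OBLOT}^F(R)$, so \textsc{moveOnce} $\notin \mathcal{OBLOT}(F)$. Combining with the previous paragraph, \textsc{moveOnce} $\in M(F) \setminus \mathcal{OBLOT}(F)$ while $\mathcal{OBLOT}(F) \subseteq M(F)$, which is precisely $\mathcal{OBLOT}(F) \subsetneq M(F)$, i.e.\ $\mathcal{OBLOT}^F < M^F$.

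The theorem itself is thus a bookkeeping exercise once the lemmas are in hand; the genuine mathematical work sits inside Lemma~\ref{lm:MCoblot} and Lemma~\ref{lm:MCfsta}, which I am permitted to assume. Were I proving those, I expect the main obstacle to be Lemma~\ref{lm:MCoblot}: the impossibility for oblivious silent robots should rest on a symmetry/indistinguishability argument exploiting that the two degree-$2$ vertices $v_1,v_2$ look alike to a memoryless robot under \textsc{Fsync}, so that no oblivious rule can consistently break the tie and designate exactly the intended robot to move once and then stop. At the level of the present theorem, the only thing to get right is the direction of the quantifiers over $\mathcal{R}_2$ described above.
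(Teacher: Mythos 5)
Your proposal is correct and follows essentially the same route as the paper: it separates $\mathcal{OBLOT}^F$ from the other models via \textsc{moveOnce}, using Lemma~\ref{lm:MCoblot} for the impossibility, Lemma~\ref{lm:MCfsta} (lifted from \textsc{Async} to \textsc{Fsync}) for solvability in $\mathcal{FSTA}^F$, the trivial ordering for $\mathcal{LUMI}^F$, and Theorem~\ref{th:fcomLumi} for $\mathcal{FCOM}^F$. Your treatment is in fact slightly more careful than the paper's, since you make the containment direction and the quantifier structure over $\mathcal{R}_2$ explicit.
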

\begin{proof}
    lemma~\ref{lm:MCoblot} and Lemma~\ref{lm:MCfsta} show that there exists a problem (\textsc{moveOnce}) that is solvable in $\mathcal{FSTA}^F$, thus solvable in $\mathcal{LUMI}^F$, but not solvable in $\mathcal{OBLOT}^F$. The problem \textsc{moveOnce} is solvable in $\mathcal{FCOM}^F$ from Theorem~\ref{th:fcomLumi}. Hence the result follows.
\end{proof}

 Next, introduce a new problem by modifying the problem $\neg IL$ defined in \cite{flochinniOPODIS19} for the discrete version.
 
\begin{definition}\label{def:-il}
    Let three robots be placed on a geometric graph given in Fig.~\ref{fig:il}. The initial configuration formed by robots is Config-I (Fig.~\ref{fig:il1}). In problem $\neg IL$ the robots are required to move to form Config-II (Fig.~\ref{fig:il2}).  from Config-I, then Config-III (Fig.~\ref{fig:il3}) from Config-II.   
\end{definition}

\begin{figure}[ht!]
    \centering
    \includegraphics[width=.15\linewidth]{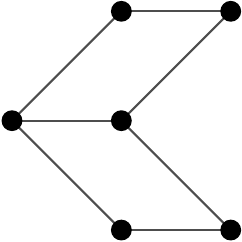}
    \caption{An image related to problem \textsc{$\neg IL$}}
    \label{fig:il}
\end{figure}


\begin{figure}[ht!]
\begin{minipage}{.3\textwidth}
  \centering
  \includegraphics[width=.6\linewidth]{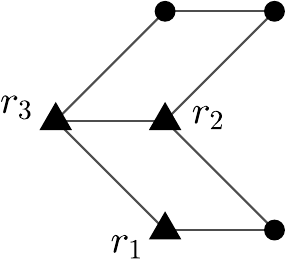}
  \caption{Config-I}
  \label{fig:il1}
\end{minipage}
\begin{minipage}{.3\textwidth}
  \centering
  \includegraphics[width=.5\linewidth]{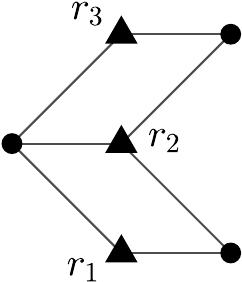}
  \caption{Config-II}
  \label{fig:il2}
\end{minipage}
\begin{minipage}{.3\textwidth}
  \centering
  \includegraphics[width=.57\linewidth]{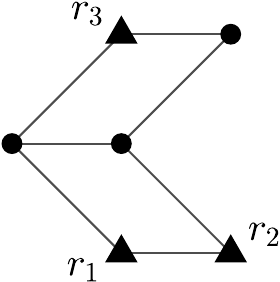}
  \caption{Config-III}
  \label{fig:il3}
\end{minipage}
\end{figure}

The following Lemma~\ref{lm:fstaIL} and Lemma~\ref{lm:fcomIL} follow from the same arguments as given in \cite{flochinniOPODIS19}. Let us recall the arguments briefly in the next two lines.  In the $\mathcal{FSTA}$ model it is impossible for $r_2$ to understand which robot has moved to form Config-II. But in $\mathcal{FCOM}$ model the moving robot can set a distinct color in order to let $r_2$ know it. 

\begin{lemma}\label{lm:fstaIL}
  $\exists R\in\mathcal{R}_3,\ \neg IL\not 
\in \mathcal{FSTA}^F(R)$.    
\end{lemma}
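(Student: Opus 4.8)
The plan is to argue by contradiction, exploiting the one feature that distinguishes $\mathcal{FSTA}$ from $\mathcal{FCOM}$: its lights are \emph{internal}, so a robot's compute step may depend on the positions of the other robots and on its own colour, but never on the colours of its neighbours. Assume some algorithm $\mathcal{A}$ solves $\neg IL$ in $\mathcal{FSTA}^F$ for a given $R=\{r_1,r_2,r_3\}$ started in Config-I (Fig.~\ref{fig:il1}). I would first isolate the decisive step, the transition from Config-II (Fig.~\ref{fig:il2}) to Config-III (Fig.~\ref{fig:il3}): by the design of the problem, the configuration Config-III is fixed and can be produced from Config-II only if $r_2$ reacts according to \emph{which} of the two outer robots performed the move that created Config-II. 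Thus everything reduces to showing that, in the silent model, $r_2$ cannot recover the identity of that mover.

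To this end I would build two legitimate executions $E_1$ and $E_2$ of $\mathcal{A}$ that both reach Config-II but disagree on the mover, namely $r_1$ in $E_1$ and $r_3$ in $E_2$, and that are \emph{indistinguishable to $r_2$}. The set-up of Fig.~\ref{fig:il} is reflection symmetric about the position of $r_2$, with the reflection $\psi$ swapping the two outer sites; using disorientation (the robots share no handedness), the adversary realises $E_2$ as the $\psi$-image of $E_1$, assigning $r_2$ the reflected coordinate system. The key claim is then that at \emph{every} round up to the snapshot taken in Config-II, $r_2$ sees exactly the same data in $E_1$ and $E_2$: the occupied positions are $\psi$-related while $r_2$'s frame is $\psi$-related, so the two effects cancel in $r_2$'s local view, and since colours are invisible in $\mathcal{FSTA}$ the states that separate the mover from the non-mover never enter that view. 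Consequently $r_2$ traverses the same sequence of internal states and computes the \emph{same} local move in both executions. But a single local move corresponds to $\psi$-reflected physical moves in $E_1$ and $E_2$, so $r_2$ cannot steer the system to the one fixed target Config-III in both runs; $\mathcal{A}$ therefore fails in at least one of $E_1,E_2$, contradicting the assumption and giving $\neg IL\notin\mathcal{FSTA}^F(R)$.

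The step I expect to be the main obstacle is making the pair $E_1,E_2$ simultaneously \emph{legitimate} and \emph{$r_2$-indistinguishable}. One must check, from the concrete geometry of Figs.~\ref{fig:il1}--\ref{fig:il3}, that forming Config-II is genuinely possible, so that the problem is non-vacuous and, indeed, solvable in the stronger model, and that the feature selecting the mover is one the outer robot can detect about its own neighbourhood (or through the adversary's chirality assignment) \emph{without} breaking the reflection symmetry of $r_2$'s relational view; otherwise $r_2$ could orient itself from Config-I and, using its finite memory, later identify the mover, and the separation would collapse. Verifying round by round that no intermediate snapshot leaks the mover to $r_2$ through positions alone is the delicate part. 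The same check also explains the companion positive result: in $\mathcal{FCOM}$ the mover can raise an \emph{external} light, which desymmetrises $r_2$'s snapshot of Config-II (the distinguished colour sits at a definite site in $r_2$'s frame under either handedness), so $r_2$ recovers the mover's identity and can drive the system to Config-III.
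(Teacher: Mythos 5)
Your high-level reduction is the right one, and it is exactly the argument the paper imports from the cited work: the transition from Config-II to Config-III forces $r_2$ to act according to \emph{which} outer robot moved, and in $\mathcal{FSTA}$ that information is locked inside the mover's internal light. The genuine gap is in how you build the two executions. You take $E_2$ to be the \emph{global} $\psi$-image of $E_1$ --- reflected initial placement, reflected mover, reflected frames. But a problem that is solvable by disoriented robots at all (and $\neg IL$ must be, by Lemma~\ref{lm:fcomIL}) is necessarily $\psi$-equivariant: started from the reflected initial configuration, the required target is the \emph{reflected} Config-III, not ``the one fixed target Config-III'' of $E_1$. Since $r_2$'s behaviour in $E_2$ is by construction the $\psi$-image of its behaviour in $E_1$, the run $E_2$ forms $\psi(\text{Config-III})$ exactly when $E_1$ forms Config-III; the two runs succeed or fail \emph{together}, and no contradiction follows. (If instead you insist that only the embedded Config-I is a valid start, then $E_2$ is not subject to any requirement at all, and again nothing follows.) If, on the other hand, you meant $E_2$ to start from the \emph{same} embedded Config-I with reflected frames, then the indistinguishability claim breaks: Config-I is asymmetric, so $r_2$'s prefix snapshots in the two runs are mirror images --- different inputs to the algorithm --- and $r_2$'s internal state at the critical round need not coincide, which is precisely the loophole (``$r_2$ memorises the mover's side'') you were worried about.

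The repair is to fork a \emph{single} execution rather than reflect a whole one. Let $E_1$ and $E_2$ share an identical prefix up to and including the formation of Config-II (in particular, the \emph{same} robot is the mover in both), and let the adversary differ only in the local coordinate frame handed to $r_2$ in the round where it must react to Config-II: frame $F$ in $E_1$, frame $F\circ\psi$ in $E_2$ (this re-issuing of frames at each activation is what disorientation grants the adversary). Because Config-II together with the graph is invariant under $\psi$, and $\mathcal{FSTA}$ lights are invisible to others, $r_2$'s snapshot is literally the same local picture under both frames, and its internal state is the same because the prefixes coincide. Hence $r_2$ outputs the same local move, whose two physical realizations are $\psi$-reflections of one another: toward the mover in one fork, toward the non-mover in the other. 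The mover's position is pinned down by the common prefix, so one fork violates the requirement on Config-III --- the desired contradiction. Note that this construction also disposes of the memorisation worry automatically: whatever $r_2$ stored during the prefix is identical in both forks and cannot disambiguate two identical snapshots.
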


\begin{lemma}\label{lm:fcomIL}
  $\forall R\in\mathcal{R}_3,\ \neg IL 
\in \mathcal{FCOM}^A(R)$.    
\end{lemma}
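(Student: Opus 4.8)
The plan is to exhibit an explicit $\mathcal{FCOM}$ algorithm that solves $\neg IL$ and then argue its correctness under the (weakest) \textsc{Async} scheduler; by Lemma~\ref{lm:trivial} this automatically yields solvability in $\mathcal{FCOM}^S$ and $\mathcal{FCOM}^F$ as well. As the excerpt already hints, the only obstruction to solving $\neg IL$ in the silent $\mathcal{FSTA}$ model (Lemma~\ref{lm:fstaIL}) is that the middle robot $r_2$ cannot determine \emph{which} of the two symmetric peripheral robots performed the move realizing Config-II (Fig.~\ref{fig:il2}) from Config-I (Fig.~\ref{fig:il1}); in $\mathcal{FCOM}$ the external light supplies exactly the missing bit of communication. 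So the heart of the algorithm is a one-directional color handshake from the mover to $r_2$.

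I would design the algorithm in two phases keyed to the three configurations. All robots start with a default color, say \emph{off}. In the first phase, the robot designated by $\neg IL$ to act recognizes Config-I from its local snapshot (position pattern plus the all-\emph{off} coloring), moves to its target node, and \emph{simultaneously} sets its external light to a distinguished color, say $c_1$. Since the light is external, the mover never needs to read its own color; having completed its step it simply stays put forever, which is consistent with its obliviousness. In the second phase, $r_2$ recognizes Config-II by the node positions together with the presence of the color $c_1$ on exactly one peripheral robot: the position of the $c_1$-colored robot tells $r_2$ which symmetric robot moved, and $r_2$ then executes the move that produces Config-III (Fig.~\ref{fig:il3}). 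Thus the recognition predicate of $r_2$ is a conjunction of a \emph{geometric} condition on positions and a \emph{chromatic} condition on the observed light, and it fires on no other snapshot.

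The correctness argument then has two parts. For \emph{safety}, I must show that every snapshot a robot can take falls under at most one rule leading to the intended transition, so that no robot moves when it should be still. Here the discrete setting helps: a robot is always located on a node during its instantaneous look, so every snapshot is a clean node-configuration, never an edge-midpoint, and the position patterns of Config-I, Config-II, Config-III are pairwise distinguishable. For \emph{liveness}, fairness guarantees the mover is eventually activated to create Config-II, and, because the color $c_1$ is \emph{persistent}, the handshake signal cannot be lost by delay, so $r_2$ is eventually activated while the signal is present and completes Config-III.

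The main obstacle I anticipate is the \textsc{Async}-specific interleaving: within a single LCM cycle the light is set at the end of the compute phase, so there is a window in which the mover has already switched to $c_1$ but has not yet updated its position, and the adversary may let $r_2$ look precisely then. I would neutralize this by making $r_2$'s trigger require the full Config-II \emph{position} pattern in addition to the color $c_1$; any snapshot showing $c_1$ but the old (Config-I) positions matches no rule, so $r_2$ simply stays put and waits. Because only two sequential moves are needed, performed by two \emph{distinct} robots with the color acting as a completed-handshake marker rather than a transient pulse, the adversary cannot manufacture a harmful race between the mover and $r_2$. Verifying that this case analysis is exhaustive over all reachable intermediate (position, color) states is the one genuinely technical step, and it mirrors the argument of \cite{flochinniOPODIS19} transplanted to the graph domain.
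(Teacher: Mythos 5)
Your proposal is correct and takes essentially the same approach as the paper: the paper disposes of Lemma~\ref{lm:fcomIL} by appealing to the argument of \cite{flochinniOPODIS19}, whose core idea is exactly your color handshake --- the moving robot sets a distinct external color so that $r_2$ can tell which peripheral robot performed the move. Your safety/liveness case analysis and your treatment of the \textsc{Async} window between setting the color and completing the move are consistent elaborations of that same sketch, not a different route.
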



\begin{theorem}\label{th:fstaFvsallF}
  $\mathcal{FSTA}^{F}<M^{F}$, for all $M\in \{\mathcal{FCOM,\ LUMI}\}$.
\end{theorem}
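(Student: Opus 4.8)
The plan is to establish the strict inequality $\mathcal{FSTA}^{F} < M^{F}$ for each $M \in \{\mathcal{FCOM}, \mathcal{LUMI}\}$ by exhibiting a single problem that separates the models. The natural candidate is the problem $\neg IL$ just defined, and the two supporting lemmas are exactly tailored to this purpose. First I would invoke Lemma~\ref{lm:trivial}, part~(2) and part~(3), together with the chain of scheduler implications, to get the ``not more powerful'' direction: since $\mathcal{FCOM}^F \ge \mathcal{OBLOT}^F$ and more relevantly the trivial containments give $M^F \ge \mathcal{FSTA}^F$ whenever $M$ dominates $\mathcal{FSTA}$ in power; but here the point is the reverse containment of solvable problem sets, so I must argue that every problem solvable in $\mathcal{FSTA}^F$ is solvable in $M^F$. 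This follows because $\mathcal{LUMI}$ carries both internal state and communication, so $\mathcal{LUMI}^F \ge \mathcal{FSTA}^F$ is immediate from Lemma~\ref{lm:trivial}(2), and for $\mathcal{FCOM}$ I would use the equivalence $\mathcal{FCOM}^F \equiv \mathcal{LUMI}^F$ from Theorem~\ref{th:fcomLumi} to transfer this domination to $\mathcal{FCOM}^F$ as well.

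Next I would supply the separating witness to upgrade $\ge$ to the strict $<$. By Lemma~\ref{lm:fcomIL}, the problem $\neg IL$ is solvable in $\mathcal{FCOM}^A$, and since $\mathcal{FCOM}^F \ge \mathcal{FCOM}^A$ by Lemma~\ref{lm:trivial}(1), it is solvable in $\mathcal{FCOM}^F$; applying Theorem~\ref{th:fcomLumi} once more, it is also solvable in $\mathcal{LUMI}^F$. On the other hand, Lemma~\ref{lm:fstaIL} asserts that there exists a robot set $R \in \mathcal{R}_3$ for which $\neg IL \notin \mathcal{FSTA}^F(R)$, so $\neg IL$ is \emph{not} solvable in $\mathcal{FSTA}^F$. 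Combining the domination with this witness yields $\mathcal{FSTA}^F(\cdot) \subsetneq M^F(\cdot)$ for each of the two target models, which is precisely the strict relation claimed.

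The structure of the argument is therefore a clean two-part template: a containment half drawn from the trivial lemma and the $\mathcal{FCOM}$--$\mathcal{LUMI}$ equivalence, and a strictness half drawn from the two $\neg IL$ lemmas. The main obstacle, and the only genuinely nontrivial content, is hidden inside Lemma~\ref{lm:fstaIL} --- the impossibility in $\mathcal{FSTA}^F$. The intuition, as the paper already previews, is that an $\mathcal{FSTA}$ robot's light is internal, so after the move from Config-I to Config-II the bystander robot $r_2$ has no way to determine which of the two other robots actually moved, and this indistinguishability forces an incorrect or ambiguous transition toward Config-III; by contrast an $\mathcal{FCOM}$ robot can advertise its move via an externally visible color. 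Since I may assume Lemma~\ref{lm:fstaIL} and Lemma~\ref{lm:fcomIL} as stated, the proof of the theorem itself reduces to correctly threading these citations together, and I would write it in just a few lines mirroring the proof of Theorem~\ref{th:oblotf<fstaf}.
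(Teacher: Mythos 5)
Your proposal is correct and follows essentially the same route as the paper's own proof: the separation comes from Lemma~\ref{lm:fstaIL} (impossibility of $\neg IL$ in $\mathcal{FSTA}^F$) together with Lemma~\ref{lm:fcomIL} (solvability in $\mathcal{FCOM}^A$, hence in $\mathcal{FCOM}^F$), and the $\mathcal{LUMI}^F$ case plus the $\mathcal{FCOM}^F \ge \mathcal{FSTA}^F$ containment are transferred via Theorem~\ref{th:fcomLumi}. You are merely more explicit than the paper about the containment half and the \textsc{Async}-to-\textsc{Fsync} transfer, both of which the paper leaves implicit.
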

\begin{proof}
    Lemma~\ref{lm:fstaIL} and Lemma~\ref{lm:fcomIL} show that the $\neg IL$ problem is solvable in $\mathcal{FCOM}^F$ but not solvable in $\mathcal{FSTA}^F$. Next, from Theorem~\ref{th:fcomLumi} we have $\mathcal{FCOM}^F$ and $\mathcal{LUMI}^F$ are equivalent. Thus, the result is proven.
\end{proof}
 
\section{Comparisons in SSYNC}\label{comp_S}
This section compares the models $\mathcal{OBLOT}$, $\mathcal{FSTA}$, $\mathcal{FCOM}$, and $\mathcal{LUMI}$ in a semi-synchronous scheduler. 


\begin{theorem}\label{th:oblotS<allS}
   $\mathcal{OBLOT}^{S}<M^{S}$, for all $M\in\{\mathcal{FSTA,\ FCOM,\ LUMI}\}$.
\end{theorem}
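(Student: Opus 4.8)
The plan is to establish the two ingredients of strict dominance separately: first the ``$\ge$'' direction uniformly for all three models, and then a strict separation witnessed by a concrete task for each. The easy half is immediate from Lemma~\ref{lm:trivial}: parts (2) and (3) give $\mathcal{FSTA}^S, \mathcal{FCOM}^S, \mathcal{LUMI}^S \ge \mathcal{OBLOT}^S$, so it only remains to exhibit, for each $M \in \{\mathcal{FSTA}, \mathcal{FCOM}, \mathcal{LUMI}\}$, a problem lying in $M(S)$ but not in $\mathcal{OBLOT}(S)$.

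For $\mathcal{FSTA}^S$ (and hence $\mathcal{LUMI}^S$) I would reuse \textsc{moveOnce}. By Lemma~\ref{lm:MCfsta} it is solvable in $\mathcal{FSTA}^A$, and since $\mathcal{FSTA}^S \ge \mathcal{FSTA}^A$ by Lemma~\ref{lm:trivial}(1) it is solvable in $\mathcal{FSTA}^S$; by Lemma~\ref{lm:trivial}(2) it is then also in $\mathcal{LUMI}^S$. Conversely Lemma~\ref{lm:MCoblot} gives a robot set for which \textsc{moveOnce} is unsolvable in $\mathcal{OBLOT}^F$, and since $\mathcal{OBLOT}^F \ge \mathcal{OBLOT}^S$ it stays unsolvable in $\mathcal{OBLOT}^S$. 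This separates both $\mathcal{FSTA}^S$ and $\mathcal{LUMI}^S$ from $\mathcal{OBLOT}^S$.

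The delicate case, and the one I expect to be the main obstacle, is $\mathcal{FCOM}^S$. In the \textsc{Fsync} result the $\mathcal{FCOM}$ case came for free because $\mathcal{FCOM}^F \equiv \mathcal{LUMI}^F$ (Theorem~\ref{th:fcomLumi}) let \textsc{moveOnce} transfer to $\mathcal{FCOM}^F$; under \textsc{Ssync} that equivalence is not available, and in fact \textsc{moveOnce} appears unsolvable in $\mathcal{FCOM}^S$: the moving robot cannot read its own light, its local views before and after the move are related by a graph automorphism, and an adversary that repeatedly activates only the mover while delaying the other robot forces it to oscillate, violating ``move once.'' So I would instead use $\neg IL$. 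By Lemma~\ref{lm:fcomIL} it is solvable in $\mathcal{FCOM}^A$, hence in $\mathcal{FCOM}^S$ by Lemma~\ref{lm:trivial}(1). For unsolvability in $\mathcal{OBLOT}^S$ I would chain inclusions: Lemma~\ref{lm:fstaIL} gives $\neg IL \notin \mathcal{FSTA}^F$, and since $\mathcal{FSTA}^F \ge \mathcal{OBLOT}^F \ge \mathcal{OBLOT}^S$ by Lemma~\ref{lm:trivial}(2) and (1), we get $\neg IL \notin \mathcal{OBLOT}^S$. This separates $\mathcal{FCOM}^S$ from $\mathcal{OBLOT}^S$.

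Combining the three separations with the uniform ``$\ge$'' direction yields $\mathcal{OBLOT}^S < M^S$ for all $M \in \{\mathcal{FSTA}, \mathcal{FCOM}, \mathcal{LUMI}\}$. The only genuine care needed is to recognize that the $\mathcal{FCOM}$ argument cannot simply copy the \textsc{Fsync} proof and must be routed through $\neg IL$ rather than \textsc{moveOnce}; everything else is bookkeeping with the inclusions of Lemma~\ref{lm:trivial}.
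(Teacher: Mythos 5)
Your proposal is correct and follows essentially the same route as the paper's proof: \textsc{moveOnce} (Lemmas~\ref{lm:MCoblot} and~\ref{lm:MCfsta}) separates $\mathcal{FSTA}^S$ and $\mathcal{LUMI}^S$ from $\mathcal{OBLOT}^S$, while $\neg IL$ (Lemmas~\ref{lm:fstaIL} and~\ref{lm:fcomIL}) handles the $\mathcal{FCOM}^S$ case. Your explicit chaining through Lemma~\ref{lm:trivial} (transferring solvability from \textsc{Async} up to \textsc{Ssync} and unsolvability from \textsc{Fsync} down to \textsc{Ssync}) is exactly what the paper's terse citation of those four lemmas implicitly relies on, and your side remark that \textsc{moveOnce} fails in $\mathcal{FCOM}^S$ is precisely the paper's later Lemma~\ref{lm:MCfcom}.
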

\begin{proof}
    In Lemma~\ref{lm:MCoblot} and Lemma~\ref{lm:MCfsta}, we showed that there exists a problem solvable in $\mathcal{FSTA}^S$, thus also in $\mathcal{LUMI}^S$, but not solvable in $\mathcal{OBLOT}^S$. Also, in Lemma~\ref{lm:fstaIL} and Lemma~\ref{lm:fcomIL}, we showed that there exists a problem solvable in $\mathcal{FCOM}^S$ but solvable in $\mathcal{OBLOT}^S$. Hence the result follows.
\end{proof}

In Lemma~\ref{lm:MCoblot}, we have shown that \textsc{moveOnce} is unsolvable in $\mathcal{OBLOT}^F$. In the following Lemma~\ref{lm:MCfcom}, we show that \textsc{moveOnce} is unsolvable in $\mathcal{FCOM}^S$.

\begin{lemma}\label{lm:MCfcom}
  $\exists R\in\mathcal{R}_2,\ \textsc{moveOnce}\not 
\in \mathcal{FCOM}^S(R)$.    
\end{lemma}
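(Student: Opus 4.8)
To prove Lemma~\ref{lm:MCfcom}, the plan is to argue by contradiction: assume some deterministic $\mathcal{FCOM}$ algorithm $\mathcal{A}$ solves \textsc{moveOnce} under \emph{every} fair \textsc{Ssync} schedule on the instance $R$ of Definition~\ref{def:moveOnce}, and then build one fair \textsc{Ssync} schedule on which the mover is forced to move twice. Concretely, label the five-cycle $a\text{-}b\text{-}c\text{-}d\text{-}e$ with the chord $\{a,c\}$ joining the two degree-$3$ nodes; place the stationary robot $r_1$ on the degree-$2$ node $b$ (whose neighbours $a,c$ both have degree $3$) and the mover $r_2$ on the degree-$2$ node $d$ whose unique degree-$2$ neighbour is $e$. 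Then $r_2$ is the robot with a degree-$2$ adjacent node, the only legal move is $d\to e$, the start configuration (call it $A$, positions $\{b,d\}$) must reach the target ($B$, positions $\{b,e\}$), and both robots must then stay forever.

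First I would record the decisive symmetry. The embedding of Fig.~\ref{fig:moveOnce} admits a reflection $\sigma$ fixing $b$ and exchanging the two degree-$3$ nodes as well as $d\leftrightarrow e$; hence $\sigma$ is an automorphism carrying $A$ onto $B$, sending $r_2$'s position $d$ to $e$ while fixing $r_1$'s position $b$. Because robots are disoriented and anonymous, and — this is exactly where the model matters — in $\mathcal{FCOM}$ they are oblivious and cannot read their own light, the action $r_2$ computes depends only on (i) the two-robot shape it sees, up to isometry, and (ii) the colour of $r_1$'s light. Thus there is a well-defined map $\mathrm{act}(\ell)\in\{\text{move},\text{stay}\}$: if $r_2$, seeing the shape with $r_1$'s light equal to $\ell$, moves to its degree-$2$ neighbour, then by $\sigma$-equivariance it also moves to its degree-$2$ neighbour when $r_1$'s light is $\ell$, regardless of whether it currently sits at $d$ (moving to $e$) or at $e$ (moving back to $d$).

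Next I would locate a reachable state in which $r_2$ is committed to move. Since the fully-synchronous schedule is a fair \textsc{Ssync} schedule, $\mathcal{A}$ must solve the task on it; by correctness $r_1$ never leaves $b$ and $r_2$ eventually performs its first move $d\to e$, say entering round $t$. Let $\ell^\ast$ be the colour of $r_1$'s light at the start of round $t$; then $\mathrm{act}(\ell^\ast)=\text{move}$, and the state $X$ just before round $t$ (positions $\{b,d\}$, $r_1$-light $\ell^\ast$) is realised by the length-$(t-1)$ prefix of that run. Now I deviate from this prefix: in the next round I activate only $r_2$, so it moves $d\to e$ while $r_1$, being inactive, keeps light $\ell^\ast$; in the following round I activate only $r_2$ again. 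Its view is now shape $B$ with $r_1$-light still $\ell^\ast$, which by the previous paragraph forces a move, i.e.\ $r_2$ returns $e\to d$. Thus $r_2$ has moved twice, violating clause~(1) of \textsc{moveOnce}; extending the schedule afterwards by round-robin activation keeps it fair, contradicting the assumed correctness of $\mathcal{A}$.

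The main obstacle is the middle step: turning the ``$r_2$ cannot tell $A$ from $B$'' intuition into a rigorous $\sigma$-equivariance statement, and pinning down that it is precisely the $\mathcal{FCOM}$ restriction (oblivious, own light invisible) that erases every bit $r_2$ could have used to detect that it has already moved — in contrast with Lemma~\ref{lm:MCfsta}, where an internal state records the move. The remaining care is scheduling: the colour $\ell^\ast$ must stay frozen across the two consecutive activations of $r_2$, which is legitimate in \textsc{Ssync} because the adversary may refrain from activating $r_1$ for finitely many rounds without breaking fairness.
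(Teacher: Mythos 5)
Your proposal is correct and takes essentially the same approach as the paper's own proof: both run the fully-synchronous schedule until the mover's first move, then deviate by not activating the stationary robot at that round so its external light stays frozen, and then invoke the reflection automorphism of the chorded five-cycle (together with disorientation, obliviousness, and the mover's inability to see its own light) to conclude that the mover's next view is indistinguishable from its pre-move view, forcing a second, contradictory move. Your write-up is in fact more explicit than the paper's terse version---naming the automorphism $\sigma$, stating the equivariance claim precisely, and checking fairness of the deviated schedule---but the underlying argument is identical.
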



\begin{theorem}\label{th:allS<lumiS}
    $M^{S}<\mathcal{LUMI}^{S}$ for all  $M\in\{\mathcal{FSTA,\ FCOM}\}$.
\end{theorem}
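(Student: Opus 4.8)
The plan is to prove the two inequalities $\mathcal{FSTA}^S < \mathcal{LUMI}^S$ and $\mathcal{FCOM}^S < \mathcal{LUMI}^S$ separately. In each case the non-strict part $\mathcal{LUMI}^S \ge M^S$ is already supplied by Lemma~\ref{lm:trivial}(2)--(3), so the whole task reduces to exhibiting, for each $M \in \{\mathcal{FSTA}, \mathcal{FCOM}\}$, a single problem that is solvable in $\mathcal{LUMI}^S$ but not in $M^S$. The crucial design choice is that the two witnesses must be different: the separating problem has to be one whose solution needs exactly the capability that $M$ lacks relative to $\mathcal{LUMI}$ --- persistent communication in the case of $\mathcal{FSTA}$, and persistent memory in the case of $\mathcal{FCOM}$.

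For $\mathcal{FCOM}^S < \mathcal{LUMI}^S$ I would take the witness to be \textsc{moveOnce}. On the positive side, Lemma~\ref{lm:MCfsta} gives $\textsc{moveOnce} \in \mathcal{FSTA}^A(R)$ for every $R \in \mathcal{R}_2$; chaining the scheduler monotonicity $\mathcal{FSTA}^S \ge \mathcal{FSTA}^A$ and the model monotonicity $\mathcal{LUMI}^S \ge \mathcal{FSTA}^S$ from Lemma~\ref{lm:trivial}, this lifts to solvability of \textsc{moveOnce} in $\mathcal{LUMI}^S$. On the negative side, Lemma~\ref{lm:MCfcom} exhibits an $R \in \mathcal{R}_2$ with $\textsc{moveOnce} \notin \mathcal{FCOM}^S(R)$, so \textsc{moveOnce} is not universally solvable in $\mathcal{FCOM}^S$. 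Together these two facts yield the strict separation.

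For $\mathcal{FSTA}^S < \mathcal{LUMI}^S$ I would instead take $\neg IL$, since \textsc{moveOnce} is useless here (being solvable in $\mathcal{FSTA}$ already, it cannot separate $\mathcal{FSTA}^S$ from $\mathcal{LUMI}^S$). Lemma~\ref{lm:fcomIL} gives $\neg IL \in \mathcal{FCOM}^A(R)$ for all $R \in \mathcal{R}_3$, and the same two monotonicity steps (now $\mathcal{FCOM}^S \ge \mathcal{FCOM}^A$ and $\mathcal{LUMI}^S \ge \mathcal{FCOM}^S$) give $\neg IL \in \mathcal{LUMI}^S$. For the unsolvability side I would invoke Lemma~\ref{lm:fstaIL}, which provides some $R \in \mathcal{R}_3$ with $\neg IL \notin \mathcal{FSTA}^F(R)$. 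The point that needs the most care is that here the monotonicity must be used in its \emph{contrapositive} direction: from $\mathcal{FSTA}^F \ge \mathcal{FSTA}^S$ and the $\mathcal{FSTA}^F$-impossibility I conclude $\neg IL \notin \mathcal{FSTA}^S(R)$ (were it $\mathcal{FSTA}^S$-solvable it would a fortiori be $\mathcal{FSTA}^F$-solvable, a contradiction). This step --- pushing an $\mathcal{FSTA}^F$ impossibility \emph{down} to $\mathcal{FSTA}^S$ --- reverses the usual direction in which one transports solvability, and it is the one place where the argument could be misapplied; everything else is a routine chaining of Lemma~\ref{lm:trivial}.
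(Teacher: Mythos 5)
Your proposal is correct and follows essentially the same route as the paper: the paper also separates $\mathcal{FSTA}^S$ from $\mathcal{LUMI}^S$ via $\neg IL$ (Lemma~\ref{lm:fstaIL} and Lemma~\ref{lm:fcomIL}) and $\mathcal{FCOM}^S$ from $\mathcal{LUMI}^S$ via \textsc{moveOnce} (Lemma~\ref{lm:MCfsta} and Lemma~\ref{lm:MCfcom}). The only difference is that you spell out the scheduler/model monotonicity chains from Lemma~\ref{lm:trivial} (including the contrapositive step pushing the $\mathcal{FSTA}^F$ impossibility down to $\mathcal{FSTA}^S$), which the paper leaves implicit.
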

\begin{proof}
    In Lemma~\ref{lm:fstaIL} and Lemma~\ref{lm:fcomIL} we showed that there exists a problem that is solvable in $\mathcal{LUMI}^S$ but not solvable in $\mathcal{FSTA}^S$. Then, in Lemma~\ref{lm:MCfsta} and Lemma~\ref{lm:MCfcom} we showed that there exists a problem that is solvable in $\mathcal{LUMI}^S$ but not solvable in $\mathcal{FCOM}^S$. Hence the result follows.
\end{proof}

\paragraph*{\bf Orthogonality of $\mathcal{FSTA}^{S}$ and $\mathcal{FCOM}^{S}$}
We define a new problem named OSP in the following definition.

\begin{definition}\label{def:OSP}
    Suppose, three robots are placed at distinct positions on an infinite discrete line having Configuration $\mathcal{A}$ (Fig.~\ref{fig:C1}). The OSP problem asks the robots to move such that it creates the following sequence of configurations:
    <$\mathcal{B}$, $\mathcal{C}$, $\mathcal{B}$, $\mathcal{A}$, $\mathcal{B}$, $\mathcal{C}$, $\mathcal{B}$, $\mathcal{A}$, ...  >, where configuration $\mathcal{B}$ and $\mathcal{C}$ are given in Fig.~\ref{fig:C2} and Fig.~\ref{fig:C3} respectively.
\end{definition}

 \begin{figure}[ht!]
         \centering
         \includegraphics[width=0.65\textwidth]{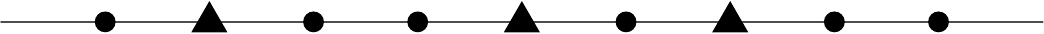}
         \caption{Configuration $\mathcal{A}$}
         \label{fig:C1}
     \end{figure}
     \begin{figure}[ht!]
         \centering
         \includegraphics[width=0.65\textwidth]{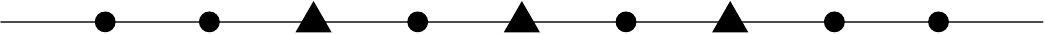}
         \caption{Configuration $\mathcal{B}$}
         \label{fig:C2}
     \end{figure}
     \begin{figure}[ht!]
         \centering
         \includegraphics[width=0.65\textwidth]{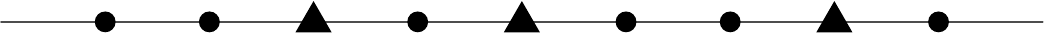}
         \caption{Configuration $\mathcal{C}$}
         \label{fig:C3}
     \end{figure}

We show that this problem cannot be solved under semi-synchronous scheduler.

\begin{lemma}\label{lm:fstaOSP}
  $\exists R\in\mathcal{R}_3,\ \textsc{OSP}\not 
\in \mathcal{FSTA}^S(R)$.    
\end{lemma}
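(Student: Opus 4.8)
The plan is to argue by contradiction: assume some algorithm solves $\textsc{OSP}$ in $\mathcal{FSTA}^S$ for the three robots of the problem, and derive a contradiction by exhibiting a fair semi-synchronous schedule under which the prescribed sequence cannot be maintained. The starting observation is that the configuration $\mathcal{B}$ occurs twice in every period of the target sequence $\langle\mathcal{A},\mathcal{B},\mathcal{C},\mathcal{B},\mathcal{A},\dots\rangle$: once as the successor of $\mathcal{A}$, where the next configuration must be $\mathcal{C}$, and once as the successor of $\mathcal{C}$, where the next configuration must be $\mathcal{A}$. Since the robot positions in these two occurrences of $\mathcal{B}$ are identical, and since in the $\mathcal{FSTA}$ model a robot's computed move is a function only of the snapshot (relative positions) together with its own internal state, the two occurrences can be distinguished solely by the robots' internal states. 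Hence any correct algorithm must carry, at $\mathcal{B}$, internal ``phase/direction'' information that tells the robots whether to advance toward $\mathcal{C}$ or to return toward $\mathcal{A}$.

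First I would make precise why this internal bookkeeping is adequate under \textsc{Fsync} but exploitable under \textsc{Ssync}, so that the argument is genuinely \textsc{Ssync}-specific and does not accidentally rule out $\mathcal{FSTA}^F$. Under \textsc{Fsync} all robots activate every round, so their internal states advance in lockstep and can implement a counter modulo the period; the two occurrences of $\mathcal{B}$ then carry different state vectors and are disambiguated. The lemma's point is that this lockstep is exactly what the semi-synchronous adversary can destroy: a robot that is not activated in a round neither moves nor updates its internal state, and---crucially in $\mathcal{FSTA}$---no robot can read another robot's state, so any ``phase'' information frozen inside a non-activated robot is invisible to the rest.

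The heart of the proof is then a desynchronization argument. I would design a fair semi-synchronous schedule that drives the system to $\mathcal{B}$ twice with \emph{identical relative positions and identical internal-state vectors}, yet in two contexts demanding opposite continuations (one must proceed to $\mathcal{C}$, the other must return to $\mathcal{A}$). Because each robot's output at $\mathcal{B}$ is a deterministic function of (relative positions, own state), and both inputs coincide in the two situations, the robots are forced to make the same move in both, so at least one of the two required transitions out of $\mathcal{B}$ is violated---the contradiction. Concretely, I would use the adversary's ability to selectively activate only the robot(s) that physically displace between consecutive configurations while freezing the remaining robot(s); this advances the positions along the intended sequence while holding back the state that would otherwise distinguish the two phases. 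Combined with the finiteness of the state set, a pigeonhole/pumping argument over the infinitely many visits to $\mathcal{B}$ then yields a $\mathcal{B}_{\to\mathcal{C}}$ visit and a $\mathcal{B}_{\to\mathcal{A}}$ visit whose full (position, state) vectors agree.

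The main obstacle I anticipate is precisely this construction: showing that the adversary has enough control to force the two state vectors at $\mathcal{B}$ to collapse onto one another despite opposite required behavior, while keeping the schedule fair and consistent with the algorithm having correctly produced all preceding configurations. A naive pumping argument will not do, since it would spuriously rule out $\mathcal{FSTA}^F$ as well; the schedule must therefore genuinely exploit the asymmetry of \textsc{Ssync} activations---freezing the very robot that would increment the distinguishing state---to merge the two phases. For contrast, this is exactly where $\mathcal{FCOM}$ escapes the obstruction: a designated robot can expose the current direction through its externally visible, persistent light, which remains readable by all robots even while they are idle, so the phase information survives the adversary's desynchronization. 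This is the intuition for why $\textsc{OSP}$ is expected to be solvable in $\mathcal{FCOM}^S$, whereas the argument above shows it is not solvable in $\mathcal{FSTA}^S$.
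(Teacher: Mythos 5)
Your central insight is the same as the paper's: under \textsc{Ssync} the adversary can decline to activate a robot, and in $\mathcal{FSTA}$ a frozen robot neither updates nor reveals its internal state, so the ``phase'' information that distinguishes the two occurrences of $\mathcal{B}$ can be withheld from the rest of the system. But the two concrete steps you build on this insight both have genuine gaps. First, your schedule (``activate only the robots that physically displace, freeze the rest, throughout'') does not justify the claim that ``this advances the positions along the intended sequence.'' The paper ports robot behaviour from the all-active execution $E$, and that porting is valid only because its perturbation is minimal: it freezes only $r_1$, and only during the window $\mathcal{B}\to\mathcal{C}\to\mathcal{B}$ in which $r_1$ does not move in $E$ anyway; every other robot then receives literally the same input sequence (positions plus its own state) as in $E$ and behaves identically, so progress is guaranteed. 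Under your schedule, already in the second phase $r_2$ reaches $\mathcal{B}$ carrying its held-back initial state, a (state, view) history matching no execution you control; nothing guarantees it ever moves, and the fallback required to keep the schedule fair (eventually reactivating everyone) releases exactly the states you meant to hold back.

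Second, and more fundamentally, the pigeonhole endgame does not deliver the collision you need. Pigeonhole over infinitely many visits to $\mathcal{B}$ yields repeated state vectors \emph{within} the family of $\mathcal{B}_{\to\mathcal{C}}$ visits and within the family of $\mathcal{B}_{\to\mathcal{A}}$ visits, but never a collision \emph{across} the two families: the robot that must move between a $\mathcal{B}_{\to\mathcal{C}}$ visit and the following $\mathcal{B}_{\to\mathcal{A}}$ visit is necessarily activated in between, so its state evolution cannot be frozen, and if (say) the robots' states at successive visits alternate with period two, the two families occupy disjoint sets of full vectors forever; no fair schedule forces them to intersect. The paper needs no such collision: after its single freeze it obtains a \emph{mixed} state at the second $\mathcal{B}$ ($r_1$ carries its first-$\mathcal{B}$ state while $r_2$ carries its second-$\mathcal{B}$ state) and closes the argument by cases --- either no robot ever moves (the sequence stalls, violating OSP), or $r_2$ (or the middle robot) moves first, creating a configuration out of sequence, or $r_1$ moves after some $m$ activations, in which case replaying those $m$ activations of $r_1$ starting from the \emph{first} $\mathcal{B}$, with $r_2$ frozen for $m$ rounds, forces the identical move at a point where $\mathcal{C}$ is required instead. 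Your proposal is missing this mixed-state case analysis (or an equivalent replay argument), and the collision it relies on instead need not exist.
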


Next, we present an algorithm \textsc{AlgoOSP} that solves the OSP in $\mathcal{FCOM}^S$. Suppose each robot has an external light that is initially set \texttt{off} color and can take the color \texttt{colN}, \texttt{colF}, and \texttt{colT}. The middle robot does nothing and its color remains \texttt{off}. We denote the color variable of a robot $r$ as $color.r$.

\begin{algorithm}[ht!]
\caption{\footnotesize\textsc{\textsc{AlgoOSP}}: for a generic robot $r$ where $r'$ is the other terminal robot; $d$ and $d'$ are respectively the distance of $r$ and $r'$  from middle robot}\label{algo:OSP} 
\footnotesize
    \uIf{$color.r'$=\texttt{off}}
    {
        \uIf{$d>d'$}
            {
                $color.r=$ \texttt{colN}\;
                $r$ moves towards the middle robot\;
            }
        \ElseIf{$d<d'$}
            {
                $color.r=$ \texttt{off}\;
            }
    }
    \uElseIf{$color.r'$ = \texttt{colN}}
    {
        \uIf{$d=d'$}
        {
            $color.r=$ \texttt{colN}\;
            $r$ moves away from the middle robot\;
        }
        \ElseIf{$d<d'$}
        {
            $color.r=$ \texttt{colF}\;    
        }
    }
    \uElseIf{$color.r'$ = \texttt{colF} and $d> d'$ }
    {
        $color.r=$ \texttt{colT}\; 
        $r$ moves towards the middle robot\;
    }
    \ElseIf{$color.r'$ = \texttt{colT} and $d=d'$}
    {
        $r$ moves away from the middle robot\;
        $color.r=$ \texttt{off}\; 
    }
    \end{algorithm}

\begin{lemma}\label{lm:fcomOSP}
    The \textsc{AlgoOSP} solves the OSP in model $\mathcal{FCOM}^S$.
\end{lemma}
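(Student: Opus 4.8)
The plan is to verify directly that every fair \textsc{Ssync} execution of \textsc{AlgoOSP} started from Configuration $\mathcal{A}$ produces the required sequence. First I would observe that each robot can read its role from the snapshot: the \emph{middle} robot is the unique one seeing a robot on each side of it, whereas a \emph{terminal} robot sees both of the others on one side; the middle robot never acts and keeps color \texttt{off}, so only the two terminals need analysis. For a terminal $r$ the quantities $d$, $d'$ (its own and the other terminal's distance to the middle) and the external color $color.r'$ are all computable from the snapshot, so the guards of Algorithm~\ref{algo:OSP} are well defined; moreover every guard reads only $color.r'$ and the distances, never $r$'s own light, so the algorithm is legitimate in $\mathcal{FCOM}$. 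Reading off the figures, $\mathcal{A}$ has the terminals at distances $d_0+1$ and $d_0$ from the middle, $\mathcal{B}$ at $d_0$ and $d_0$, and $\mathcal{C}$ at $d_0$ and $d_0+1$. Because the middle never moves and the terminals stay on opposite sides at distance at least $d_0\ge 1$, no two robots ever collide or swap roles, and I would show the gap $|d-d'|$ never exceeds $1$; hence the only position configurations that can appear are $\mathcal{A}$, $\mathcal{B}$ and $\mathcal{C}$.

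The core of the argument is a \emph{single-token invariant}. I would describe the global state by the pair of terminal colors together with the sign of $d-d'$, and show that the reachable states form a cycle of length six,
\[
s_0\to s_1\to s_2\to s_3\to s_4\to s_5\to s_0,
\]
where $s_0$ is $\mathcal{A}$ with both terminal colors \texttt{off}, and the remaining states carry terminal colors $(\texttt{colN},\texttt{off})$, $(\texttt{colN},\texttt{colN})$, $(\texttt{colF},\texttt{colN})$, $(\texttt{colF},\texttt{colT})$, and $(\texttt{off},\texttt{colT})$. For each state I would check, by reading off the branches of Algorithm~\ref{algo:OSP}, that exactly one terminal has an \emph{effective} action (it changes its color and/or moves one hop) while the other terminal is either idle (no guard matches) or performs a \emph{no-op} (it re-sets the color it already carries and does not move). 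Tracking positions along this cycle yields $\mathcal{A},\mathcal{B},\mathcal{C},\mathcal{C},\mathcal{B},\mathcal{A},\mathcal{A},\dots$, and after collapsing the two color-only ``stutter'' steps (the effective action changes a color but leaves all positions fixed) the sequence of distinct position configurations is exactly $\mathcal{A},\mathcal{B},\mathcal{C},\mathcal{B},\mathcal{A},\dots$, as demanded.

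Next I would invoke the single-token invariant to handle the semi-synchronous adversary. Since in every state at most one terminal can change the configuration and that robot never moves more than one hop, any activated subset produces one of only two outcomes: if the enabled robot is in the subset the state advances one step along the cycle, and otherwise the state is unchanged (the non-enabled robot only ever executes a no-op or nothing). In particular the two terminals are never simultaneously moving, so no spurious configuration can be created even when the scheduler activates both terminals in the same round; I would confirm this by evaluating both branches on the start-of-round colors in each of the six states. Fairness then completes the proof: the enabled robot of the current state is activated infinitely often, so the execution cannot stall in any state and must traverse the six-state cycle forever, forming the target sequence \emph{ad infinitum}.

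The step I expect to be the main obstacle is precisely this \textsc{Ssync} robustness check, that is, establishing the single-token invariant and verifying that simultaneous activation of both terminals never corrupts the color handshake. The delicate point is that the protocol relies on the gap $|d-d'|$ being exactly $1$ so that the guards $d=d'$, $d>d'$, and $d<d'$ fire in the intended order. I would therefore fold into the invariant the statement that whenever a terminal carries \texttt{colN} or \texttt{colT} its partner's distance is such that only the intended branch is enabled; this is what keeps the gap pinned at $1$ and prevents a robot from ``racing ahead'' while its partner has not yet acknowledged through its external light.
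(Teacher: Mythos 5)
Your proposal is correct, and its core is the same computation the paper performs: tracing the branches of Algorithm~\ref{algo:OSP} around the cycle $\mathcal{A}\to\mathcal{B}\to\mathcal{C}\to\mathcal{B}\to\mathcal{A}$ (including the two color-only ``stutter'' steps, which the paper's references to lines 11--12 and 5--6 correspond to). The difference is in what is actually proved. The paper's proof is a happy-path trace that implicitly assumes every round advances the handshake --- in effect an \textsc{Fsync} argument --- and never discusses what happens when the \textsc{Ssync} adversary activates only one terminal, or both, in a given round, nor where fairness enters. Your single-token invariant is precisely the missing ingredient: by exhibiting the six reachable (color, position) states $(\texttt{off},\texttt{off})$, $(\texttt{colN},\texttt{off})$, $(\texttt{colN},\texttt{colN})$, $(\texttt{colF},\texttt{colN})$, $(\texttt{colF},\texttt{colT})$, $(\texttt{off},\texttt{colT})$ and checking that in each one exactly one terminal has an effective action while the other is idle or a no-op, you show that every admissible activation pattern either advances the cycle by one step or leaves the state unchanged, so no spurious configuration can ever be formed, and fairness alone forces infinite progress. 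Your observation that the guards read only $color.r'$ and the distances (never the robot's own light) is also worth making explicit, since it is what certifies the algorithm as legitimate $\mathcal{FCOM}$. In short, your route subsumes the paper's proof and repairs its main weakness: the paper's trace shows the intended execution exists, whereas your invariant shows it is the only execution the \textsc{Ssync} scheduler can produce, which is what the lemma actually asserts.
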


\begin{theorem}\label{th:fstaSvs.FcomS}
$\mathcal{FSTA}^{S}\perp\mathcal{FCOM}^{S}$.    
\end{theorem}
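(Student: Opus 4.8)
The plan is to establish the orthogonality $\mathcal{FSTA}^{S}\perp\mathcal{FCOM}^{S}$ by exhibiting one problem solvable in $\mathcal{FCOM}^S$ but not in $\mathcal{FSTA}^S$, and another problem solvable in $\mathcal{FSTA}^S$ but not in $\mathcal{FCOM}^S$. By the definition of orthogonality, I need both $\mathcal{FSTA}(S)\not\supseteq\mathcal{FCOM}(S)$ and $\mathcal{FCOM}(S)\not\supseteq\mathcal{FSTA}(S)$, so two separating problems pointing in opposite directions suffice. Fortunately, the lemmas proved earlier in the paper already supply exactly these two separating witnesses, so the proof is essentially a matter of assembling them correctly.

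For the first direction, I would use the problem \textsc{OSP}. By Lemma~\ref{lm:fstaOSP}, there exists $R\in\mathcal{R}_3$ with $\textsc{OSP}\notin\mathcal{FSTA}^S(R)$, so $\textsc{OSP}\notin\mathcal{FSTA}(S)$. By Lemma~\ref{lm:fcomOSP}, the algorithm \textsc{AlgoOSP} solves \textsc{OSP} in $\mathcal{FCOM}^S$, so $\textsc{OSP}\in\mathcal{FCOM}(S)$. This already gives a problem in $\mathcal{FCOM}(S)\setminus\mathcal{FSTA}(S)$, hence $\mathcal{FSTA}(S)\not\supseteq\mathcal{FCOM}(S)$.

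For the reverse direction, I would use the problem $\neg IL$. By Lemma~\ref{lm:fstaIL}, there exists $R\in\mathcal{R}_3$ with $\neg IL\notin\mathcal{FSTA}^F(R)$; but I must be careful here, since that lemma is about the \textsc{Fsync} scheduler, and I need the $\mathcal{FSTA}^S$ (un)solvability. Since $\mathcal{FSTA}^F\ge\mathcal{FSTA}^S$ by Lemma~\ref{lm:trivial}, unsolvability in the more powerful \textsc{Fsync} variant immediately implies unsolvability in \textsc{Ssync}, so $\neg IL\notin\mathcal{FSTA}(S)$. On the other side, Lemma~\ref{lm:fcomIL} gives $\neg IL\in\mathcal{FCOM}^A(R)$ for all $R\in\mathcal{R}_3$; applying Lemma~\ref{lm:trivial} again (now $\mathcal{FCOM}^S\ge\mathcal{FCOM}^A$), solvability under the weaker \textsc{Async} scheduler yields solvability under \textsc{Ssync}, so $\neg IL\in\mathcal{FCOM}(S)$. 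Wait—this shows $\neg IL$ separates in the \emph{same} direction as \textsc{OSP}, which is not what I want; for the reverse inclusion failure I need a problem solvable in $\mathcal{FSTA}^S$ but not in $\mathcal{FCOM}^S$, which is precisely \textsc{moveOnce}. I would instead invoke Lemma~\ref{lm:MCfsta}, giving $\textsc{moveOnce}\in\mathcal{FSTA}^A(R)\subseteq\mathcal{FSTA}(S)$ for all $R\in\mathcal{R}_2$, together with Lemma~\ref{lm:MCfcom}, giving $\textsc{moveOnce}\notin\mathcal{FCOM}(S)$, so $\mathcal{FCOM}(S)\not\supseteq\mathcal{FSTA}(S)$.

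The proof is conceptually straightforward once the two correct witnesses are chosen, so the main obstacle is purely bookkeeping: matching each separating problem to the correct direction of inclusion and correctly translating the scheduler-specific lemmas (which are stated variously for \textsc{Fsync} or \textsc{Async}) into statements about \textsc{Ssync} via the monotonicity chain $M^F\ge M^S\ge M^A$ of Lemma~\ref{lm:trivial}. Combining the two halves, $\mathcal{FSTA}(S)\not\supseteq\mathcal{FCOM}(S)$ and $\mathcal{FCOM}(S)\not\supseteq\mathcal{FSTA}(S)$, yields the claimed orthogonality $\mathcal{FSTA}^{S}\perp\mathcal{FCOM}^{S}$.
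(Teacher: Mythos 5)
Your proof is correct and is essentially the same as the paper's: both use \textsc{OSP} (Lemmas~\ref{lm:fstaOSP} and~\ref{lm:fcomOSP}) as the witness in $\mathcal{FCOM}(S)\setminus\mathcal{FSTA}(S)$ and \textsc{moveOnce} (Lemmas~\ref{lm:MCfsta} and~\ref{lm:MCfcom}) as the witness in $\mathcal{FSTA}(S)\setminus\mathcal{FCOM}(S)$. The only difference is that you make the scheduler-monotonicity translations via Lemma~\ref{lm:trivial} explicit (and should drop the exploratory detour through $\neg IL$ from the final write-up), whereas the paper leaves those translations implicit.
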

\begin{proof}
    In Lemma~\ref{lm:MCfsta} and Lemma~\ref{lm:MCfcom}, we showed that there exists a problem that is solvable in $\mathcal{FSTA}^S$ but not solvable in $\mathcal{FCOM}^S$. In  Lemma~\ref{lm:fstaOSP} and Lemma~\ref{lm:fcomOSP} we showed that there exists a problem that is solvable in $\mathcal{FCOM}^S$ but not solvable in $\mathcal{FSTA}^S$. Thus the result follows.
\end{proof}





\section{Cross-comparison in FSYNC and SSYNC}\label{comp_FS}

First, we define a modified version of the Stand Up Indulgent Rendezvous (SUIR) problem in the following.

\begin{definition}\label{def:suir}
  Let two robots be placed on the endpoints of a line graph of three nodes (Fig.~\ref{fig:suir}). One of the robots may crash (i.e., stop working completely) at any time, and it does not activate ever after. The problem SUIR asks the robots to meet at the middle node (node with degree two) unless one of the robots crashes before gathering. Otherwise, the non-crashed robot moves to the crashed robot's position. A robot cannot identify whether a robot has crashed or not from the snapshot. 
\end{definition}

\begin{figure}[ht!]
    \centering
    \includegraphics[width=.18\linewidth]{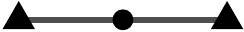}
    \caption{An image regarding problem SIUR}
    \label{fig:suir}
\end{figure}

 \begin{lemma}\label{lm:SUIRlumi}
    $\exists R\in\mathcal{R}_2, \ SUIR
\not\in \mathcal{LUMI}^S(R)$.
\end{lemma}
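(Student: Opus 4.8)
The plan is to assume, for contradiction, that some algorithm $\mathcal{A}$ solves SUIR in $\mathcal{LUMI}^S$ for the two-robot instance on the path $v_0\!-\!v_1\!-\!v_2$, with $r_1$ at $v_0$, $r_2$ at $v_2$, and $v_1$ the middle node, and then to build a fair \textsc{Ssync} execution on which $\mathcal{A}$ fails. The whole construction rests on the fact, stated in the problem, that a robot cannot tell from its snapshot whether another robot has crashed: a crashed robot and a live-but-momentarily-unactivated robot are frozen in exactly the same way (same node, same persistent light). I will exploit this to keep one robot's view identical across a ``crash'' world and a ``no-crash'' world, which is what lets the argument survive the full power of $\mathcal{LUMI}$.

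First I would set up a symmetry reduction. Starting from the mirror-symmetric initial configuration (equal initial lights, positions $v_0,v_2$ exchanged by the reflection that fixes $v_1$), I run the schedule fully synchronously. Since the robots are anonymous, disoriented and deterministic, the reflection is preserved round by round, so every reachable configuration is mirror-symmetric; the only such configurations are ``one robot on each endpoint'' and ``both on $v_1$'', and from the former a symmetric step is either stay--stay or a simultaneous hop of both robots into $v_1$ (each endpoint has $v_1$ as its unique neighbour). Because the no-crash requirement forces gathering at $v_1$, there is a round $t$ at which both robots sit on the endpoints with a common light $\ell$ and $\mathcal{A}$ sends both into $v_1$. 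This pins down the decisive moment and the light $\ell$.

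Next I would branch at round $t$ into two worlds that $r_1$ cannot separate. In the \emph{crash} world the adversary crashes $r_2$ just before round $t$ and activates only $r_1$, which hops to $v_1$ and sets some light $\ell'$; $r_2$ stays frozen at $(v_2,\ell)$ forever. In the \emph{no-crash} world the adversary (legally, under \textsc{Ssync}) activates only $r_1$ at round $t$, leaving the live $r_2$ idle at $(v_2,\ell)$. Both worlds produce the same configuration $C^\ast=(r_1\text{ at }v_1\text{ with }\ell';\,r_2\text{ at }v_2\text{ with }\ell)$, and, as long as $r_2$ is not activated, $r_1$'s snapshots coincide in the two worlds. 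Now in the crash world only $r_1$ ever runs, so by correctness $\mathcal{A}$ must eventually drive $r_1$ to $v_2$ (the crashed position) and keep it there; let this solo trajectory be $\sigma$, during which $r_1$ always observes $r_2$ frozen at $(v_2,\ell)$. Since those observations are identical in the no-crash world, I would have the adversary \emph{replay} $\sigma$ there (activating only $r_1$), which carries both robots to co-locate at the endpoint $v_2$ while $r_2$ is still alive. This is exactly where the requirement ``meet at the \emph{middle}'' is essential: ``both at $v_2$'' is an accepted terminal for the crash world but a forbidden resting place for the no-crash world.

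Finally I would close the argument by turning the last observation into a non-termination/regress. From the co-located configuration $D=(\text{both at }v_2)$, the crash world forces $r_1$ to stay at $v_2$, so by indistinguishability $r_1$ also stays in the no-crash world; hence the only way $\mathcal{A}$ can still head for $v_1$ is for $r_2$ to leave, producing a configuration $(r_2\text{ at }v_1;\,r_1\text{ at }v_2)$ that is the mirror image of $C^\ast$ with the roles of the robots swapped. Re-invoking the same crash-indistinguishability (now pretending $r_1$ crashed at $v_2$) drives the new middle robot back to the endpoint, returning to a ``both at $v_2$'' configuration. Iterating, the adversary produces a fair execution in which some robot repeatedly abandons the middle, so ``both at $v_1$'' is never reached and maintained, contradicting no-crash correctness; and if at any stage the middle robot never leaves $v_1$ in its solo run, then the corresponding crash world already violates the crash requirement, giving the contradiction immediately. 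The main obstacle I anticipate is precisely the light bookkeeping in this last step: because $\mathcal{LUMI}$ lets robots encode arbitrary finite history in their lights, I must argue that freezing the inactive robot keeps the two worlds genuinely indistinguishable at every repetition and that the resulting infinite schedule is fair, so that no clever light-based ``recovery'' lets $\mathcal{A}$ stabilise the robots on $v_1$.
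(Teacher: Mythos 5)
Your steps 1--4 are sound: the \textsc{Fsync} symmetry argument pins down a round $t$ at which both robots would hop into $v_1$, the branch into a crash world and a no-crash world producing the identical configuration $C^\ast$ is legitimate (crashes may occur at any time, and starving a live robot is legal under \textsc{Ssync}), and replaying the crash-world solo trajectory $\sigma$ in the no-crash world correctly forces both robots onto the endpoint $v_2$. This is a genuinely different route from the paper's: the paper never co-locates the robots; it uses alternating solo activations to reach a one-hop configuration, then argues via the crash threat that \emph{each} robot must eventually be willing to move on its own, and has the scheduler activate both exactly when both are poised to move, so they swap endpoints forever. Your strategy instead drives the system to the \emph{wrong} meeting node and tries to trap it there.

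The gap is in your final iteration step, and it is exactly the obstacle you flagged without resolving. From $D$ (both robots at $v_2$), your dichotomy is ``$r_1$ is pinned at $v_2$, so either $r_2$ leaves for $v_1$ or the algorithm never gathers.'' But $r_2$ has a third option: stay at $v_2$ and \emph{change its light} (say to an ``I am alive'' colour). This single light change destroys the indistinguishability you anchored just before round $t$: from then on $r_1$'s snapshot differs from the crash world in which $r_2$ is frozen with light $\ell$, so nothing any longer forces $r_1$ to remain at $v_2$, and the algorithm may legitimately proceed to gather at $v_1$ (e.g.\ $r_1$ moves to $v_1$ upon seeing the new colour, then $r_2$ follows). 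Your written argument prescribes no adversary response to this case, so it does not yield a contradiction. The fix requires a new ingredient you did not state: re-anchor a \emph{fresh} crash world at every configuration reached --- since the crash time is adversarial, ``$r_2$ crashes immediately after setting its new light at $v_2$'' is a legal instance, so from \emph{any} configuration with both robots at $v_2$ a sufficiently long solo run of either robot must eventually keep it permanently at $v_2$ (here ``permanently'' also repairs your overstated claim that the crash world forces $r_1$ to \emph{stay}: correctness only forces it to settle eventually, which you can detect via finiteness of the joint position--light state space, as a deterministic solo run enters a cycle and correctness forces that cycle to sit at $v_2$). With that, the adversary can alternate long solo phases of $r_1$ and $r_2$, each phase ending with the active robot settled at $v_2$ while the other is frozen there, producing a fair, crash-free execution in which the two robots are never simultaneously at $v_1$ --- the contradiction you wanted. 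Without this re-anchoring and settling argument, the proof as written fails against algorithms that signal liveness through lights.
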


\begin{lemma}\label{lm:SUIRoblot}
    $\forall R\in\mathcal{R}_2, \ SUIR
\in \mathcal{OBLOT}^F(R)$.
\end{lemma}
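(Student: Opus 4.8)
The plan is to exhibit an explicit oblivious rule and then verify it against every possible crash schedule, the whole argument hinging on the observation that under \textsc{Fsync} a one-hop configuration can \emph{only} be produced by a crash. First I would fix notation: call the three nodes of the line graph $v_0, v_1, v_2$ with $v_1$ the unique degree-two middle node, and let the two robots of $R\in\mathcal{R}_2$ start on the endpoints $v_0$ and $v_2$ (Fig.~\ref{fig:suir}). Since the robots have full visibility, each sees the whole three-node graph together with the other robot's position, so it can always identify $v_1$ and the direction toward the other robot. The candidate $\mathcal{OBLOT}$ algorithm depends only on the current snapshot (no memory, no lights): if the other robot is two hops away, move toward $v_1$; if the other robot is exactly one hop away, move onto the node the other robot occupies; if the two robots share a node, stay put.

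Next I would analyze the crash-free execution. Starting from the initial distance-two configuration, in the first \textsc{Fsync} round both robots are activated, both move toward $v_1$, and both land on $v_1$; they are gathered at the middle and thereafter stay put. The decisive point to extract from this is that in a crash-free run the hop distance jumps directly from $2$ to $0$, so a distance-one configuration \emph{never} occurs. Consequently, if a distance-one configuration is ever observed, exactly one robot must have failed to execute its intended move, i.e. that robot has crashed. I would then argue that at such a configuration the crashed robot sits at an endpoint and the survivor at $v_1$: for the crashed robot to be on $v_1$ it would have had to complete its first-round move to $v_1$, but then the (non-crashed) survivor would also have moved to $v_1$ in that same round, putting both on $v_1$ and contradicting ``distance one.'' Hence at distance one the survivor, applying the rule ``move onto the node the other robot occupies,'' steps onto the crashed robot's endpoint, achieving gathering at the crashed robot's position exactly as SUIR requires.

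Finally I would dispatch the crash-timing cases to show these are the only possibilities: a crash before round $1$ (or after the look but before the move of round $1$) leaves the crashed robot on its endpoint while the survivor advances to $v_1$, producing the distance-one configuration handled above; if the adversary lets the crashing robot's first hop complete, both robots reach $v_1$ and the task is met at the middle before the crash takes effect; and a crash occurring after gathering is harmless because the stay-put rule keeps the gathered robots together. Since only one robot may crash, a configuration with two live robots one hop apart is never reached, which is crucial: two live robots obeying ``move onto the other's node'' would merely swap and oscillate forever. I expect the main obstacle to be precisely this reachability bookkeeping, namely proving rigorously that no ``bad'' distance-one configuration (two survivors adjacent, or a crashed robot on $v_1$ with the survivor on an endpoint) can arise, so that the memoryless survivor is never misled about which node to target. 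This is exactly where full synchrony is indispensable and where the contrast with Lemma~\ref{lm:SUIRlumi} lives. As the algorithm uses neither persistent state nor communication and works for an arbitrary $R\in\mathcal{R}_2$, this establishes $SUIR\in\mathcal{OBLOT}^F(R)$.
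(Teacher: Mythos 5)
Your proposal is correct and takes essentially the same approach as the paper: the same oblivious rule (``move toward the other robot,'' which gathers both robots at the middle node in the first crash-free \textsc{Fsync} round, and otherwise lets the survivor reach the crashed robot's endpoint within two rounds), verified by a case analysis on when the crash occurs. Your write-up is considerably more detailed than the paper's brief argument---in particular the explicit check that a distance-one configuration with two live robots can never arise---but the algorithm and the underlying reasoning coincide.
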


\begin{theorem}\label{th:allS<allF}
    $M^{S}<M^{F}$ for all $M\in\{\mathcal{OBLOT,\ FSTA,\ FCOM,\ LUMI}\}$.
\end{theorem}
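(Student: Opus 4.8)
The plan is to separate the non-strict part from the strictness part and then combine them. The relation $M^F\ge M^S$ holds for every $M$ for free: it is exactly part~(1) of Lemma~\ref{lm:trivial}. Hence the whole content of the theorem is to upgrade each $\ge$ to a strict $>$, that is, to exhibit for each of the four models a problem solvable in $M^F$ but not in $M^S$. The key observation is that one problem---SUIR---can serve as a uniform separator for all four models simultaneously, provided I transport it across the model hierarchy using the vertical monotonicity relations, parts~(2) and~(3) of Lemma~\ref{lm:trivial}.

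First I would establish solvability in FSYNC for every model. By Lemma~\ref{lm:SUIRoblot}, SUIR is solvable in $\mathcal{OBLOT}^F$. Because $\mathcal{OBLOT}$ is the weakest model, Lemma~\ref{lm:trivial}(2)--(3) give $M^F\ge\mathcal{OBLOT}^F$ for every $M\in\{\mathcal{FSTA},\mathcal{FCOM},\mathcal{LUMI}\}$, so $\text{SUIR}\in M(F)$ for all four models. Dually, I would establish unsolvability in SSYNC for every model. By Lemma~\ref{lm:SUIRlumi}, SUIR is not solvable in $\mathcal{LUMI}^S$. Because $\mathcal{LUMI}$ is the strongest model, Lemma~\ref{lm:trivial}(2)--(3) give $\mathcal{LUMI}^S\ge M^S$, i.e.\ $M(S)\subseteq\mathcal{LUMI}(S)$, for every $M$; hence $\text{SUIR}\notin M(S)$ for all four models. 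Combining the two conclusions, for each $M$ we have $\text{SUIR}\in M(F)\setminus M(S)$, which together with $M(F)\supseteq M(S)$ gives $M(F)\supsetneq M(S)$, i.e.\ $M^S<M^F$, as required.

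Finally, a word on where the difficulty lies. The deduction above merely propagates one solvability fact and one unsolvability fact through the model lattice, so the theorem itself is routine once the two SUIR lemmas are available; the substantive obstacle is therefore Lemma~\ref{lm:SUIRlumi}, showing that even the strongest model fails under a semi-synchronous adversary. The intuition I would exploit is that the SSYNC scheduler can withhold activation of one robot for arbitrarily long, rendering a not-yet-activated robot indistinguishable from a crashed one; any deterministic commitment by the other robot (advancing toward the middle node, or toward the far endpoint) can then be punished by an adversarial choice of which robot actually crashes, and persistent lights cannot encode whether the partner has genuinely stopped. That indistinguishability argument---not the bookkeeping of the present theorem---is the real work.
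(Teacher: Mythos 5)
Your proof is correct and takes essentially the same approach as the paper: both use SUIR as the single uniform separator, deriving solvability in every $M^F$ from Lemma~\ref{lm:SUIRoblot} and unsolvability in every $M^S$ from Lemma~\ref{lm:SUIRlumi}. The only difference is cosmetic---you spell out the monotonicity steps through Lemma~\ref{lm:trivial} that the paper's one-line proof leaves implicit.
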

\begin{proof}
    In Lemma~\ref{lm:SUIRlumi} and Lemma~\ref{lm:SUIRoblot} we showed that the problem SUIR is solvable in $M^F$ for all $M\in\{\mathcal{OBLOT,\ FSTA,\ FCOM,\ LUMI}\}$ but not solvable in $M^S$ for all $M\in\{\mathcal{OBLOT,\ FSTA,\ FCOM,\ LUMI}\}$. Thus, the result follows.
\end{proof}

\begin{theorem}\label{th:oblotFvs.allS}
     $\mathcal{OBLOT}^{F}\perp M^{S}$ for all $M\in\{\mathcal{FSTA,\ FCOM,\ LUMI}\}$.
 \end{theorem}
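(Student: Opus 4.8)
The plan is to establish the two non-dominations that together constitute orthogonality: $\mathcal{OBLOT}^F\not\ge M^S$ and $M^S\not\ge\mathcal{OBLOT}^F$, for each $M\in\{\mathcal{FSTA},\mathcal{FCOM},\mathcal{LUMI}\}$. For each direction I would exhibit a separating problem already handled by an earlier lemma and then propagate its (un)solvability through the trivial inclusions of Lemma~\ref{lm:trivial}.

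For the direction $M^S\not\ge\mathcal{OBLOT}^F$ I would use the problem SUIR uniformly across all three models. By Lemma~\ref{lm:SUIRoblot}, SUIR is solvable in $\mathcal{OBLOT}^F$, while by Lemma~\ref{lm:SUIRlumi} it is not solvable in $\mathcal{LUMI}^S$. Since Lemma~\ref{lm:trivial} gives $\mathcal{LUMI}^S\ge\mathcal{FSTA}^S$ and $\mathcal{LUMI}^S\ge\mathcal{FCOM}^S$, unsolvability in $\mathcal{LUMI}^S$ forces unsolvability in $\mathcal{FSTA}^S$ and $\mathcal{FCOM}^S$ as well. Hence SUIR witnesses $M^S\not\ge\mathcal{OBLOT}^F$ simultaneously for all three $M$.

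For the direction $\mathcal{OBLOT}^F\not\ge M^S$ I would split into cases, because no single problem in hand separates all three models at once. For $M=\mathcal{FSTA}$ and $M=\mathcal{LUMI}$ I would use \textsc{moveOnce}: Lemma~\ref{lm:MCfsta} places it in $\mathcal{FSTA}^A$, and Lemma~\ref{lm:trivial} lifts this to $\mathcal{FSTA}^S$ (via $\mathcal{FSTA}^S\ge\mathcal{FSTA}^A$) and then to $\mathcal{LUMI}^S$ (via $\mathcal{LUMI}^S\ge\mathcal{FSTA}^S$), while Lemma~\ref{lm:MCoblot} certifies that it is not solvable in $\mathcal{OBLOT}^F$. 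For $M=\mathcal{FCOM}$ this argument fails, since Lemma~\ref{lm:MCfcom} shows \textsc{moveOnce} is \emph{not} solvable in $\mathcal{FCOM}^S$. I would therefore switch to the $\neg IL$ problem: Lemma~\ref{lm:fcomIL} places it in $\mathcal{FCOM}^A\subseteq\mathcal{FCOM}^S$, while Lemma~\ref{lm:fstaIL} rules it out of $\mathcal{FSTA}^F$, and since $\mathcal{FSTA}^F\ge\mathcal{OBLOT}^F$ (Lemma~\ref{lm:trivial}) this transfers to $\neg IL\notin\mathcal{OBLOT}^F$.

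The delicate point, and the step I expect to be the main obstacle, is precisely this $\mathcal{FCOM}$ case. Because \textsc{moveOnce} is unsolvable in $\mathcal{FCOM}^S$, the natural uniform witness breaks, and I must supply a second separating problem whose $\mathcal{OBLOT}^F$-unsolvability is certified only \emph{indirectly}: here by borrowing the $\mathcal{FSTA}^F$ impossibility of $\neg IL$ and pushing it down to $\mathcal{OBLOT}^F$ through the memory hierarchy $\mathcal{FSTA}^F\ge\mathcal{OBLOT}^F$. Confirming that $\neg IL$ lies genuinely outside $\mathcal{OBLOT}^F$ (and not merely outside $\mathcal{FSTA}^F$) is the one nonroutine inference; everything else is bookkeeping with Lemma~\ref{lm:trivial}. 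Assembling both directions yields $\mathcal{OBLOT}^F\perp M^S$ for every $M\in\{\mathcal{FSTA},\mathcal{FCOM},\mathcal{LUMI}\}$.
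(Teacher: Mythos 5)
Your proposal is correct and follows essentially the same route as the paper's own proof: SUIR (Lemmas~\ref{lm:SUIRoblot} and \ref{lm:SUIRlumi}) for the direction $M^S\not\ge\mathcal{OBLOT}^F$, \textsc{moveOnce} (Lemmas~\ref{lm:MCoblot} and \ref{lm:MCfsta}) for $\mathcal{FSTA}$ and $\mathcal{LUMI}$, and $\neg IL$ (Lemmas~\ref{lm:fstaIL} and \ref{lm:fcomIL}) for the $\mathcal{FCOM}$ case. The only difference is that you spell out the propagation steps through Lemma~\ref{lm:trivial} (in particular, pushing $\neg IL\notin\mathcal{FSTA}^F$ down to $\mathcal{OBLOT}^F$), which the paper leaves implicit.
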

\begin{proof}
    In Lemma~\ref{lm:SUIRlumi} and Lemma~\ref{lm:SUIRoblot}, we showed that the problem SUIR is solvable in $\mathcal{OBLOT}^F$ but not solvable in $M^S$ for all $M\in\{\mathcal{FSTA,\ FCOM,\ LUMI}\}$. In Lemma~\ref{lm:MCoblot} and Lemma~\ref{lm:MCfsta} we showed that there exists a problem that solves a problem in $\mathcal{FSTA}^S$ and thus in $\mathcal{LUMI}^S$ but not solvable in $\mathcal{OBLOT}^F$. In Lemma~\ref{lm:fstaIL} and Lemma~\ref{lm:fcomIL}, we showed that there exists a problem that is solvable in $\mathcal{FCOM}^S$ but not solvable in $\mathcal{OBLOT}^F$. Thus, the result follows. 
\end{proof}

 \begin{theorem}\label{th:allFvs.oblotS}
     $ \mathcal{OBLOT}^{S}<M^{F}$ for all $M\in\{\mathcal{FSTA,\ FCOM,\ LUMI}\}$.
 \end{theorem}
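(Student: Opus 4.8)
The plan is to establish the strict inequality $\mathcal{OBLOT}^S < M^F$ by verifying its two defining halves separately: the containment $\mathcal{OBLOT}(S) \subseteq M(F)$, and a strict separation witnessed by a single problem lying in $M(F)$ but not in $\mathcal{OBLOT}(S)$, handled uniformly for all three $M\in\{\mathcal{FSTA},\mathcal{FCOM},\mathcal{LUMI}\}$. Since both ingredients are already available from the \textsc{Fsync} section together with the scheduler monotonicity of Lemma~\ref{lm:trivial}, this theorem is essentially a corollary rather than a fresh argument.

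For the containment I would simply chain the relations of Lemma~\ref{lm:trivial}. Part~(1) gives $\mathcal{OBLOT}^F \ge \mathcal{OBLOT}^S$, while parts~(2)--(3) give $\mathcal{FSTA}^F \ge \mathcal{OBLOT}^F$, $\mathcal{FCOM}^F \ge \mathcal{OBLOT}^F$, and hence $\mathcal{LUMI}^F \ge \mathcal{OBLOT}^F$ as well. Composing, $M^F \ge \mathcal{OBLOT}^F \ge \mathcal{OBLOT}^S$ for each of the three models, which is exactly $\mathcal{OBLOT}(S) \subseteq M(F)$; no new construction is required.

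For the strict separation I would reuse the problem \textsc{moveOnce}. On the positive side, Lemma~\ref{lm:MCfsta} shows $\textsc{moveOnce}\in\mathcal{FSTA}^A(R)$ for every $R\in\mathcal{R}_2$; applying part~(1) of Lemma~\ref{lm:trivial} ($\mathcal{FSTA}^F\ge\mathcal{FSTA}^S\ge\mathcal{FSTA}^A$) upgrades this to $\mathcal{FSTA}^F$, and $\mathcal{LUMI}^F\ge\mathcal{FSTA}^F$ places it in $\mathcal{LUMI}^F$, while Theorem~\ref{th:fcomLumi} ($\mathcal{FCOM}^F\equiv\mathcal{LUMI}^F$) places it in $\mathcal{FCOM}^F$; thus $\textsc{moveOnce}\in M(F)$ for all three $M$. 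On the negative side, Lemma~\ref{lm:MCoblot} exhibits an $R\in\mathcal{R}_2$ with $\textsc{moveOnce}\notin\mathcal{OBLOT}^F(R)$; since $\mathcal{OBLOT}(S)\subseteq\mathcal{OBLOT}(F)$ by Lemma~\ref{lm:trivial}(1), unsolvability in the stronger $\mathcal{OBLOT}^F$ forces unsolvability in the weaker $\mathcal{OBLOT}^S$, so $\textsc{moveOnce}\notin\mathcal{OBLOT}(S)$. Combined with the containment, this yields $\mathcal{OBLOT}(S)\subsetneq M(F)$, i.e.\ $\mathcal{OBLOT}^S<M^F$.

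There is no genuine mathematical obstacle here; the only point demanding care is the direction of the implication in the negative half. One should not attempt to re-prove unsolvability of \textsc{moveOnce} in $\mathcal{OBLOT}^S$ from scratch, but instead observe that $\mathcal{OBLOT}^S$ is \emph{weaker} than $\mathcal{OBLOT}^F$, so that $\mathcal{OBLOT}(S)$ is a subset of $\mathcal{OBLOT}(F)$ and a problem absent from the larger set is automatically absent from the smaller one. This is precisely why the already-established $\mathcal{OBLOT}^F$ impossibility transfers downward to $\mathcal{OBLOT}^S$ at no extra cost.
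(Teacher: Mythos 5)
Your proposal is correct, but it routes the argument differently from the paper. The paper proves this theorem as a one-line corollary of Theorem~\ref{th:oblotS<allS} ($\mathcal{OBLOT}^{S}<M^{S}$), implicitly composing with $M^{S}\le M^{F}$ from Lemma~\ref{lm:trivial}(1); unfolded, that earlier theorem uses \emph{two} witness problems, \textsc{moveOnce} for the $\mathcal{FSTA}/\mathcal{LUMI}$ cases and $\neg IL$ (Lemmas~\ref{lm:fstaIL} and~\ref{lm:fcomIL}) for the $\mathcal{FCOM}$ case. You instead place the strict separation at the \textsc{Fsync} level with \textsc{moveOnce} as the single uniform witness: solvability in $\mathcal{FSTA}^{F}$ and $\mathcal{LUMI}^{F}$ via Lemmas~\ref{lm:MCfsta} and~\ref{lm:trivial}, solvability in $\mathcal{FCOM}^{F}$ via the equivalence $\mathcal{FCOM}^{F}\equiv\mathcal{LUMI}^{F}$ (Theorem~\ref{th:fcomLumi}), and unsolvability in $\mathcal{OBLOT}^{S}$ by transferring the $\mathcal{OBLOT}^{F}$ impossibility of Lemma~\ref{lm:MCoblot} downward through $\mathcal{OBLOT}(S)\subseteq\mathcal{OBLOT}(F)$. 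This mirrors the structure of the paper's Theorem~\ref{th:oblotf<fstaf} rather than Theorem~\ref{th:oblotS<allS}. Both derivations are sound and rest on already-established results; yours buys a self-contained argument that never touches the $\neg IL$ problem, while the paper's buys maximal brevity by reusing its \textsc{Ssync} separation wholesale. Your closing caution about the direction of the impossibility transfer (absence from the larger set $\mathcal{OBLOT}(F)$ implies absence from the smaller set $\mathcal{OBLOT}(S)$) is exactly the right point to flag, and you apply it correctly.
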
 
 \begin{proof}
     From Theorem~\ref{th:oblotS<allS} this result follows.
 \end{proof}

 \begin{theorem}\label{th:lumiF>allS}
    $M^{S}<\mathcal{LUMI}^{F}$ for all $M\in\{\mathcal{FSTA,\ FCOM}\}$.
\end{theorem}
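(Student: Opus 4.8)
The plan is to establish the two strict relations $\mathcal{FSTA}^{S}<\mathcal{LUMI}^{F}$ and $\mathcal{FCOM}^{S}<\mathcal{LUMI}^{F}$ independently, and in each case to decompose the claim into (i) the containment $M(S)\subseteq\mathcal{LUMI}(F)$ and (ii) the existence of a problem that lies in $\mathcal{LUMI}(F)$ but not in $M(S)$. Both relations are then obtained by assembling lemmas already available in the excerpt, so the real content is bookkeeping: tracking the scheduler-power and model-power chains of Lemma~\ref{lm:trivial} and checking that the quantifiers over the witness robot sets line up.

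For the containment half, I would apply Lemma~\ref{lm:trivial} twice. Part~(1) gives $M^{S}\le M^{F}$ and part~(2) gives $M^{F}\le\mathcal{LUMI}^{F}$ (since $\mathcal{LUMI}^{F}\ge\mathcal{FSTA}^{F}$ and $\mathcal{LUMI}^{F}\ge\mathcal{FCOM}^{F}$). Composing the two, since the relation is transitive at the level of problem classes, yields $M(S)\subseteq\mathcal{LUMI}(F)$ for both $M\in\{\mathcal{FSTA},\mathcal{FCOM}\}$, i.e. the ``not less powerful'' direction.

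For strictness when $M=\mathcal{FSTA}$, I would reuse the problem $\neg IL$. Lemma~\ref{lm:fcomIL} gives $\neg IL\in\mathcal{FCOM}^{A}(R)$ for every $R\in\mathcal{R}_3$; chaining $\mathcal{FCOM}^{A}\le\mathcal{FCOM}^{S}\le\mathcal{FCOM}^{F}\le\mathcal{LUMI}^{F}$ through Lemma~\ref{lm:trivial} places $\neg IL$ in $\mathcal{LUMI}(F)$. On the other side, Lemma~\ref{lm:fstaIL} exhibits an $R\in\mathcal{R}_3$ with $\neg IL\notin\mathcal{FSTA}^{F}(R)$; since for each fixed robot set the \textsc{Fsync} schedules form a subfamily of the \textsc{Ssync} schedules, solvability under \textsc{Ssync} entails solvability under \textsc{Fsync}, so the contrapositive gives $\neg IL\notin\mathcal{FSTA}^{S}(R)$ for the same $R$. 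Hence $\neg IL$ witnesses the proper inclusion $\mathcal{FSTA}(S)\subsetneq\mathcal{LUMI}(F)$. The case $M=\mathcal{FCOM}$ is symmetric with $\textsc{moveOnce}$ in place of $\neg IL$: Lemma~\ref{lm:MCfsta} gives $\textsc{moveOnce}\in\mathcal{FSTA}^{A}(R)$ for all $R\in\mathcal{R}_2$, whence $\textsc{moveOnce}\in\mathcal{LUMI}(F)$ by Lemma~\ref{lm:trivial}, while Lemma~\ref{lm:MCfcom} produces an $R\in\mathcal{R}_2$ with $\textsc{moveOnce}\notin\mathcal{FCOM}^{S}(R)$.

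I do not expect a genuine obstacle, as the theorem is essentially a corollary bundling the separating problems $\neg IL$ and $\textsc{moveOnce}$ with Lemma~\ref{lm:trivial}. The only point requiring care, and therefore the closest thing to a ``hard step,'' is the quantifier matching: the solvability lemmas are universally quantified over $\mathcal{R}_n$ while the unsolvability lemmas are existentially quantified, so one must verify that the single robot set returned by the unsolvability lemma is exactly the kind of witness needed to keep the problem out of the lower class, and that the monotonicity of Lemma~\ref{lm:trivial} is applied at the per-robot-set level (not merely at the problem-class level) when transferring $\neg IL\notin\mathcal{FSTA}^{F}(R)$ to $\neg IL\notin\mathcal{FSTA}^{S}(R)$.
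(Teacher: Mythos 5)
Your proof is correct, but it routes the argument differently from the paper, most visibly in the $\mathcal{FCOM}$ half. The paper's proof is a two-line citation: $\mathcal{FSTA}^{S}<\mathcal{LUMI}^{F}$ is read off from Theorem~\ref{th:fstaFvsallF} (which, like your argument, ultimately rests on the $\neg IL$ separation of Lemmas~\ref{lm:fstaIL} and~\ref{lm:fcomIL}, so there the two proofs coincide), while $\mathcal{FCOM}^{S}<\mathcal{LUMI}^{F}$ is obtained by combining Theorem~\ref{th:allS<allF} (i.e.\ $\mathcal{FCOM}^{S}<\mathcal{FCOM}^{F}$, whose witness is the crash-tolerant problem SUIR of Lemmas~\ref{lm:SUIRlumi} and~\ref{lm:SUIRoblot}) with the equivalence $\mathcal{FCOM}^{F}\equiv\mathcal{LUMI}^{F}$ of Theorem~\ref{th:fcomLumi}. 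You instead separate $\mathcal{FCOM}^{S}$ from $\mathcal{LUMI}^{F}$ with \textsc{moveOnce} (Lemmas~\ref{lm:MCfsta} and~\ref{lm:MCfcom}), which keeps that half free of both SUIR and the nontrivial simulation result imported in Theorem~\ref{th:fcomLumi}; your version is therefore more self-contained and makes the containment-plus-witness structure, and the universal/existential quantifier bookkeeping over robot sets, fully explicit, whereas the paper's version is shorter because it reuses already-proved theorems. Your per-robot-set monotonicity step (transferring $\neg IL\notin\mathcal{FSTA}^{F}(R)$ to $\neg IL\notin\mathcal{FSTA}^{S}(R)$ because every \textsc{Fsync} execution is a fair \textsc{Ssync} execution) is sound and is exactly the argument behind Lemma~\ref{lm:trivial}(1). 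One cosmetic slip: the bound $\mathcal{FCOM}^{F}\le\mathcal{LUMI}^{F}$ comes from part~(3) of Lemma~\ref{lm:trivial}, not part~(2).
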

\begin{proof}
    $\mathcal{FSTA}^{S}<\mathcal{LUMI}^{F}$ follows from Theorem~\ref{th:fstaFvsallF} and $\mathcal{FCOM}^{S}<\mathcal{LUMI}^{F}$ follows from Theorem~\ref{th:allS<allF}.
\end{proof}

\begin{theorem}\label{th:fstaFvs.allS}
     $\mathcal{FSTA}^{F}\perp M^{S}$ for all $M\in\{\mathcal{FCOM,\ LUMI}\}$.
 \end{theorem}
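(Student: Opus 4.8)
The plan is to establish both non-containments demanded by the definition of orthogonality, namely $\mathcal{FSTA}^{F}(F)\not\supseteq M^{S}(S)$ and $M^{S}(S)\not\supseteq\mathcal{FSTA}^{F}(F)$, by reusing the two separating problems $\neg IL$ and $SUIR$ already analyzed earlier in the paper together with the monotonicity relations of Lemma~\ref{lm:trivial}. The whole argument is a bookkeeping exercise that combines existing lemmas; no new construction is needed.

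For the first direction I would exhibit a problem solvable in $M^{S}$ but not in $\mathcal{FSTA}^{F}$, and the natural candidate is $\neg IL$. Lemma~\ref{lm:fcomIL} gives $\neg IL\in\mathcal{FCOM}^{A}(R)$ for every $R\in\mathcal{R}_3$; applying $M^{S}\ge M^{A}$ from Lemma~\ref{lm:trivial} upgrades this to $\neg IL\in\mathcal{FCOM}^{S}$, and applying $\mathcal{LUMI}^{S}\ge\mathcal{FCOM}^{S}$ yields $\neg IL\in\mathcal{LUMI}^{S}$ as well, so $\neg IL\in M^{S}$ for both choices of $M$. On the other hand, Lemma~\ref{lm:fstaIL} shows $\neg IL\notin\mathcal{FSTA}^{F}$. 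Hence $\neg IL$ witnesses $\mathcal{FSTA}^{F}(F)\not\supseteq M^{S}(S)$.

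For the second direction I would exhibit a problem solvable in $\mathcal{FSTA}^{F}$ but not in $M^{S}$, using $SUIR$. Lemma~\ref{lm:SUIRoblot} gives $SUIR\in\mathcal{OBLOT}^{F}(R)$ for every $R\in\mathcal{R}_2$, and $\mathcal{FSTA}^{F}\ge\mathcal{OBLOT}^{F}$ from Lemma~\ref{lm:trivial} yields $SUIR\in\mathcal{FSTA}^{F}$. Lemma~\ref{lm:SUIRlumi} shows $SUIR\notin\mathcal{LUMI}^{S}$; since $\mathcal{LUMI}^{S}\ge\mathcal{FCOM}^{S}$, unsolvability in $\mathcal{LUMI}^{S}$ forces unsolvability in $\mathcal{FCOM}^{S}$, so $SUIR\notin M^{S}$ for both choices of $M$. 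Thus $SUIR$ witnesses $M^{S}(S)\not\supseteq\mathcal{FSTA}^{F}(F)$, and combining the two directions gives $\mathcal{FSTA}^{F}\perp M^{S}$.

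Because every ingredient is already established, I do not anticipate a genuine obstacle; the only point requiring care is tracking the direction of the inequalities from Lemma~\ref{lm:trivial} so that ``solvable in the weaker variant'' and ``unsolvable in the stronger variant'' are invoked consistently. In particular, one must note that unsolvability propagates downward (from $\mathcal{LUMI}^{S}$ to $\mathcal{FCOM}^{S}$) while solvability propagates upward (from $\mathcal{FCOM}^{A}$ to $\mathcal{FCOM}^{S}$, and from $\mathcal{OBLOT}^{F}$ to $\mathcal{FSTA}^{F}$), and to confirm that the $\exists R$ and $\forall R$ quantifiers in the cited lemmas line up with what the non-containments require.
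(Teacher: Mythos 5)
Your proposal is correct and matches the paper's own proof essentially step for step: both directions use the same witnesses ($\neg IL$ via Lemma~\ref{lm:fstaIL}/Lemma~\ref{lm:fcomIL}, and $SUIR$ via Lemma~\ref{lm:SUIRlumi}/Lemma~\ref{lm:SUIRoblot}), with solvability lifted and unsolvability pushed down exactly as the paper does implicitly. Your version merely makes the appeals to Lemma~\ref{lm:trivial} explicit, which is a presentational difference, not a different argument.
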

 \begin{proof}
     In Lemma~\ref{lm:fstaIL} and Lemma~\ref{lm:fcomIL}, we showed that there is a problem that is solvable in $\mathcal{FCOM}^S$, thus in $\mathcal{LUMI}^S$, but not solvable in $\mathcal{FSTA}^F$. In Lemma~\ref{lm:SUIRlumi} and Lemma~\ref{lm:SUIRoblot}, we showed that there is a problem that is solvable in $\mathcal{FSTA}^F$ but not solvable in $\mathcal{LUMI}^S$, thus not in $\mathcal{FCOM}^S$. Thus, the result follows.
 \end{proof}

 \begin{theorem}\label{th:fcomFvs.allS}
     $\mathcal{FCOM}^{F}> M^{S}$ for all $M\in\{\mathcal{FSTA,\ LUMI}\}$.
 \end{theorem}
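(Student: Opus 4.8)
The plan is to prove the statement separately for $M=\mathcal{LUMI}$ and $M=\mathcal{FSTA}$, and in each case to split the claim $\mathcal{FCOM}^{F}>M^{S}$ into its two constituent parts: first the \emph{not-less-powerful} direction $\mathcal{FCOM}^{F}\ge M^{S}$ (i.e. $\mathcal{FCOM}(F)\supseteq M(S)$), and then a \emph{strict separation}, namely exhibiting a problem lying in $\mathcal{FCOM}(F)$ but not in $M(S)$. Since all the genuinely technical work has already been carried out in the preceding lemmas, both parts reduce to chaining together results stated earlier in the excerpt; the witness problem for the strict separation will be $SUIR$.

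For $M=\mathcal{LUMI}$, I would obtain $\mathcal{FCOM}^{F}\ge\mathcal{LUMI}^{S}$ directly: by Theorem~\ref{th:fcomLumi} we have $\mathcal{FCOM}^{F}\equiv\mathcal{LUMI}^{F}$, and by part~(1) of Lemma~\ref{lm:trivial} $\mathcal{LUMI}^{F}\ge\mathcal{LUMI}^{S}$, so $\mathcal{FCOM}^{F}\ge\mathcal{LUMI}^{S}$. (Equivalently, $\mathcal{LUMI}^{S}<\mathcal{LUMI}^{F}\equiv\mathcal{FCOM}^{F}$ already follows by combining Theorem~\ref{th:allS<allF} with Theorem~\ref{th:fcomLumi}.) For the strict part I would use $SUIR$: Lemma~\ref{lm:SUIRoblot} gives $SUIR\in\mathcal{OBLOT}^{F}(R)$, and since $\mathcal{OBLOT}^{F}\le\mathcal{FCOM}^{F}$ by Lemma~\ref{lm:trivial}, we get $SUIR\in\mathcal{FCOM}(F)$; on the other hand Lemma~\ref{lm:SUIRlumi} gives $SUIR\notin\mathcal{LUMI}(S)$. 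Hence $\mathcal{FCOM}(F)\supsetneq\mathcal{LUMI}(S)$, that is $\mathcal{FCOM}^{F}>\mathcal{LUMI}^{S}$.

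For $M=\mathcal{FSTA}$, rather than re-argue from scratch I would simply propagate the strict inclusion just established down the hierarchy. From the previous paragraph $\mathcal{FCOM}(F)\supsetneq\mathcal{LUMI}(S)$, and by Lemma~\ref{lm:trivial} $\mathcal{LUMI}(S)\supseteq\mathcal{FSTA}(S)$. Since the separating witness $SUIR$ lies outside $\mathcal{LUMI}(S)$ it lies outside the smaller set $\mathcal{FSTA}(S)$ as well, so strictness is preserved through the chain $\mathcal{FCOM}(F)\supsetneq\mathcal{LUMI}(S)\supseteq\mathcal{FSTA}(S)$, yielding $\mathcal{FCOM}(F)\supsetneq\mathcal{FSTA}(S)$, i.e. $\mathcal{FCOM}^{F}>\mathcal{FSTA}^{S}$.

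I do not expect any substantive obstacle here, as this theorem is essentially a corollary of the machinery built earlier; the entire difficulty is front-loaded into Lemmas~\ref{lm:SUIRlumi} and~\ref{lm:SUIRoblot} (the constructions for $SUIR$) and into Theorem~\ref{th:fcomLumi}. The only point requiring mild care is the bookkeeping of \emph{strictness}: one must verify that the single witness $SUIR$ simultaneously sits inside $\mathcal{FCOM}(F)$ and outside \emph{both} $\mathcal{LUMI}(S)$ and $\mathcal{FSTA}(S)$, and that a strict inclusion followed by a (possibly non-strict) inclusion still yields a strict inclusion. Both facts are immediate once it is noted that $\mathcal{FSTA}(S)\subseteq\mathcal{LUMI}(S)$, so no new problem or algorithm needs to be designed.
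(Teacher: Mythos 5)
Your proof is correct, and it shares the paper's pivot: both use Theorem~\ref{th:fcomLumi} to identify $\mathcal{FCOM}^{F}$ with $\mathcal{LUMI}^{F}$, and both ultimately rest on the SUIR separation for strictness against $\mathcal{LUMI}^{S}$. The difference is in how the two cases are discharged. The paper cites two packaged theorems: $\mathcal{LUMI}^{F}>\mathcal{FSTA}^{S}$ from Theorem~\ref{th:lumiF>allS} (whose underlying witness, tracing back through Theorem~\ref{th:fstaFvsallF}, is the $\neg IL$ problem) and $\mathcal{LUMI}^{F}>\mathcal{LUMI}^{S}$ from Theorem~\ref{th:allS<allF} (whose witness is SUIR). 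You instead prove the $\mathcal{LUMI}$ case directly from Lemmas~\ref{lm:SUIRoblot} and~\ref{lm:SUIRlumi} together with Lemma~\ref{lm:trivial}, and then obtain the $\mathcal{FSTA}$ case by monotonicity, $\mathcal{FSTA}(S)\subseteq\mathcal{LUMI}(S)$, so the single witness SUIR serves both cases. Your bookkeeping is sound: SUIR lies in $\mathcal{OBLOT}(F)\subseteq\mathcal{FCOM}(F)$ and outside $\mathcal{LUMI}(S)$, hence outside the smaller class $\mathcal{FSTA}(S)$, and a strict inclusion composed with an inclusion remains strict. What each approach buys: the paper's proof is shorter on the page because it reuses already-proved theorems, but its $\mathcal{FSTA}$ branch silently depends on a second witness ($\neg IL$); your route is more self-contained at the lemma level, uses one uniform witness, and makes the strictness propagation explicit. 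Both are valid proofs of the statement.
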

 \begin{proof}
    Since from Theorem~\ref{th:fcomLumi} we have $\mathcal{FCOM}^{F}$ and $\mathcal{LUMI}^{F}$ are equivalent computationally. Therefore we need to show $\mathcal{LUMI}^{F}> M^{S}$ for all $M\in\{\mathcal{FSTA,\ LUMI}\}$. $\mathcal{LUMI}^{F}> \mathcal{FSTA}^{S}$ follows from Theorem~\ref{th:lumiF>allS} and $\mathcal{LUMI}^{F}> \mathcal{LUMI}^{S}$ follows from Theorem~\ref{th:allS<allF}.
 \end{proof}

\section{Comparisons under ASYNC scheduler}\label{sec:Async}
 In this section we investigate comparisons when one variant has scheduler \textsc{Fsync}. From a result of \cite{DSFN18}, we have the following Theorem~\ref{th:lumiSlumiA}.

 \begin{theorem}[\cite{DSFN18}]\label{th:lumiSlumiA}
     $\mathcal{LUMI}^{S}\equiv\mathcal{LUMI}^{A}$.
 \end{theorem}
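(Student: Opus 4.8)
The plan is to split the equivalence into the two containments $\mathcal{LUMI}^{S}\ge\mathcal{LUMI}^{A}$ and $\mathcal{LUMI}^{A}\ge\mathcal{LUMI}^{S}$. The first is immediate: by part~(1) of Lemma~\ref{lm:trivial}, $M^{S}\ge M^{A}$ for every model, so in particular $\mathcal{LUMI}^{S}\ge\mathcal{LUMI}^{A}$, i.e.\ $\mathcal{LUMI}(S)\supseteq\mathcal{LUMI}(A)$. Hence the entire content of the theorem lies in the reverse inclusion $\mathcal{LUMI}(A)\supseteq\mathcal{LUMI}(S)$, and this is exactly where the result from \cite{DSFN18} is invoked: one simulates an arbitrary \textsc{Ssync} algorithm under the \textsc{Async} scheduler by exploiting the persistent lights.

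For this reverse direction I would describe a light-based \emph{synchronizer}. Given an \textsc{Ssync} algorithm $\mathcal{A}$ with color set $C$, I would run the simulation with an enlarged constant color set $C'=C\times P$, where $P$ is a short cyclic set of \emph{phase labels} recording how far a robot has advanced inside the current simulated round (e.g.\ \emph{look}, \emph{commit}, \emph{move}, \emph{done}). The guiding principle is that a robot applies the underlying transition of $\mathcal{A}$ only when the phase labels it observes certify that the visible configuration is a legitimate snapshot of a fresh simulated round, that is, no other robot is midway through executing its own one-hop move for that same round. Whenever this certification fails, the robot performs a null move and merely advances its own phase label, deferring its actual $\mathcal{A}$-computation and its hop to a later activation. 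The discrete setting is a genuine simplification here: a robot is always located on a node and never stranded on an edge, so the ``in-transit'' ambiguity of the continuous case does not arise, and only the one-hop target needs to be tracked by the colors.

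The core argument is a correspondence lemma: every \textsc{Async} execution of the simulation induces a valid \textsc{Ssync} execution of $\mathcal{A}$ by collapsing each completed simulated round into a single \textsc{Ssync} round, and conversely every \textsc{Ssync} execution of $\mathcal{A}$ is realizable as some \textsc{Async} execution of the simulation; fairness of the \textsc{Async} scheduler guarantees that each simulated round eventually terminates so no robot is starved. The main obstacle is faithfulness under arbitrarily long compute and move phases: a robot may take its instantaneous look, then sit through a long compute phase while others move, so the snapshot it computed on can be stale by the time it would hop. The delicate point is to design the phase labels so that, purely from the visible colors, a robot can always detect whether the configuration it computed on is still the one in which it will move; this is precisely what forces the intermediate \emph{commit} phase before the hop. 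Verifying that this handshake never deadlocks and never produces a configuration transition that is impossible under \textsc{Ssync} is the heart of the proof and is what I would devote the most care to.
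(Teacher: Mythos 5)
The paper offers no internal proof of this theorem: it is imported wholesale from \cite{DSFN18} (``From a result of \cite{DSFN18}, we have the following Theorem''), so your decomposition---the trivial containment $\mathcal{LUMI}^{S}\ge\mathcal{LUMI}^{A}$ via Lemma~\ref{lm:trivial} plus invoking \cite{DSFN18} for the reverse inclusion---is exactly the paper's approach, and your proposal is correct as far as the paper is concerned. Your synchronizer sketch (phase labels, commit handshake, round-collapsing correspondence) is a fair outline of how such simulation results are actually proved, but note that it is an outline of \cite{DSFN18}'s argument rather than a proof in its own right: the deadlock-freedom and faithfulness verification you defer at the end is the entire technical content, so if the citation were not available you would not yet have a complete proof.
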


 \begin{theorem}\label{th:allAsyncvs.allFsync}
     $M^{A}<M^{F}$ for all $M\in\mathcal{M}$.
 \end{theorem}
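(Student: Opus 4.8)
The plan is to derive this as an immediate corollary of the separation already established between the semi-synchronous and fully-synchronous schedulers, using the trivial containment of \textsc{Async} inside \textsc{Ssync}. The key observation is that strictness need not be re-proved from scratch for the asynchronous case: a single task witnessing $M^S < M^F$ automatically witnesses $M^A < M^F$, because any problem that cannot be solved under \textsc{Ssync} a fortiori cannot be solved under the weaker \textsc{Async} scheduler.

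First I would invoke Lemma~\ref{lm:trivial}(1), which gives $M^F \ge M^S \ge M^A$, so in particular $M(F) \supseteq M(A)$; this settles the ``not less powerful'' half of the strict inequality for every model $M \in \mathcal{M}$. It then remains only to exhibit a task that lies in $M(F)$ but not in $M(A)$.

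For the separating witness I would reuse the SUIR problem. By Lemma~\ref{lm:SUIRoblot}, SUIR is solvable in $\mathcal{OBLOT}^F$, hence by Lemma~\ref{lm:trivial}(2) and (3) it is solvable in $M^F$ for every $M \in \mathcal{M}$. By Lemma~\ref{lm:SUIRlumi}, SUIR is not solvable in $\mathcal{LUMI}^S$; since $\mathcal{LUMI}$ is the strongest model, $M(S) \subseteq \mathcal{LUMI}(S)$ for all $M$, so $\textsc{SUIR} \notin M(S)$. Finally, because $M(A) \subseteq M(S)$ by Lemma~\ref{lm:trivial}(1), we conclude $\textsc{SUIR} \notin M(A)$. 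Combined with $\textsc{SUIR} \in M(F)$ this yields $M(F) \supsetneq M(A)$, i.e. $M^A < M^F$, for all $M \in \mathcal{M}$.

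Equivalently, and more briefly, one may simply chain Theorem~\ref{th:allS<allF} (which states $M^S < M^F$) with $M^A \le M^S$ from Lemma~\ref{lm:trivial}(1): any problem in $M(F) \setminus M(S)$ is automatically in $M(F) \setminus M(A)$. There is no substantive obstacle here; the only point to watch is the direction of the containments, namely that weakening the scheduler \emph{shrinks} the solvable set, so that an impossibility under \textsc{Ssync} propagates downward to \textsc{Async} rather than the reverse.
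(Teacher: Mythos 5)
Your proposal is correct and matches the paper's own proof, which likewise derives the result directly from Theorem~\ref{th:allS<allF} combined with the containment $M^A \le M^S$ of Lemma~\ref{lm:trivial}; your expanded SUIR argument merely unpacks what that theorem's proof (via Lemma~\ref{lm:SUIRlumi} and Lemma~\ref{lm:SUIRoblot}) already contains. No gaps.
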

 \begin{proof}
     From Theorem~\ref{th:allS<allF}, the result follows.
 \end{proof}
 \begin{theorem}\label{th:allAvs.lumiF}
     $M^{A}<M_1^{F}$ for all $M\in\mathcal{M}$ and $M_1\in\{\mathcal{LUMI},\ \mathcal{FCOM}\}$.
 \end{theorem}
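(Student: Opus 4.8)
The plan is to reduce the statement entirely to results already established, exploiting the fact that $M_1^{F}$ coincides with the maximal variant of the whole landscape. First I would note that by Theorem~\ref{th:fcomLumi} we have $\mathcal{FCOM}^{F}\equiv\mathcal{LUMI}^{F}$, so both admissible choices of $M_1$ yield the same computational power; it therefore suffices to prove $M^{A}<\mathcal{LUMI}^{F}$ for every $M\in\mathcal{M}$.

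For the non-strict containment $M(A)\subseteq\mathcal{LUMI}(F)$, I would chain the trivial relations of Lemma~\ref{lm:trivial} with Theorem~\ref{th:lumiSlumiA}. Since $\mathcal{LUMI}$ is the strongest model we have $M^{A}\le\mathcal{LUMI}^{A}$; then $\mathcal{LUMI}^{A}\equiv\mathcal{LUMI}^{S}$ by Theorem~\ref{th:lumiSlumiA}; and finally $\mathcal{LUMI}^{S}\le\mathcal{LUMI}^{F}$ again by Lemma~\ref{lm:trivial}. Composing these gives $M(A)\subseteq\mathcal{LUMI}(A)=\mathcal{LUMI}(S)\subseteq\mathcal{LUMI}(F)$.

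For strictness, I would exhibit the problem SUIR as a separating witness. By Lemma~\ref{lm:SUIRoblot} it is solvable in $\mathcal{OBLOT}^{F}$, hence in $\mathcal{LUMI}^{F}\equiv M_1^{F}$. On the other hand, Lemma~\ref{lm:SUIRlumi} shows it is not solvable in $\mathcal{LUMI}^{S}$; and since $M(A)\subseteq\mathcal{LUMI}(A)=\mathcal{LUMI}(S)$ by the very chain above, it is not solvable in $M^{A}$ either. Thus SUIR lies in $M_1(F)$ but not in $M(A)$, upgrading the containment to a strict one and yielding $M^{A}<M_1^{F}$.

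The argument carries no genuine obstacle; the only point requiring care is the bookkeeping of how a weak inequality, an equivalence, and the strict separation combine to produce a strict inequality. The strictness is driven entirely by the single gap $\mathcal{LUMI}^{S}<\mathcal{LUMI}^{F}$ witnessed by SUIR, and the role of Theorem~\ref{th:lumiSlumiA} is precisely to collapse the asynchronous regime onto the semi-synchronous one so that this gap applies uniformly to every model $M$.
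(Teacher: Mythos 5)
Your proof is correct, but it follows a different route than the paper's. The paper proves this theorem case by case on $M$: it implicitly uses $M^{A}\le M^{S}$ (Lemma~\ref{lm:trivial}) and then invokes three separate, previously established strict separations $M^{S}<\mathcal{LUMI}^{F}$ — Theorem~\ref{th:allS<allF} for $M=\mathcal{LUMI}$, Theorem~\ref{th:allFvs.oblotS} for $M=\mathcal{OBLOT}$, and Theorem~\ref{th:lumiF>allS} for $M\in\{\mathcal{FSTA},\mathcal{FCOM}\}$ — before applying Theorem~\ref{th:fcomLumi} to cover $M_1=\mathcal{FCOM}$. You instead collapse all four models uniformly through the single chain $M(A)\subseteq\mathcal{LUMI}(A)=\mathcal{LUMI}(S)\subseteq\mathcal{LUMI}(F)$, where the middle equality is the \cite{DSFN18} result (Theorem~\ref{th:lumiSlumiA}), and then obtain strictness once and for all from the SUIR witness (Lemmas~\ref{lm:SUIRoblot} and~\ref{lm:SUIRlumi}). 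Your decomposition is arguably cleaner: it needs no case analysis, makes the separating problem explicit, and isolates exactly where strictness comes from (the single gap $\mathcal{LUMI}^{S}<\mathcal{LUMI}^{F}$). What the paper's route buys in exchange is independence from the external collapse theorem $\mathcal{LUMI}^{S}\equiv\mathcal{LUMI}^{A}$: it reuses only the paper's own cross-comparison theorems, though those ultimately also rest on the same SUIR lemmas. Both arguments are sound; since the paper freely uses Theorem~\ref{th:lumiSlumiA} elsewhere in the same section, your reliance on it is unobjectionable.
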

 \begin{proof}
     From Theorem~\ref{th:allS<allF}, Theorem~\ref{th:allFvs.oblotS}, and Theorem~\ref{th:lumiF>allS}, $M^{A}<M_1^{F}$ for all $M\in\mathcal{M}$ and $M^{A}<\mathcal{LUMI}^{F}$ follows. From Theorem~\ref{th:fcomLumi}, $\mathcal{LUMI}^F\equiv\mathcal{FCOM}^F$, the rest part of the result follows.
 \end{proof}
 \begin{theorem}\label{th:oblotA<allSF}
   $\mathcal{OBLOT}^{A}<M^{K}$, for all $M\in\{\mathcal{FSTA,\ FCOM,\ LUMI}\}$ and $K\in\{\textsc{Ssync},\textsc{Fsync}\}$.
\end{theorem}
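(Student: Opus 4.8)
The plan is to split the claim into the two parts that together certify strict dominance: first establish the ``not less powerful'' relation $M^{K}\ge\mathcal{OBLOT}^{A}$, i.e. $\mathcal{OBLOT}(A)\subseteq M(K)$, and then exhibit a task lying in $M(K)$ but outside $\mathcal{OBLOT}(A)$, which upgrades the inclusion to a strict one and hence gives $\mathcal{OBLOT}^{A}<M^{K}$.

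For the inclusion direction I would simply chain the relations supplied by Lemma~\ref{lm:trivial}. Part~(1) gives $M^{F}\ge M^{S}\ge M^{A}$, so for every $K\in\{\textsc{Ssync},\textsc{Fsync}\}$ we have $M^{K}\ge M^{A}$. Parts~(2) and~(3) give $M^{A}\ge\mathcal{OBLOT}^{A}$ for each $M\in\{\mathcal{FSTA},\mathcal{FCOM},\mathcal{LUMI}\}$. Composing, $M^{K}\ge M^{A}\ge\mathcal{OBLOT}^{A}$, so $\mathcal{OBLOT}(A)\subseteq M(K)$ in every case.

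For strictness I would produce, for each pair $(M,K)$, a single separating problem. The driving observation is that, again by Lemma~\ref{lm:trivial}, $\mathcal{OBLOT}(A)\subseteq\mathcal{OBLOT}(F)$ and $\mathcal{OBLOT}(A)\subseteq\mathcal{OBLOT}(S)\subseteq\mathcal{FSTA}(S)$; hence any task already shown unsolvable in $\mathcal{OBLOT}^{F}$ or in $\mathcal{FSTA}^{S}$ is automatically unsolvable in $\mathcal{OBLOT}^{A}$. For $M\in\{\mathcal{FSTA},\mathcal{LUMI}\}$ and both values of $K$, I would use \textsc{moveOnce}: Lemma~\ref{lm:MCfsta} places it in $\mathcal{FSTA}(A)$, so by Lemma~\ref{lm:trivial} it lies in $\mathcal{FSTA}(K)$ and in $\mathcal{LUMI}(K)$, while Lemma~\ref{lm:MCoblot} keeps it out of $\mathcal{OBLOT}(F)$ and therefore out of $\mathcal{OBLOT}(A)$. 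For $M=\mathcal{FCOM}$ with $K=\textsc{Fsync}$, \textsc{moveOnce} still works: by Theorem~\ref{th:fcomLumi} it is solvable in $\mathcal{FCOM}(F)=\mathcal{LUMI}(F)$ while remaining outside $\mathcal{OBLOT}(A)$.

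The delicate case---and the main obstacle---is $M=\mathcal{FCOM}$ with $K=\textsc{Ssync}$, since Lemma~\ref{lm:MCfcom} shows \textsc{moveOnce} is \emph{not} in $\mathcal{FCOM}(S)$, so it cannot serve as a witness there. For this case I would instead invoke the oscillation task OSP: Lemma~\ref{lm:fcomOSP} gives $\textsc{OSP}\in\mathcal{FCOM}(S)$, whereas Lemma~\ref{lm:fstaOSP} gives $\textsc{OSP}\notin\mathcal{FSTA}(S)$; combined with $\mathcal{OBLOT}(A)\subseteq\mathcal{FSTA}(S)$, this yields $\textsc{OSP}\notin\mathcal{OBLOT}(A)$. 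This supplies the remaining separation, and assembling all four situations establishes $\mathcal{OBLOT}^{A}<M^{K}$ for every $M\in\{\mathcal{FSTA},\mathcal{FCOM},\mathcal{LUMI}\}$ and $K\in\{\textsc{Ssync},\textsc{Fsync}\}$.
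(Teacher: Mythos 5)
Your proposal is correct, but it is organized differently from the paper's own proof, which is a one-liner: the paper simply invokes Theorem~\ref{th:oblotS<allS} ($\mathcal{OBLOT}^{S}<M^{S}$) and implicitly chains $\mathcal{OBLOT}(A)\subseteq\mathcal{OBLOT}(S)\subsetneq M(S)\subseteq M(K)$ via Lemma~\ref{lm:trivial}, so its separating witnesses are the ones hidden inside that earlier theorem, namely \textsc{moveOnce} for $\mathcal{FSTA}/\mathcal{LUMI}$ and $\neg IL$ for $\mathcal{FCOM}$ (Lemmas~\ref{lm:fcomIL} and~\ref{lm:fstaIL}). You instead rebuild the statement as a self-contained case analysis, which forces you to confront explicitly the one delicate case (that \textsc{moveOnce} cannot witness the $\mathcal{FCOM}^{S}$ separation, by Lemma~\ref{lm:MCfcom}) and you patch it with \textsc{OSP} via Lemmas~\ref{lm:fcomOSP} and~\ref{lm:fstaOSP}; note that $\neg IL$ would have served just as well and more cheaply, since $\neg IL\in\mathcal{FCOM}(A)\subseteq\mathcal{FCOM}(S)$ while $\neg IL\notin\mathcal{FSTA}(F)\supseteq\mathcal{OBLOT}(A)$, which is exactly the paper's implicit choice. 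What the paper's route buys is brevity, at the cost of leaving unstated the (easy but necessary) fact that strict inclusion composes with inclusion on both sides; what your route buys is a complete, explicit record of which problem separates which pair and of the inclusion $M(K)\supseteq M(A)\supseteq\mathcal{OBLOT}(A)$, so nothing is left to the reader. Both arguments are sound.
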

\begin{proof}
    From Theorem~\ref{th:oblotS<allS} the result follows.
\end{proof}
\begin{theorem}\label{th:lumiAvs.allS}
    $\mathcal{LUMI}^{A}>M^{S}$ for all $M\in\{\mathcal{FSTA,\ FCOM,\ OBLOT}\}$. 
\end{theorem}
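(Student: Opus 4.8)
The plan is to derive this entirely from results already established, so no new problem or impossibility argument is needed. The key observation is that Theorem~\ref{th:lumiSlumiA} collapses the distinction between the asynchronous and semi-synchronous schedulers for the $\mathcal{LUMI}$ model, namely $\mathcal{LUMI}^{A}\equiv\mathcal{LUMI}^{S}$. Hence it suffices to prove the strict separation $\mathcal{LUMI}^{S}>M^{S}$ for every $M\in\{\mathcal{FSTA},\ \mathcal{FCOM},\ \mathcal{OBLOT}\}$, and then transport the conclusion back to $\mathcal{LUMI}^{A}$ through the equivalence.

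First I would handle the two cases $M\in\{\mathcal{FSTA},\ \mathcal{FCOM}\}$. For these, Theorem~\ref{th:allS<lumiS} directly states $M^{S}<\mathcal{LUMI}^{S}$, so combining with $\mathcal{LUMI}^{S}\equiv\mathcal{LUMI}^{A}$ immediately yields $M^{S}<\mathcal{LUMI}^{A}$. Concretely, the witness separating these variants is already provided inside the excerpt: the $\neg IL$ problem (solvable in $\mathcal{LUMI}^{S}$ but not in $\mathcal{FSTA}^{S}$, by Lemma~\ref{lm:fstaIL} and Lemma~\ref{lm:fcomIL}) handles the $\mathcal{FSTA}$ case, and \textsc{moveOnce} (solvable in $\mathcal{LUMI}^{S}$ via Lemma~\ref{lm:MCfsta} but not in $\mathcal{FCOM}^{S}$ by Lemma~\ref{lm:MCfcom}) handles the $\mathcal{FCOM}$ case. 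Thus no fresh separating task is required here.

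Next I would treat $M=\mathcal{OBLOT}$. Here Theorem~\ref{th:oblotS<allS} asserts $\mathcal{OBLOT}^{S}<M'^{S}$ for all $M'\in\{\mathcal{FSTA},\ \mathcal{FCOM},\ \mathcal{LUMI}\}$; taking $M'=\mathcal{LUMI}$ gives $\mathcal{OBLOT}^{S}<\mathcal{LUMI}^{S}$. Again applying Theorem~\ref{th:lumiSlumiA} converts this to $\mathcal{OBLOT}^{S}<\mathcal{LUMI}^{A}$. Assembling the three cases produces the claimed $\mathcal{LUMI}^{A}>M^{S}$ for all $M\in\{\mathcal{FSTA},\ \mathcal{FCOM},\ \mathcal{OBLOT}\}$.

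I expect there to be no genuine obstacle: the entire statement is a corollary of the semi-synchronous separations of Section~\ref{comp_S} together with the scheduler-equivalence Theorem~\ref{th:lumiSlumiA}. The only point requiring minor care is the logical bookkeeping, i.e.\ correctly chaining a strict inequality ($M^{S}<\mathcal{LUMI}^{S}$) with an equivalence ($\mathcal{LUMI}^{S}\equiv\mathcal{LUMI}^{A}$) to conclude the strict inequality $M^{S}<\mathcal{LUMI}^{A}$; since $\equiv$ means equality of the solvable-problem sets, this substitution is valid and preserves strictness. Hence the proof is a short citation-based argument rather than an independent construction.
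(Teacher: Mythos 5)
Your proof is correct and follows essentially the same route as the paper: the paper likewise obtains $\mathcal{LUMI}^{S}>M^{S}$ for all three models by combining Theorem~\ref{th:oblotS<allS} and Theorem~\ref{th:allS<lumiS}, and then transfers the strict inequality to $\mathcal{LUMI}^{A}$ via the equivalence in Theorem~\ref{th:lumiSlumiA}. Your additional remarks on the separating witnesses and on why substituting through $\equiv$ preserves strictness are sound but not needed beyond what the paper's citation-based argument already provides.
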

\begin{proof}
    From Theorem~\ref{th:oblotS<allS} and Theorem~\ref{th:allS<lumiS}, we can have $\mathcal{LUMI}^{S}>M^{S}$ for all $M\in\{\mathcal{FSTA,\ FCOM,\ OBLOT}\}$. Then from Theorem~\ref{th:lumiSlumiA}, the result follows.
\end{proof}
\begin{theorem}\label{th:allA<lumiS}
    $M^{A}<\mathcal{LUMI}^{S}$ for all  $M\in\{\mathcal{FSTA,\ FCOM}\}$.
\end{theorem}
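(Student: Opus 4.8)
The plan is to prove the two strict inequalities $\mathcal{FSTA}^A < \mathcal{LUMI}^S$ and $\mathcal{FCOM}^A < \mathcal{LUMI}^S$ separately, in each case splitting the claim into the easy non-strict containment and a strict-separation witness, and to obtain both by reusing material already established for the \textsc{Ssync} scheduler rather than analysing \textsc{Async} afresh.

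First I would dispose of the non-strict direction $\mathcal{LUMI}^S \ge M^A$ for $M \in \{\mathcal{FSTA}, \mathcal{FCOM}\}$. This is immediate from Lemma~\ref{lm:trivial}: part~(1) gives $M^S \ge M^A$, and parts~(2)--(3) give $\mathcal{LUMI}^S \ge M^S$, so chaining yields $\mathcal{LUMI}^S \ge M^S \ge M^A$. Hence every problem solvable in $M^A$ is solvable in $\mathcal{LUMI}^S$.

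For the strict part the key observation is that unsolvability propagates downward along the scheduler hierarchy: a task that cannot be solved under a synchronous scheduler certainly cannot be solved under \textsc{Async}, since $M^A \le M^S \le M^F$. I would then reuse the separating problems already built in the \textsc{Ssync} section. For $M = \mathcal{FSTA}$ I would take $\neg IL$: by Lemma~\ref{lm:fcomIL} it lies in $\mathcal{FCOM}^A$, hence in $\mathcal{LUMI}^S$ through $\mathcal{FCOM}^A \le \mathcal{FCOM}^S \le \mathcal{LUMI}^S$, while Lemma~\ref{lm:fstaIL} rules it out of $\mathcal{FSTA}^F$ and therefore out of $\mathcal{FSTA}^A$. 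For $M = \mathcal{FCOM}$ I would take \textsc{moveOnce}: Lemma~\ref{lm:MCfsta} places it in $\mathcal{FSTA}^A$, hence in $\mathcal{LUMI}^S$ through $\mathcal{FSTA}^A \le \mathcal{FSTA}^S \le \mathcal{LUMI}^S$, while Lemma~\ref{lm:MCfcom} rules it out of $\mathcal{FCOM}^S$ and therefore out of $\mathcal{FCOM}^A$.

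Combining the two directions gives $M^A < \mathcal{LUMI}^S$ for both models. In fact the whole argument can be compressed by invoking Theorem~\ref{th:allS<lumiS} directly, which already furnishes a problem in $\mathcal{LUMI}(S) \setminus M(S)$; since $\mathcal{FSTA}^A \le \mathcal{FSTA}^S$ and $\mathcal{FCOM}^A \le \mathcal{FCOM}^S$, that same problem lies outside $M(A)$, and strictness follows at once. There is no genuine obstacle in this proof; the only point needing care is to keep the quantifiers of the cited lemmas aligned, i.e.\ to use the team in $\mathcal{R}_3$ witnessing $\neg IL$ and the team in $\mathcal{R}_2$ witnessing \textsc{moveOnce} consistently across the ``solvable'' and ``unsolvable'' claims.
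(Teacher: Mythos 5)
Your proposal is correct and takes essentially the same route as the paper: the paper's entire proof is your final ``compressed'' observation, namely that Theorem~\ref{th:allS<lumiS} supplies a problem in $\mathcal{LUMI}(S)\setminus M(S)$, and $M(A)\subseteq M(S)$ (Lemma~\ref{lm:trivial}) then forces $M^{A}<\mathcal{LUMI}^{S}$. Your longer unpacking via $\neg IL$ and \textsc{moveOnce} merely re-derives the lemmas that Theorem~\ref{th:allS<lumiS} already encapsulates, so there is no substantive difference in approach.
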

\begin{proof}
    From Theorem~\ref{th:allS<lumiS}, the result follows.
\end{proof}
\begin{theorem}\label{th:lumiA>allA}
    $M^{A}<\mathcal{LUMI}^{A}$ for all  $M\in\{\mathcal{OBLOT,\ FSTA,\ FCOM}\}$.
\end{theorem}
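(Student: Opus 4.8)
The plan is to reduce the strict domination $M^{A}<\mathcal{LUMI}^{A}$ to facts already recorded in the excerpt, by splitting the claim into its two constituent parts: the inclusion $\mathcal{LUMI}^{A}\ge M^{A}$ and a single problem witnessing the strict separation. Both ingredients are available, so no new construction should be needed.

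First I would dispose of the inclusion. By Lemma~\ref{lm:trivial}(2)--(3), for every scheduler $K$ we have $\mathcal{LUMI}^{K}\ge\mathcal{FSTA}^{K}\ge\mathcal{OBLOT}^{K}$ and $\mathcal{LUMI}^{K}\ge\mathcal{FCOM}^{K}\ge\mathcal{OBLOT}^{K}$; specializing to $K=\textsc{Async}$ gives $\mathcal{LUMI}^{A}\ge M^{A}$, that is $\mathcal{LUMI}(A)\supseteq M(A)$, for each $M\in\{\mathcal{OBLOT},\ \mathcal{FSTA},\ \mathcal{FCOM}\}$.

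Next I would produce the strict separation. Theorem~\ref{th:lumiAvs.allS} already gives $\mathcal{LUMI}^{A}>M^{S}$ for all $M\in\{\mathcal{FSTA},\ \mathcal{FCOM},\ \mathcal{OBLOT}\}$, so there is a problem $P$ with $P\in\mathcal{LUMI}(A)$ but $P\notin M(S)$. The key observation is that this same $P$ already separates $\mathcal{LUMI}^{A}$ from $M^{A}$: by Lemma~\ref{lm:trivial}(1) we have $M^{S}\ge M^{A}$, i.e. $M(S)\supseteq M(A)$, so $P\notin M(S)$ forces $P\notin M(A)$. Combining with the inclusion from the previous paragraph yields $\mathcal{LUMI}(A)\supsetneq M(A)$, which is exactly $M^{A}<\mathcal{LUMI}^{A}$.

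Since every ingredient is a quotation of an earlier statement, I do not anticipate a genuine obstacle. The one point requiring care is the direction of the monotonicity used in the final step: one must invoke $M(S)\supseteq M(A)$ (the weaker \textsc{Async} scheduler solves no more problems than \textsc{Ssync}), so that a problem lying outside $M(S)$ lies automatically outside $M(A)$; using the inclusion in the reverse direction would break the argument. With that verified, the theorem follows immediately.
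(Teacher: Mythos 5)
Your proof is correct, but it takes a genuinely different route from the paper's. The paper proves the theorem directly from its base lemmas: it cites Lemma~\ref{lm:fstaIL} and Lemma~\ref{lm:fcomIL} to exhibit $\neg IL$ as a problem in $\mathcal{LUMI}(A)$ (via $\mathcal{FCOM}^A$) but outside $\mathcal{FSTA}(A)$ and hence $\mathcal{OBLOT}(A)$, and cites Lemma~\ref{lm:MCfsta} and Lemma~\ref{lm:MCfcom} to exhibit \textsc{moveOnce} as a problem in $\mathcal{LUMI}(A)$ (via $\mathcal{FSTA}^A$) but outside $\mathcal{FCOM}(A)$; the push-down across schedulers ($\mathcal{FSTA}^F$-unsolvability implies $\mathcal{FSTA}^A$-unsolvability, and similarly $\mathcal{FCOM}^S$ to $\mathcal{FCOM}^A$) is the same monotonicity you invoke, just applied to named witnesses. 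You instead treat Theorem~\ref{th:lumiAvs.allS} as a black box, extract an abstract witness $P\in\mathcal{LUMI}(A)\setminus M(S)$, and push it down to $M(A)$ via $M(S)\supseteq M(A)$; this is logically sound, not circular (Theorem~\ref{th:lumiAvs.allS} precedes the statement), and you correctly flag the direction of the inclusion as the delicate point. The trade-off: your derivation is shorter and purely compositional, but it inherits a longer dependency chain, since Theorem~\ref{th:lumiAvs.allS} is itself proved through Theorem~\ref{th:lumiSlumiA} ($\mathcal{LUMI}^S\equiv\mathcal{LUMI}^A$), an external result imported from the literature; the paper's proof avoids that equivalence entirely and stays self-contained in its own lemmas, and it additionally identifies the concrete separating problems, which is informative in its own right.
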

\begin{proof}
       In Lemma~\ref{lm:fstaIL} and Lemma~\ref{lm:fcomIL} we showed that there exists a problem that is solvable in $\mathcal{LUMI}^A$ but not solvable in $\mathcal{FSTA}^A$, thus in $\mathcal{OBLOT}^A$. Then, in Lemma~\ref{lm:MCfsta} and Lemma~\ref{lm:MCfcom} we showed that there exists a problem that is solvable in $\mathcal{LUMI}^A$ but not solvable in $\mathcal{FCOM}^A$. Hence the result follows.
\end{proof}
\begin{theorem}\label{th:oblotA<allA}
    $\mathcal{OBLOT}^{A}<M^{A}$ for all  $M\in\{\mathcal{FSTA,\ FCOM}\}$.
\end{theorem}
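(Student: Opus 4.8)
The plan is to show, for each $M\in\{\mathcal{FSTA},\ \mathcal{FCOM}\}$, both the non-strict bound $\mathcal{OBLOT}^A\leq M^A$ and a strict separation witnessed by a problem solvable in $M^A$ but not in $\mathcal{OBLOT}^A$. The non-strict part is immediate: parts~(2) and~(3) of Lemma~\ref{lm:trivial} give $\mathcal{FSTA}^A\geq\mathcal{OBLOT}^A$ and $\mathcal{FCOM}^A\geq\mathcal{OBLOT}^A$. So the real work reduces to exhibiting the separating problems, and each can be recycled from the lemmas already proven.

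For $M=\mathcal{FSTA}$ I would use \textsc{moveOnce}. Lemma~\ref{lm:MCfsta} already gives $\textsc{moveOnce}\in\mathcal{FSTA}^A(R)$ for every $R\in\mathcal{R}_2$. For the unsolvability side, Lemma~\ref{lm:MCoblot} supplies some $R\in\mathcal{R}_2$ with $\textsc{moveOnce}\not\in\mathcal{OBLOT}^F(R)$; since part~(1) of Lemma~\ref{lm:trivial} yields $\mathcal{OBLOT}^A\leq\mathcal{OBLOT}^F$, i.e.\ $\mathcal{OBLOT}^A(R)\subseteq\mathcal{OBLOT}^F(R)$, that same witness forces $\textsc{moveOnce}\not\in\mathcal{OBLOT}^A(R)$. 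Hence $\mathcal{FSTA}^A>\mathcal{OBLOT}^A$.

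For $M=\mathcal{FCOM}$ the problem \textsc{moveOnce} cannot be reused, because Lemma~\ref{lm:MCfcom} shows it is already unsolvable in $\mathcal{FCOM}^S$, hence in $\mathcal{FCOM}^A$. Instead I would use $\neg IL$. Lemma~\ref{lm:fcomIL} gives $\neg IL\in\mathcal{FCOM}^A(R)$ for every $R\in\mathcal{R}_3$. For unsolvability in $\mathcal{OBLOT}^A$, I chain parts~(1) and~(2) of Lemma~\ref{lm:trivial} to get $\mathcal{OBLOT}^A\leq\mathcal{OBLOT}^F\leq\mathcal{FSTA}^F$; then Lemma~\ref{lm:fstaIL} provides an $R\in\mathcal{R}_3$ with $\neg IL\not\in\mathcal{FSTA}^F(R)$, and the inclusion $\mathcal{OBLOT}^A(R)\subseteq\mathcal{FSTA}^F(R)$ forces $\neg IL\not\in\mathcal{OBLOT}^A(R)$. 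Hence $\mathcal{FCOM}^A>\mathcal{OBLOT}^A$.

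The only point requiring care is the direction of propagation: each separating problem is proven \emph{unsolvable} in a \emph{stronger} variant ($\mathcal{OBLOT}^F$ for \textsc{moveOnce}, $\mathcal{FSTA}^F$ for $\neg IL$), and the argument hinges on unsolvability descending along the $\leq$ relation down to the weaker variant $\mathcal{OBLOT}^A$. Beyond keeping this bookkeeping straight, there is no genuine obstacle — the theorem is a recombination of the already-established separations with the monotonicity of Lemma~\ref{lm:trivial}, requiring no new construction.
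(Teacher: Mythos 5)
Your proposal is correct and takes essentially the same approach as the paper: both establish the two strict separations by reusing \textsc{moveOnce} (solvable in $\mathcal{FSTA}^A$ by Lemma~\ref{lm:MCfsta}) and $\neg IL$ (solvable in $\mathcal{FCOM}^A$ by Lemma~\ref{lm:fcomIL}), and push the corresponding unsolvability results down to $\mathcal{OBLOT}^A$ via the monotonicity of Lemma~\ref{lm:trivial}. The only cosmetic difference is that for \textsc{moveOnce} the paper descends from Lemma~\ref{lm:MCfcom} (unsolvability in $\mathcal{FCOM}^S$) instead of Lemma~\ref{lm:MCoblot} (unsolvability in $\mathcal{OBLOT}^F$) as you do; both chains are valid, and yours is, if anything, the more direct one and more explicit about the direction in which unsolvability propagates.
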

\begin{proof}
    In Lemma~\ref{lm:fstaIL} and Lemma~\ref{lm:fcomIL} we showed that there exists a problem that is solvable in $\mathcal{FCOM}^A$ but not solvable in $\mathcal{OBLOT}^A$. Then, in Lemma~\ref{lm:MCfsta} and Lemma~\ref{lm:MCfcom} we showed that there exists a problem that is solvable in $\mathcal{FSTA}^A$ but not solvable in $\mathcal{OBLOT}^A$. Hence the result follows.
\end{proof}

\begin{theorem}\label{th:oblotFvs.allA}
     $\mathcal{OBLOT}^{F}\perp M^{A}$ for all $M\in\{\mathcal{FSTA,\ FCOM,\ LUMI}\}$.
 \end{theorem}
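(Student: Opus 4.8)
The plan is to prove the orthogonality directly from the separating problems already constructed, exhibiting for each model $M\in\{\mathcal{FSTA},\mathcal{FCOM},\mathcal{LUMI}\}$ a witness in each of the two required directions. Recall that $\mathcal{OBLOT}^{F}\perp M^{A}$ unfolds into two non-containments: there must be a problem solvable in $M^{A}$ but not in $\mathcal{OBLOT}^{F}$, and a problem solvable in $\mathcal{OBLOT}^{F}$ but not in $M^{A}$. I would treat these two directions separately and, within each, try to reuse a single witness across as many of the three models as possible, invoking Lemma~\ref{lm:trivial} to propagate (un)solvability along the model/scheduler hierarchy.

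For the direction $M^{A}\not\ge\mathcal{OBLOT}^{F}$ (a problem in $\mathcal{OBLOT}^{F}$ but not in $M^{A}$), I would use SUIR uniformly for all three models. Lemma~\ref{lm:SUIRoblot} places SUIR in $\mathcal{OBLOT}^{F}$, while Lemma~\ref{lm:SUIRlumi} shows SUIR is unsolvable in $\mathcal{LUMI}^{S}$; combining this with $\mathcal{LUMI}^{S}\equiv\mathcal{LUMI}^{A}$ from Theorem~\ref{th:lumiSlumiA} yields that SUIR is unsolvable in $\mathcal{LUMI}^{A}$. Since $\mathcal{LUMI}^{A}\ge M^{A}$ for every $M$ by Lemma~\ref{lm:trivial}, unsolvability in the strongest model $\mathcal{LUMI}^{A}$ immediately propagates to unsolvability in each $M^{A}$. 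Hence SUIR separates $\mathcal{OBLOT}^{F}$ from every $M^{A}$ in this direction at once.

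For the opposite direction $\mathcal{OBLOT}^{F}\not\ge M^{A}$ (a problem in $M^{A}$ but not in $\mathcal{OBLOT}^{F}$), the witness must be chosen per model. For $M=\mathcal{FSTA}$ I would use \textsc{moveOnce}: Lemma~\ref{lm:MCfsta} puts it in $\mathcal{FSTA}^{A}$ and Lemma~\ref{lm:MCoblot} shows it is not in $\mathcal{OBLOT}^{F}$; since $\mathcal{LUMI}^{A}\ge\mathcal{FSTA}^{A}$, the same witness also covers $M=\mathcal{LUMI}$. The genuinely delicate case is $M=\mathcal{FCOM}$, and I expect this to be the main obstacle: \textsc{moveOnce} is only known to lie in $\mathcal{FSTA}^{A}$, and because $\mathcal{FSTA}$ and $\mathcal{FCOM}$ are incomparable (Theorem~\ref{th:fstaSvs.FcomS}) it cannot be transported into $\mathcal{FCOM}^{A}$. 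I would therefore switch witnesses to $\neg IL$: Lemma~\ref{lm:fcomIL} places $\neg IL$ in $\mathcal{FCOM}^{A}$, and Lemma~\ref{lm:fstaIL} gives $\neg IL\notin\mathcal{FSTA}^{F}$; since $\mathcal{FSTA}^{F}\ge\mathcal{OBLOT}^{F}$ by Lemma~\ref{lm:trivial}, it follows that $\neg IL\notin\mathcal{OBLOT}^{F}$, which settles the $\mathcal{FCOM}$ case.

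Combining the two directions for each model then yields $\mathcal{OBLOT}^{F}\perp M^{A}$ for all $M\in\{\mathcal{FSTA},\mathcal{FCOM},\mathcal{LUMI}\}$. Apart from the careful choice of separating problem for $\mathcal{FCOM}$, every remaining step is a routine propagation of solvability and unsolvability along the containments of Lemma~\ref{lm:trivial}, so no new construction should be required beyond the six lemmas and Theorem~\ref{th:lumiSlumiA} already in hand.
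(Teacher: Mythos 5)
Your proposal is correct and follows essentially the same route as the paper's own proof: SUIR (Lemmas~\ref{lm:SUIRoblot} and \ref{lm:SUIRlumi}) witnesses the direction solvable in $\mathcal{OBLOT}^{F}$ but not in $M^{A}$, while \textsc{moveOnce} (Lemmas~\ref{lm:MCoblot}, \ref{lm:MCfsta}) handles $\mathcal{FSTA}$ and $\mathcal{LUMI}$, and $\neg IL$ (Lemmas~\ref{lm:fstaIL}, \ref{lm:fcomIL}) handles $\mathcal{FCOM}$ in the opposite direction. The only difference is cosmetic: you spell out the propagation steps via Lemma~\ref{lm:trivial} and Theorem~\ref{th:lumiSlumiA} that the paper leaves implicit, which if anything makes the argument more self-contained.
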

\begin{proof}
    In Lemma~\ref{lm:SUIRlumi} and Lemma~\ref{lm:SUIRoblot}, we showed that the problem SUIR is solvable in $\mathcal{OBLOT}^F$ but not solvable in $M^A$ for all $M\in\{\mathcal{FSTA,\ FCOM,\ LUMI}\}$. In Lemma~\ref{lm:MCoblot} and Lemma~\ref{lm:MCfsta}, we showed that there exists a problem that solves a problem in $\mathcal{FSTA}^A$ and thus in $\mathcal{LUMI}^A$ but not solvable in $\mathcal{OBLOT}^F$. In Lemma~\ref{lm:fstaIL} and Lemma~\ref{lm:fcomIL}, we showed that there exists a problem that is solvable in $\mathcal{FCOM}^A$ but not solvable in $\mathcal{OBLOT}^F$. Thus, the result follows. 
\end{proof}

 \begin{theorem}\label{th:fstaFvs.allA}
     $\mathcal{FSTA}^{F}\perp M^{A}$ for all $M\in\{\mathcal{FCOM,\ LUMI}\}$.
 \end{theorem}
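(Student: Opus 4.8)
The plan is to establish the orthogonality by exhibiting one witnessing problem for each of the two required non-inclusions. By definition $\mathcal{FSTA}^{F}\perp M^{A}$ means simultaneously $\mathcal{FSTA}(F)\not\supseteq M(A)$ and $M(A)\not\supseteq \mathcal{FSTA}(F)$, so I need a task that is solvable in $M^{A}$ but not in $\mathcal{FSTA}^{F}$, together with a task that is solvable in $\mathcal{FSTA}^{F}$ but not in $M^{A}$, and I want each argument to cover both $M=\mathcal{FCOM}$ and $M=\mathcal{LUMI}$ at once by routing through the monotonicity relations of Lemma~\ref{lm:trivial}.

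For the inclusion $\mathcal{FSTA}(F)\not\supseteq M(A)$ I would take the problem $\neg IL$. Lemma~\ref{lm:fcomIL} already gives $\neg IL\in\mathcal{FCOM}^{A}$, and since $\mathcal{LUMI}^{A}\ge\mathcal{FCOM}^{A}$ by Lemma~\ref{lm:trivial}, the same problem lies in $\mathcal{LUMI}^{A}$ as well; hence $\neg IL\in M(A)$ for both choices of $M$. On the other side, Lemma~\ref{lm:fstaIL} asserts $\neg IL\notin\mathcal{FSTA}^{F}$. A problem contained in $M(A)$ but absent from $\mathcal{FSTA}(F)$ witnesses exactly $\mathcal{FSTA}(F)\not\supseteq M(A)$.

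For the reverse inclusion $M(A)\not\supseteq \mathcal{FSTA}(F)$ I would use SUIR. By Lemma~\ref{lm:SUIRoblot} it is solvable in $\mathcal{OBLOT}^{F}$, so $\mathcal{FSTA}^{F}\ge\mathcal{OBLOT}^{F}$ (Lemma~\ref{lm:trivial}) places it in $\mathcal{FSTA}^{F}$. To rule it out of $M^{A}$, I would first invoke Lemma~\ref{lm:SUIRlumi}, which gives $SUIR\notin\mathcal{LUMI}^{S}$; the crucial step is then Theorem~\ref{th:lumiSlumiA} ($\mathcal{LUMI}^{S}\equiv\mathcal{LUMI}^{A}$), which upgrades this to $SUIR\notin\mathcal{LUMI}^{A}$, after which $\mathcal{LUMI}^{A}\ge\mathcal{FCOM}^{A}$ propagates the impossibility down to $SUIR\notin\mathcal{FCOM}^{A}$. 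Thus $SUIR\in\mathcal{FSTA}(F)\setminus M(A)$ for both $M$, which gives $M(A)\not\supseteq\mathcal{FSTA}(F)$. Combining the two non-inclusions yields the claim.

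Because all of the combinatorial content is already sealed inside the cited lemmas, there is no genuine computational obstacle here; the only subtle point to get right is the routing for SUIR. The impossibility in the literature-style lemma is stated only for the \textsc{Ssync} scheduler, so the argument would collapse without Theorem~\ref{th:lumiSlumiA} to transport it to \textsc{Async}. It is equally important to anchor the impossibility at the strongest model $\mathcal{LUMI}^{A}$ first, so that it descends to $\mathcal{FCOM}^{A}$ by monotonicity and thereby settles both instances of $M$ with a single chain of inferences.
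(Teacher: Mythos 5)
Your proof is correct and matches the paper's own argument in all essentials: the same two witnesses are used ($\neg IL$ through Lemmas~\ref{lm:fstaIL} and~\ref{lm:fcomIL}, SUIR through Lemmas~\ref{lm:SUIRlumi} and~\ref{lm:SUIRoblot}), with the monotonicity relations of Lemma~\ref{lm:trivial} closing both non-inclusions for $M\in\{\mathcal{FCOM},\ \mathcal{LUMI}\}$ simultaneously.

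The one point where you deviate---and where your commentary is mistaken, although the proof itself remains valid---is the transport of the SUIR impossibility from \textsc{Ssync} to \textsc{Async}. You call Theorem~\ref{th:lumiSlumiA} the crucial step and claim the argument would collapse without it. It would not. By Lemma~\ref{lm:trivial}(1), $\mathcal{LUMI}^S\ge\mathcal{LUMI}^A$, i.e., $\mathcal{LUMI}(A)\subseteq\mathcal{LUMI}(S)$, so a problem unsolvable under \textsc{Ssync} is automatically unsolvable under the weaker \textsc{Async} scheduler: impossibility always descends from a stronger variant to a weaker one for free, and this trivial containment is all the step requires (and is what the paper implicitly uses). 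The equivalence $\mathcal{LUMI}^{S}\equiv\mathcal{LUMI}^{A}$ is certainly sufficient, but it is heavy machinery whose genuine content lies in the opposite direction---transferring \emph{solvability} from \textsc{Ssync} down to \textsc{Async}---which is not what is needed here. Getting this dependency straight matters: your stated reason for invoking the theorem reverses the direction in which impossibility results propagate across schedulers.
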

 \begin{proof}
     In Lemma~\ref{lm:fstaIL} and Lemma~\ref{lm:fcomIL}, we showed that there is a problem that is solvable in $\mathcal{FCOM}^A$, thus in $\mathcal{LUMI}^A$, but not solvable in $\mathcal{FSTA}^F$. In Lemma~\ref{lm:SUIRlumi} and Lemma~\ref{lm:SUIRoblot}, we showed that there is a problem that is solvable in $\mathcal{FSTA}^F$ but not solvable in $\mathcal{LUMI}^A$, thus not in $\mathcal{FCOM}^A$. Thus, the result follows.
 \end{proof}

 \begin{theorem}\label{th:fcomallvs.fstaall}
     $\mathcal{FSTA}^{K_1}\perp\mathcal{FCOM}^{K_2}$ for all $K_1,K_2\in\{\textsc{Async},\textsc{Ssync}\}$.
 \end{theorem}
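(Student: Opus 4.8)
The plan is to establish orthogonality in both required directions uniformly across all four scheduler pairs $(K_1,K_2)\in\{\textsc{Async},\textsc{Ssync}\}^2$, leaning on the two witness problems \textsc{moveOnce} and $\neg IL$ together with the monotonicity of solvability recorded in Lemma~\ref{lm:trivial}. Recall that by definition $\mathcal{FSTA}^{K_1}\perp\mathcal{FCOM}^{K_2}$ requires exhibiting (i)~a problem solvable in $\mathcal{FSTA}^{K_1}$ but not in $\mathcal{FCOM}^{K_2}$, giving $\mathcal{FCOM}(K_2)\not\supseteq\mathcal{FSTA}(K_1)$, and (ii)~a problem solvable in $\mathcal{FCOM}^{K_2}$ but not in $\mathcal{FSTA}^{K_1}$, giving $\mathcal{FSTA}(K_1)\not\supseteq\mathcal{FCOM}(K_2)$.

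For direction (i), I would use \textsc{moveOnce}. Lemma~\ref{lm:MCfsta} gives $\textsc{moveOnce}\in\mathcal{FSTA}^A(R)$ for every $R\in\mathcal{R}_2$, and since $\mathcal{FSTA}^S\ge\mathcal{FSTA}^A$ by Lemma~\ref{lm:trivial}(1), it is also in $\mathcal{FSTA}^S(R)$; hence \textsc{moveOnce} is solvable in $\mathcal{FSTA}^{K_1}$ for both $K_1\in\{\textsc{Async},\textsc{Ssync}\}$. On the other side, Lemma~\ref{lm:MCfcom} supplies an $R\in\mathcal{R}_2$ with $\textsc{moveOnce}\notin\mathcal{FCOM}^S(R)$; applying $\mathcal{FCOM}^S\ge\mathcal{FCOM}^A$ (again Lemma~\ref{lm:trivial}(1)) forces $\textsc{moveOnce}\notin\mathcal{FCOM}^A(R)$ as well, so \textsc{moveOnce} is unsolvable in $\mathcal{FCOM}^{K_2}$ for both $K_2$. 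Thus \textsc{moveOnce} witnesses $\mathcal{FCOM}(K_2)\not\supseteq\mathcal{FSTA}(K_1)$ for every pair.

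For direction (ii), I would symmetrically use $\neg IL$. Lemma~\ref{lm:fcomIL} gives $\neg IL\in\mathcal{FCOM}^A(R)$ for all $R\in\mathcal{R}_3$, hence also in $\mathcal{FCOM}^S(R)$, making $\neg IL$ solvable in $\mathcal{FCOM}^{K_2}$ for both $K_2$. Meanwhile Lemma~\ref{lm:fstaIL} provides an $R\in\mathcal{R}_3$ with $\neg IL\notin\mathcal{FSTA}^F(R)$; the chain $\mathcal{FSTA}^F\ge\mathcal{FSTA}^S\ge\mathcal{FSTA}^A$ propagates this impossibility downward so that $\neg IL\notin\mathcal{FSTA}^S(R)$ and $\neg IL\notin\mathcal{FSTA}^A(R)$, i.e.\ $\neg IL$ is unsolvable in $\mathcal{FSTA}^{K_1}$ for both $K_1$. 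This yields $\mathcal{FSTA}(K_1)\not\supseteq\mathcal{FCOM}(K_2)$ for every pair, and combining with direction (i) proves all four orthogonality statements simultaneously.

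The argument is essentially a bookkeeping exercise over the four scheduler combinations, so I do not anticipate a genuine mathematical obstacle; the real content lives in the four supporting lemmas. The only point demanding care is the correct orientation of the monotonicity in Lemma~\ref{lm:trivial}: a positive (solvability) result must be pushed \emph{upward} from the weaker \textsc{Async} scheduler to \textsc{Ssync}, whereas an impossibility result must be pushed \emph{downward} from the stronger scheduler (\textsc{Fsync} for $\neg IL$, \textsc{Ssync} for \textsc{moveOnce}) to the weaker one. Keeping these two implications pointed the right way is exactly what lets a single pair of witness problems cover all four variants at once; notably $\neg IL$ serves here in place of \textsc{OSP} (used for the pure \textsc{Ssync} case in Theorem~\ref{th:fstaSvs.FcomS}), precisely because its \textsc{Fsync}-level impossibility and \textsc{Async}-level solvability survive the passage to \textsc{Async}.
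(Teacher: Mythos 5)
Your proof is correct, but in one of the two directions it takes a genuinely different route from the paper. For $\mathcal{FCOM}(K_2)\not\supseteq\mathcal{FSTA}(K_1)$ you use \textsc{moveOnce} exactly as the paper does (Lemma~\ref{lm:MCfsta} and Lemma~\ref{lm:MCfcom}, transported across schedulers via Lemma~\ref{lm:trivial}). For the reverse direction, the paper splits cases: the $(\textsc{Ssync},\textsc{Ssync})$ pair is delegated to Theorem~\ref{th:fstaSvs.FcomS}, and the remaining pairs are handled with \textsc{OSP}, invoking Lemma~\ref{lm:fcomOSP} as if it gave solvability in $\mathcal{FCOM}^A$. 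You instead use $\neg IL$ (Lemma~\ref{lm:fstaIL} and Lemma~\ref{lm:fcomIL}) uniformly for all four pairs. Your choice is on firmer ground: Lemma~\ref{lm:fcomOSP} as stated only asserts that \textsc{AlgoOSP} solves \textsc{OSP} in $\mathcal{FCOM}^S$, and Ssync-solvability does not propagate downward to Async, so the paper's appeal to ``solvable in $\mathcal{FCOM}^A$'' is a claim its own lemma does not literally support. By contrast, Lemma~\ref{lm:fcomIL} directly asserts $\neg IL\in\mathcal{FCOM}^A(R)$ for all $R$, and Lemma~\ref{lm:fstaIL} gives impossibility already at the $\mathcal{FSTA}^F$ level, which descends to both $\mathcal{FSTA}^S$ and $\mathcal{FSTA}^A$; your orientation of the two monotonicity implications is exactly right. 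Your uniform treatment also removes the case split, and it mirrors how the paper itself argues the related Theorem~\ref{th:fstaFvs.allA}. The only thing the paper's choice buys is reuse of the problem purpose-built for the Ssync--Ssync separation; it gains nothing over your argument and costs an Async-solvability claim that would need a separate verification of \textsc{AlgoOSP} under an asynchronous scheduler.
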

 \begin{proof}
If both $K_1$ and $K_2$ are \textsc{Ssync}, then it is Theorem~\ref{th:fstaSvs.FcomS}. In Lemma~\ref{lm:MCfsta} and Lemma~\ref{lm:MCfcom}, we showed that there exists a problem that is solvable in $\mathcal{FSTA}^A$ but not solvable in $\mathcal{FCOM}^S$. In  Lemma~\ref{lm:fstaOSP} and Lemma~\ref{lm:fcomOSP} we showed that there exists a problem that is solvable in $\mathcal{FCOM}^A$ but not solvable in $\mathcal{FSTA}^S$. Thus, the rest of the result follows.  
 \end{proof}


Finally, we achieve the table in the Figure~\ref{fig:compTable}. In the next section, we conclude the work.

 \section{Conclusion}\label{conclusion}
In this work, a complete computational relationships between four different models $\mathcal{OBLOT}$, $\mathcal{FSTA}$, $\mathcal{FCOM}$, and $\mathcal{LUMI}$ under two synchronous schedulers \textsc{Fsync} and \textsc{Ssync} when a set of anonymous, identical autonomous, homogeneous robots are operating on a graph embedded on euclidean plane. Also, this work refines computational landscape for \textsc{Async} scheduler. A series of works has been dedicated to refine the computational landscape when robots are operating on an euclidean plane, but the same was missing when robots are operating on discrete regions like on different graphs. This work aims to fill the gap. The table in Figure~\ref{fig:compTable} provides the results obtains in this work. Some of the comparisons in \textsc{Async} scheduler are still remain open.

\bibliographystyle{splncs04}
\bibliography{model_comp}

\begin{thebibliography}{10}
\providecommand{\url}[1]{\texttt{#1}}
\providecommand{\urlprefix}{URL }
\providecommand{\doi}[1]{https://doi.org/#1}

\bibitem{BKLT24}
Bramas, Q., Kamei, S., Lamani, A., Tixeuil, S.: Stand-up indulgent gathering on rings. In: Emek, Y. (ed.) Structural Information and Communication Complexity - 31st International Colloquium, {SIROCCO} 2024, Vietri sul Mare, Italy, May 27-29, 2024, Proceedings. Lecture Notes in Computer Science, vol. 14662, pp. 119--137. Springer (2024). \doi{10.1007/978-3-031-60603-8\_7}

\bibitem{BFKPSW21}
Buchin, K., Flocchini, P., Kostitsyna, I., Peters, T., Santoro, N., Wada, K.: Autonomous mobile robots: Refining the computational landscape. In: {IEEE} International Parallel and Distributed Processing Symposium Workshops, {IPDPS} Workshops 2021, Portland, OR, USA, June 17-21, 2021. pp. 576--585. {IEEE} (2021). \doi{10.1109/IPDPSW52791.2021.00091}

\bibitem{BFKPSW22}
Buchin, K., Flocchini, P., Kostitsyna, I., Peters, T., Santoro, N., Wada, K.: On the computational power of energy-constrained mobile robots: Algorithms and cross-model analysis. In: Parter, M. (ed.) Structural Information and Communication Complexity - 29th International Colloquium, {SIROCCO} 2022, Paderborn, Germany, June 27-29, 2022, Proceedings. Lecture Notes in Computer Science, vol. 13298, pp. 42--61. Springer (2022). \doi{10.1007/978-3-031-09993-9\_3}

\bibitem{DGSGS24}
Das, A., Ghosh, S., Sharma, A., Goswami, P., Sau, B.: The computational landscape of autonomous mobile robots: The visibility perspective. In: Devismes, S., Mandal, P.S., Saradhi, V.V., Prasad, B., Molla, A.R., Sharma, G. (eds.) Distributed Computing and Intelligent Technology - 20th International Conference, {ICDCIT} 2024, Bhubaneswar, India, January 17-20, 2024, Proceedings. Lecture Notes in Computer Science, vol. 14501, pp. 85--100. Springer (2024). \doi{10.1007/978-3-031-50583-6\_6}

\bibitem{FPSY16}
Das, S., Flocchini, P., Prencipe, G., Santoro, N., Yamashita, M.: Autonomous mobile robots with lights. Theor. Comput. Sci.  \textbf{609},  171--184 (2016). \doi{10.1016/J.TCS.2015.09.018}

\bibitem{DSFN18}
D'Emidio, M., Stefano, G.D., Frigioni, D., Navarra, A.: Characterizing the computational power of mobile robots on graphs and implications for the euclidean plane. Inf. Comput.  \textbf{263},  57--74 (2018). \doi{10.1016/J.IC.2018.09.010}

\bibitem{FMMP24}
Feletti, C., Mambretti, L., Mereghetti, C., Palano, B.: {Computational Power of Opaque Robots}. In: Casteigts, A., Kuhn, F. (eds.) 3rd Symposium on Algorithmic Foundations of Dynamic Networks (SAND 2024). Leibniz International Proceedings in Informatics (LIPIcs), vol.~292, pp. 13:1--13:19. Schloss Dagstuhl -- Leibniz-Zentrum f{\"u}r Informatik, Dagstuhl, Germany (2024). \doi{10.4230/LIPIcs.SAND.2024.13}

\bibitem{FPSY99}
Flocchini, P., Prencipe, G., Santoro, N., Widmayer, P.: Hard tasks for weak robots: The role of common knowledge in pattern formation by autonomous mobile robots. In: Proceedings of the 10th International Symposium on Algorithms and Computation. p. 93–102. ISAAC '99, Springer-Verlag, Berlin, Heidelberg (1999)

\bibitem{FSS023}
Flocchini, P., Santoro, N., Sudo, Y., Wada, K.: On asynchrony, memory, and communication: Separations and landscapes. In: Bessani, A., D{\'{e}}fago, X., Nakamura, J., Wada, K., Yamauchi, Y. (eds.) 27th International Conference on Principles of Distributed Systems, {OPODIS} 2023, December 6-8, 2023, Tokyo, Japan. LIPIcs, vol.~286, pp. 28:1--28:23. Schloss Dagstuhl - Leibniz-Zentrum f{\"{u}}r Informatik (2023). \doi{10.4230/LIPICS.OPODIS.2023.28}

\bibitem{FSVY160}
Flocchini, P., Santoro, N., Viglietta, G., Yamashita, M.: Rendezvous with constant memory. Theor. Comput. Sci.  \textbf{621},  57--72 (2016). \doi{10.1016/J.TCS.2016.01.025}

\bibitem{FSW19}
Flocchini, P., Santoro, N., Wada, K.: On memory, communication, and synchronous schedulers when moving and computing. In: Felber, P., Friedman, R., Gilbert, S., Miller, A. (eds.) 23rd International Conference on Principles of Distributed Systems, {OPODIS} 2019, December 17-19, 2019, Neuch{\^{a}}tel, Switzerland. LIPIcs, vol.~153, pp. 25:1--25:17. Schloss Dagstuhl - Leibniz-Zentrum f{\"{u}}r Informatik (2019). \doi{10.4230/LIPICS.OPODIS.2019.25}

\bibitem{flochinniOPODIS19}
Flocchini, P., Santoro, N., Wada, K.: {On Memory, Communication, and Synchronous Schedulers When Moving and Computing}. In: Felber, P., Friedman, R., Gilbert, S., Miller, A. (eds.) 23rd International Conference on Principles of Distributed Systems (OPODIS 2019). Leibniz International Proceedings in Informatics (LIPIcs), vol.~153, pp. 25:1--25:17. Schloss Dagstuhl -- Leibniz-Zentrum f{\"u}r Informatik, Dagstuhl, Germany (2020). \doi{10.4230/LIPIcs.OPODIS.2019.25}

\bibitem{KKNPS21}
Kirkpatrick, D.G., Kostitsyna, I., Navarra, A., Prencipe, G., Santoro, N.: Separating bounded and unbounded asynchrony for autonomous robots: Point convergence with limited visibility. In: Miller, A., Censor{-}Hillel, K., Korhonen, J.H. (eds.) {PODC} '21: {ACM} Symposium on Principles of Distributed Computing, Virtual Event, Italy, July 26-30, 2021. pp. 9--19. {ACM} (2021). \doi{10.1145/3465084.3467910}

\bibitem{PP24}
Pattanayak, D., Pelc, A.: Deterministic treasure hunt and rendezvous in arbitrary connected graphs. Inf. Process. Lett.  \textbf{185},  106455 (2024). \doi{10.1016/J.IPL.2023.106455}

\bibitem{SGGS24}
Sharma, A., Ghosh, S., Goswami, P., Sau, B.: Space and move-optimal arbitrary pattern formation on a rectangular grid by robot swarms. In: Proceedings of the 25th International Conference on Distributed Computing and Networking, {ICDCN} 2024, Chennai, India, January 4-7, 2024. pp. 65--73. {ACM} (2024). \doi{10.1145/3631461.3631542}

\bibitem{SuzukiY96}
Suzuki, I., Yamashita, M.: Distributed anonymous mobile robots. In: Santoro, N., Spirakis, P.G. (eds.) SIROCCO'96, The 3rd International Colloquium on Structural Information {\&} Communication Complexity, Siena, Italy, June 6-8, 1996. pp. 313--330. Carleton Scientific (1996)

\end{thebibliography}

\newpage
\section{APPENDIX}\label{sec:appndx}

\paragraph{\bf Proof of Lemma~\ref{lm:MCoblot}:}
If possible let there exist an algorithm $\mathcal{A}$ that solves the problem in $\mathcal {OBLOT}^{F}$. Since the robots are oblivious and have no communication ability, so after finishing one move of robot $r$, the configuration remains the same as the initial one. Thus, on activation, the robot $r$ will again move back to its initial position, which is a contradiction.

\paragraph{\bf Proof of Lemma~\ref{lm:MCfsta}:}
Let the robots have two states: \texttt{off} and \texttt{done}. Suppose initially both robots have state \texttt{off}. From the graph topology, the robots can identify which one to move. Let $r_1$ be the robot that is supposed to move. Then $r_2$ does not move at all. Then on activation first time (when its state is set as \texttt{off}) $r_1$ shall move changing its state to \texttt{done}. Next time, on activation it sees its state is set as \texttt{done}, so it does nothing. Even after the move of $r_1$ is done, the robot $r_2$ can identify itself as the robot that is not supposed to move. Thus, the above-described algorithm successfully solves \textsc{moveOnce}.

\paragraph{\bf Proof of Lemma~\ref{lm:MCfcom}:}
Let $R=\{r_1,r_2\}$. If possible let there exist an algorithm $\mathcal{A}$ that solves the problem in $\mathcal{FCOM}^S$. Let $r_2$ be the robot that needs to move. Let both robots be activated at every round. Let at $k^{th}$ round $r_1$ first time moves. Consider another execution where from first to $(k-1)^{th}$ round, both robots get activated, but in $k^{th}$, $r_1$ gets activated but $r_2$ is not. Since the robots have no shared notion of handedness or charity, the view of $r_1$ remains the same at the start of the $(k+1)^{th}$ round. Thus, in this execution, $r_1$ will again move to its initial position, which contradicts the correctness of the $\mathcal{A}$.

\paragraph{\bf Proof of Lemma~\ref{lm:fstaOSP}:}
If possible there is an algorithm $\mathsf{A}$ that solves the problem in a semi-synchronous scheduler. Consider an execution $E$ where all the robots are activated in each round. Then there will be rounds $k_1$, $k_2$, $k_3$, and $k_4$ such that $(k_1<k_2<k_3<k_4)$ and the following holds true.
\begin{itemize}
    \item At the end of the $k_1^{th}$ round the formed configuration is $\mathcal{B}$ from initial configuration $\mathcal{A}$ and remains till start of the $k_2^{th}$ round.
    \item At the end of the $k_2^{th}$ round the formed configuration is $\mathcal{C}$ and remains till start of the $k_3^{th}$ round.
    \item At the end of the $k_3^{th}$ round the formed configuration is $\mathcal{B}$ and remains till start of the $k_4^{th}$ round.
    \item At the end of the $k_4^{th}$ round the formed configuration is $\mathcal{A}$.
\end{itemize}

Let $r_1$ be the terminal robot in the initial configuration that is farther from the middle robot and the $r_2$ be the other terminal robot. Let consider another execution $E_1$ where from $(k_1+1)^{th}$ round to $(k_3-1)^{th}$ round the robot $r_1$ remains inactivated. Then $k_3^{th}$ round the view and state of $r_1$ for execution $E$ is the same as the view and state of $r_1$ in $k_1^{th}$ round for execution $E_1$. As according to execution $E$ of, in $E_1$ the $r_1$ will not decide to move if $r_1$ is activated in $k_3^{th}$ round. Also from the execution of $E$, $r_2$ also does not decide to move on activation in $k_3^{th}$ round. Thus, after $k_3^{th}$ round onwards the configuration shall remain the same which is a contradiction. Suppose the other way round happens. If $r_2$ moves at all without $r_1$ moving then it will create a configuration that yields a contradiction. Suppose $r_1$ moves after $m$ rounds of $k_3^{th}$ round. Then if from $k_1^{th}$ round if $r_2$ is not activated for consecutive $m$ rounds then it creates a contradictory configuration. Thus, $\mathsf{A}$ cannot solve the problem. Thus the result follows.

\paragraph{\bf Proof of Lemma~\ref{lm:fcomOSP}:}
  Following lines 1-6 of the Algorithm \textsc{AlgoOSP} the configuration transforms from $\mathcal A$ to $\mathcal B$. Following from lines 7-10 of the Algorithm \textsc{AlgoOSP} the configuration transforms from $\mathcal B$ to $\mathcal C$. Following lines 11-15 of the Algorithm \textsc{AlgoOSP} the configuration transforms from $\mathcal C$ to $\mathcal B$. Following from lines 16-18 of the Algorithm \textsc{AlgoOSP} the configuration transforms from $\mathcal B$ to $\mathcal A$. Then following lines 5-6 the configuration becomes equal to the initial configuration that is, the configuration is $\mathcal A$ with both robots color \texttt{off}. Thus, we can conclude the result.

\paragraph{\bf Proof of Lemma~\ref{lm:SUIRlumi}:}

If possible let there exist an algorithm $\mathcal A$ that solves the problem in $\mathcal{LUMI}^S$. The scheduler can activate robots in such a way that it eventually ends up with a configuration where two robots are in one hop. If in each round only one robot is activated alternatively then eventually one of the robots will move toward the other robot, otherwise, the initial configuration will remain forever. Thus eventually a configuration will be formed where two robots are one hop away from each other before gathering. After achieving such a configuration the gathering can be done only if one of the robots moves toward the other but the other does not. Suppose the colors of the robots are such that according to the algorithm say, $r_1$ is about to move. Then scheduler can choose to crash $r_1$. A similar argument can be made for the other robot. Thus, even if one robot does not move in a particular round, it has to choose to move eventually towards the other robot.

Let us end up with a configuration having robot $r_1$ with color $c_1$ and having robot $r_2$ with color $c_2$. If for such a scenario both robots are supposed to move through then the scheduler will activate both robots resulting in the same configuration at the end. Suppose for this configuration $r_1$ would move but $r_2$ cannot wait forever at that position because $r_1$ can get crashed at its position. If $r_1$ does not get activated for enough rounds then $r_2$ must move towards $r_1$ after finite rounds. Let the scheduler not activate $r_1$ for that many rounds then $r_2$ rounds. Then after that, $r_2$ after that would move towards $r_1$. In that round, if $r_1$ also chooses to move then the scheduler may choose to activate both robots which results same configuration getting formed again. If $r_1$ does not move at this round, then also $r_1$ cannot await there forever using the same argument as we used earlier. Thus in this way, the scheduler can activate the robots in such a way that the robots will not ever gather. 

\paragraph{\bf Proof of Lemma~\ref{lm:SUIRoblot}:}
We design an algorithm in which, on activation, a robot shall move towards the other robot. In the first round, if no robot crashes, then at the end of the round, the gathering will be done. Otherwise, at the end of the first round, the non-crashed robot shall move towards the crashed robot. In the next round, the non-crashed robot will again move toward the crashed robot and gather there.
%




\end{document}